\documentclass[12p]{elsarticle}

\usepackage{hyperref}
\usepackage{booktabs}
\usepackage{nicefrac}
\usepackage[dvipsnames]{xcolor}
\usepackage{amsmath}
\usepackage{amsfonts}
\usepackage{graphicx}
\usepackage{tabularx}
\usepackage[export]{adjustbox}
\usepackage{bm}
\usepackage{bbm}
\usepackage{amsmath,amsthm,amssymb}    
\usepackage{ae}

\newcommand{\note}[1]{\textcolor{black}{#1}}
\newcommand{\noteb}[1]{\textcolor{black}{#1}}
\newcommand{\context}[1]{\textcolor{black}{#1}}
\newcommand{\challenge}[1]{\textcolor{black}{#1}}
\newcommand{\hypothesis}[1]{\textcolor{black}{#1}}
\newcommand{\proposal}[1]{\textcolor{black}{#1}}

\newcommand{\descript}[1]{\textcolor{black}{#1}}

\usepackage[english]{babel}
\usepackage{ae}
\usepackage{array}
\newtheorem{theorem}{Theorem}
\newtheorem{corollary}[theorem]{Corollary}
\newtheorem{proposition}[theorem]{Proposition}
\newtheorem{lemma}[theorem]{Lemma}
\newtheorem{definition}[theorem]{Definition}
\newtheorem{example}{Example}[section]

\newdefinition{remark}[theorem]{Remark}

\newdefinition{problem}[theorem]{Problem}
\newcommand{\li}{L([0,1])}
\def\ui{{[0,1]} }
\newcommand{\1}{\mathbbm 1}

\def\su{{\mathbf{ Su}} }
\newcommand{\sm}{\su_{m}}

\title{A Generalization of the Sugeno integral to aggregate Interval-valued data: an application to Brain Computer Interface and Social Network Analysis}

\author[pamplona]{J. Fumanal-Idocin \corref{cor1}}
\ead{javier.fumanal@unavarra.es}

\author[slovakia]{Z. Tak\'{a}\v{c}}
\ead{zdenko.takac@stuba.sk}

\author[slovakia]{\v{L}. Horansk\'a}
\ead{lubomira.horanska@stuba.sk}

\author[brasil2]{T.~da Cruz Asmus}
\ead{tiago.dacruz@unavarra.es}

\author[brasil]{G. Dimuro}
\ead{gracaliz.pereira@unavarra.es}

\author[tecnalia]{C. Vidaurre}
\ead{carmen.vidaurre@tecnalia.com}

\author[pamplona]{J. Fernandez}
\ead{fcojavier.fernandez@unavarra.es}

\author[pamplona]{H.Bustince \corref{cor1}}
\ead{bustince@unavarra.es}

\cortext[cor1]{Corresponding author}

\address[pamplona]{Public University of Navarra and Institute of Smart Cities, Campus Arrosadia s/n, 31006 Pamplona, Spain}
\address[slovakia]{Institute of Information Engineering, Automation and Mathematics, Faculty of Chemical and Food Technology, Slovak University of Technology in Bratislava, Radlinského 9, Bratislava, 812 37, Slovak Republic}
\address[brasil]{Centro de Ciencias Computacionais, Universidade Federal do Rio Grande, Av. It\'alia km 08, Campus Carreiros, Rio Grande, Brazil}
\address[brasil2]{Instituto de Matemática, Estatística e Física, Universidade Federal do Rio Grande, Av Italia km 08, Campus Carreiros, Rio Grande, Brazil}
\address[tecnalia]{Parque Científico y Tecnológico de Guipúzcoa, Paseo Mikeletegi 2, 20009 San Sebastian, Spain}

\begin{document}

\begin{abstract}
\context{Intervals are a popular way to represent the uncertainty related to data, in which we express the vagueness of each observation as the width of the interval. However, when using intervals for this purpose, we need to use the appropriate set of mathematical tools to work with.} \challenge{This can be problematic due to the scarcity and complexity of interval-valued functions in comparison with the numerical ones.}
\proposal{In this work, we propose to extend a generalization of the Sugeno integral to work with interval-valued data. Then, we use this integral to aggregate interval-valued data in two different settings: first, we study the use of intervals in a brain-computer interface; secondly, we study how to construct interval-valued relationships in a social network, and how to aggregate their information.} \note{Our results show that interval-valued data can effectively model some of the uncertainty and coalitions of the data in both cases. For the case of brain-computer interface, we found that our results surpassed the results of other interval-valued functions.}
\end{abstract}

\begin{keyword} Aggregation function \sep%
	Sugeno Integral \sep%
	Generalized Sugeno Integral \sep
	Brain Computer Interface \sep
	Social Network
\end{keyword}

\maketitle
\section{Introduction}
\context{
	Aggregation functions are used to fuse several data into one single output value, representative of the original set of inputs \cite{beliakov2016practical, grabisch2009aggregation}. As this fusion process is critical in many multi-component systems, aggregation functions have been studied in a wide range of different settings, like fuzzy rule-based classification systems \cite{LUCCA201894, dimuro2020state, bustince2015historical}, image processing \cite{Rudas2013, Beliakov2012}, unsupervised learning \cite{AGOP19}, decision making \cite{yager2004generalized, grabisch1996application, Rudas2013} or brain computer-interfaces \cite{fumanal2021interval, fumanal2021motor}.}

\context{Two of the most popular aggregation functions are the Choquet and  Sugeno integrals \cite{grabisch2000application, sugeno1974theory}. These integrals fuse the data using a fuzzy measure, which models the different relations among the inputs \cite{grabisch2010fuzzy, lucca2019analyzing}. By using these functions we are able to model the coalitions between the different features to fuse in the aggregation process. The famous Ordered Weighted Aggregation operators (OWAs) \cite{yager2012ordered} are a particular case of the Choquet integral using a symmetric fuzzy measure  \cite{llamazares2015constructing}.}

\context{Both the Choquet and Sugeno integrals have been profoundly studied, and many works have been published regarding their characterization,  properties, and relation with other aggregation functions \cite{murofushi1991fuzzy, dimuro2020generalized, auephanwiriyakul2002generalized, dimuro2020the}. More recently, a series of generalizations of the Choquet integral have been proposed and applied to different fuzzy-rule based classification systems \cite{dimuro2020generalized, LUCCA201894, lucca2019analyzing}. The Sugeno integral has been generalized in a similar way, applied to image processing \cite{bardozzo2021sugeno}.}

\challenge{
	 However, all of these works have been proposed to deal with numerical data. The use of other data representations, like intervals, have been proven popular to model the uncertainty linked to experimental observations or estimations, but it requires some additional challenges related to the lack of standard order in the framework of intervals \noteb{\cite{boczek2021interval}}. An Interval-Valued Sugeno integral (IV-Sugeno) was proposed in \cite{pu2019interval}, \noteb{and some more general expressions encompassing both Choquet and Sugeno integral were given in \cite{boczek2022interval}}. However, the different generalizations of the Sugeno integral were not studied.} \hypothesis{ 
	  We believe that these generalizations can be of interest in the interval-valued setting, as its numerical counterpart has been successfully applied to Brain Computer Interface systems \cite{fumanal2021motor} and image thresholding.}

\proposal{ 
	In this work, we present an in-depth study of the the properties of the IV-Sugeno, and its possible generalizations. Then, we present two different applications using interval-valued data:}

\begin{itemize}
	\item \proposal{We present a novel way to construct intervals in a Motor Imagery (MI) Brain Computer Interface (BCI) classification framework. We explain how the intervals are constructed from the output of different classifiers, and how we aggregate them using the IV-Sugeno integral.}
	\item \proposal{We present the interval-valued version of affinity functions in social network analysis \cite{idocin2020borgia}. These functions use intervals to measure the difference in commitment in a pairwise relationship between two people. Then, we explain how we use the IV-Sugeno to characterize each actor in the network based how assymetric are its relationships.}
\end{itemize}

\descript{ 
	 The rest of the paper goes as follows. Section \ref{sec:preliminary} recalls some notions needed to understand the rest of the paper: aggregation theory, the numerical Sugeno integral and interval-valued aggregations. In Section \ref{sec:iv_sugeno}, we introduce so-called  interval-valued Sugeno-like $FG$-functional as an interval-valued generalization of the Sugeno integral. In Section \ref{sec:iv_sugeno_properties}, the mathematical properties of the proposed functionals are studied, and in Section \ref{sec:iv_sugeno_construct}, a construction method for interval-valued Sugeno-like $FG$-functionals is proposed. Section \ref{sec:bci_signal} illustrates the proposed BCI framework and its use of interval-valued Sugeno-like $FG$-functionals. Section \ref{sec:social_network} proposes the interval-valued affinity functions, and four centrality measures for social network analysis using the newly proposed ideas.  Finally, in Section \ref{sec:conclusions}, we give our conclusions and future lines for this work.}

\section{Preliminary} \label{sec:preliminary}
In this section we recall some notions needed in our subsequent developments. We also fix the notation, mostly in accordance with \cite{grabisch2009aggregation}, wherein more details on  theory of aggregation functions can be found.

\subsection{Aggregation functions}
The process of merging an information represented by several values into a single one is formalized by so-called aggregation functions. The finite space of attributes can be represented by the set $N=\{1,\dots,n\}$,  $n\in\mathbb N$ and the inputs  by $n$-tuples of reals from the unit interval $\ui$. Let us denote vectors $(x_1,\dots,x_n)\in \ui^n$ by bold symbols ${\bf x}$.

\begin{definition}\label{agg}
Let $n\in \mathbb N$. A  function $A\colon \ui^n\to\ui$ is  an $n$-ary aggregation function if it is nondecreasing in each variable and the boundary conditions 
$A(0,\dots,0)=0$ and 
$A(1,\dots,1) = 1$ are satisfied.
\end{definition}

We list some of the  well-known aggregation functions. Let ${\bf x}\in \ui^n$.
\begin{itemize}
\item The {\it arithmetic mean} $\mathrm{AM}_{(n)}$ is defined by $\mathrm{AM}_{(n)}({\bf x})=\frac 1n\sum\limits_{i=1}^n x_i.$ 
\item The {\it weighted arithmetic mean} $\mathrm{WAM}_{\bf w}$ is defined by $\mathrm{WAM}_{\bf w}({\bf x})=\frac 1n\sum\limits_{i=1}^n w_i x_i,$
where the weight vector ${\bf w}=(w_1,\dots,w_n)\in[0,1]^n$ is such that  $\sum\limits_{i=1}^n w_i=1$. 
\item The {\it minimum}  and {\it maximum} operators defined by $\mathrm{Min}({\bf x})=\min\{x_1,\dots,x_n\}$ and $\mathrm{Max}({\bf x})=\max\{x_1,\dots,x_n\}$, respectively, can be regarded as the special cases of the so-called 
{\it order statistics} $\mathrm{OS}_k\colon \ui^n\to\ui$  defined by  $\mathrm{OS}_k({\bf x})=x_{\sigma(k)}$, where $\sigma$ is a permutation on $N$ such that $x_{\sigma(1)} \leq \cdots \leq x_{\sigma(n)}$. Clearly, $\mathrm{Min}=\mathrm{OS}_1$ and $\mathrm{Max}=\mathrm{OS}_n$.
\end{itemize}

The following properties of real functions can be desirable in different contexts of aggregation.
\begin{definition}\label{prop}
Let $n\in \mathbb N$. A  function $A\colon \ui^n\to\mathbb R$ is said to be:
\begin{itemize}
\item  {\it internal}, if for each ${\bf x}\in \ui^n$ it holds $\mathrm{Min}({\bf x})\leq A({\bf x})\leq \mathrm{Max}({\bf x})$;
\item {\it idempotent}, if $A(x,\dots,x)=x$ for each $x\in \ui$;
\item {\it comonotone maxitive}, if $A({\bf x}\vee{\bf \overline x})=A({\bf x})\vee A({\bf \overline x})$ for all comonotone vectors ${\bf x}, {\bf \overline x}\in \ui^n$ (vectors ${\bf x}, {\bf \overline x}$ are comonotone, if $(x_i-x_j)(\overline x_i-\overline x_j)\geq 0$  for all $i,j \in \{1, \dots, n\}$);
\item {\it comonotone minitive}, if $A({\bf x}\wedge{\bf \overline x})=A({\bf x})\wedge A({\bf \overline x})$ for all comonotone vectors ${\bf x}, {\bf \overline x}\in \ui^n$;
\item {\it positively homogeneous}, if $A(c{\bf x})=cA({\bf x})$ for each ${\bf x}\in \ui^n$ and $c>0$ such that $c{\bf x}\in \ui^n$;
\item {\it min-homogeneous}, if $A(c\wedge{\bf x})=c\wedge A({\bf x})$ for each ${\bf x}\in \ui^n$ and $c\in\ui$.
\end{itemize}
\end{definition}

\subsection{Sugeno integral}
The Sugeno integral introduced in 1974 \cite{sugeno1974theory} is widely used in many applications due to its ability to model interactions between inputs by means of the so-called fuzzy measure.
 \begin{definition}
 A  set function $m\colon 2^N\to \ui$ is  a fuzzy measure,  if $m(C)\leq m(D)$ whenever $C\subseteq D\subseteq N$ and   
$m(\emptyset)=0$,  $m(N)=1$. 
\end{definition}

 \begin{definition}
A fuzzy measure $m$ is said to be \it {symmetric}, if $m(E)=m(F)$ whenever $|E|=|F|$ for all $E,F\subseteq N$  (here $|E|$ stands for the cardinality of the set $E$).
\end{definition}
 Restricting to the finite space and vectors of reals from the unit interval, the (discrete) Sugeno integral can be defined as follows.
 \begin{definition}\label{defsug}
Let $m:\ 2^N \to [0,1]$ be a fuzzy measure, ${\bf x}=(x_1,\dots,x_n) \in [0,1]^n$.
{The Sugeno integral}  with respect to $m$ is defined by
\begin{equation}\label{form_sug}
\sm({\bf x})= \max\limits_{i\in N} \min\{x_{\sigma(i)}, m(E_{\sigma(i)})\}
\end{equation}
	 where $\sigma$ is a permutation on $N$ such that $x_{\sigma(1)} \leq \cdots \leq x_{\sigma(n)}$,   $E_{\sigma(i)}=\{\sigma(i), \ldots , \sigma(n)\}$ for $i=1,\dots,n$.
\end{definition}

Note, that the Sugeno integral is an idempotent function, which is also internal, comonotone maxitive, comonotone minitive, min-homogeneous and it extends the fuzzy measure, i.e., $\sm(\1 _A)=m(A)$,
where ${\1}_A$ is the indicator of the set $A\subseteq N$, i.e., $${\1}_A(i)=\left\{
			\begin{array}{ll}
			1 &\mbox{ if } i\in A\\[.2em]
			0 &\mbox{ otherwise }  
			\end{array}\right.
,$$ for each $i\in N$.
%
%

The following generalization of the formula (\ref{form_sug}) was introduced in \cite{bardozzo2021sugeno}, replacing the minimum and maximum operators by some more general functions.
 \begin{definition}
Let $m\colon 2^{N}\to [0,1]$ be a symmetric fuzzy measure,
		$F\colon [0,\infty[\times [0,1]\to [0,\infty[$,
		$G\colon [0,\infty[^n\to [0,\infty[$. The Sugeno-like $FG$-functional with respect to $m$ is defined by
\begin{equation}\label{bardozzo}
S_{FG}^m({\bf x})=G\left(F(x_{\sigma(1)},m(E_{\sigma(1)})),\dots,F(x_{\sigma(n)},m(E_{\sigma(n)}))\right),
\end{equation}
 with the same meaning of $\sigma$ and     $E_{\sigma(i)}$ as in Definition \ref{defsug}.
\end{definition}
Note that the symmetry of the fuzzy measure ensures the Sugeno-like $FG$-functional to be well-defined.
\subsection{Interval-Valued Aggregation functions} \label{sec:iv_agf}
When aggregating data with some uncertainty represented by intervals,  a need of interval-valued aggregation functions appears. Moreover, for using an interval-valued fuzzy integrals, there is also a need of a total order on the class of all intervals.

 Let us denote by $\li=\{X=[\underline{X},\overline{X}]\mid 0\leq \underline{X}\leq \overline{X}\leq 1\}$ the set of all closed subintervals in $[0, 1]$. We denote	$\mathbf{0}=[0,0]$, $\mathbf{1}=[1,1]$.
 The standard partial order  $\leq_{spo}$ on $L([0,1])$ is given by:
\begin{equation*}\label{usual}
{[\underline{X},\overline{X}]\leq_{spo} [\underline{Y},\overline{Y}] \ \Leftrightarrow\ \underline{X}\leq\underline{Y}\  \textrm{ and } \ \overline{X}\leq\overline{Y}}.
\end{equation*}

\begin{definition}
An order $\preceq$ on $L([0,1])$ is called admissible, if 

(i) $\preceq$ is a total order on $L([0,1])$, and

(ii) for all $X,Y\in L([0,1])$ it holds $X\preceq Y$ whenever $X\leq_{spo} Y$.
\end{definition}
The following construction method for admissible orders on $\li$ was suggested in \cite{bustince2013}.

Let	$M_1, M_2\colon [0,1]^2\to[0,1]$ be aggregation functions, such that for all $X, Y\in L([0,1])$ it holds	
	$$ M_1(\underline{X},\overline{X})=M_1(\underline{Y},\overline{Y})\wedge
	 M_2(\underline{X},\overline{X})=M_2(\underline{Y},\overline{Y})\quad\Rightarrow\quad X=Y.$$
	We define the  total order relation $\leq_{M_1,M_2}$ as follows:
      \begin{equation*}
 {X\leq_{M_1,M_2} Y  \textrm{ iff }  \begin{cases}  M_1(\underline{X},\overline{X})<M_1(\underline{Y},\overline{Y})\text{ or} \\ 
 M_1(\underline{X},\overline{X})=M_1(\underline{Y},\overline{Y}) \text{ and }  M_2(\underline{X},\overline{X})\leq M_2(\underline{Y},\overline{Y}). \end{cases}}
      \end{equation*}
Now, let   $\alpha\in[0,1]$, $X\in L([0,1])$.	Define a function $K_{\alpha}(X)=(1-\alpha)\underline{X}+\alpha \overline{X}$. 
	
	Let  $\alpha\neq\beta\in[0,1]$. The  total order relation $\leq_{M_1,M_2}=:\leq_{\alpha,\beta}$ corresponding to $M_1=K_{\alpha}$ and $M_2=K_{\beta}$ is an admissible order.

Well-known particular cases of $\leq_{\alpha,\beta}$  are the following:
\begin{itemize}
  \item  Xu and Yager order $\leq_{XY}=\leq_{0.5,1}$: 
\begin{equation*}\label{xyor}
    [\underline{X},\overline{X}]\leq_{XY} [\underline{Y},\overline{Y}] \ \Leftrightarrow \
            \begin{cases}
                    \underline{X}+\overline{X}<\underline{Y}+\overline{Y} {\text{ or }}\\
                    \underline{X}+\overline{X}=\underline{Y}+\overline{Y} \, {\text{ and }} \, \overline{X}-\underline{X}\leq \overline{Y}-\underline{Y}.
            \end{cases}
\end{equation*}
  \item lexicographical  orders $\leq_{Lex1}=\leq_{0,1}$, $\leq_{Lex2}=\leq_{1,0}$:
\begin{equation*}\label{eqlex1}
   [\underline{X},\overline{X}]\leq_{Lex_1} [\underline{Y},\overline{Y}] \ \Leftrightarrow \
            \begin{cases}
                    \underline{X}<\underline{Y} {\text{ or }} \\
                    \underline{X}=\underline{Y} \, {\text{ and }} \, \overline{X}\leq \overline{Y}
            \end{cases}
						\end{equation*}
and
\begin{equation*}\label{eqlex2}
 [\underline{X},\overline{X}]\leq_{Lex_2} [\underline{Y},\overline{Y}] \ \Leftrightarrow \
            \begin{cases}
                   \overline{X}<\overline{Y} {\text{ or }} \\
                   \overline{X}=\overline{Y} \, {\text{ and }} \, \underline{X}\leq \underline{Y}.
            \end{cases}
\end{equation*}
\end{itemize}
It was shown in \cite{bustince2013}, that for a given $\alpha\in [0, 1[$ all admissible orders $\leq_{\alpha,\beta}$ with $\beta > \alpha$ coincide. This admissible order will
be denoted by $\leq_{\alpha+}$. Similarly, for a given $\alpha\in ]0, 1]$ all admissible orders $\leq_{\alpha,\beta}$ with $\beta < \alpha$ coincide. This admissible order will
be denoted by $\leq_{\alpha-}$.

	\begin{definition}\label{def:ivfm}
Let $\preceq$ be an admissible order on $\li$. A function $m:2^{N}\to L([0,1])$ is called an interval-valued fuzzy measure  w.r.t. $\preceq$, if $m(\emptyset)=\mathbf{0}$, $m(N)=\mathbf{1}$ and $m(A)\preceq m(B)$ for all $A\subseteq B\subseteq N$.
\end{definition}
	\begin{definition}
Let $n\geq 2$. 	Let $\preceq$ be an admissible order on $\li$.
An $n$-dimensional interval-valued (IV) aggregation function w.r.t. $\preceq$ is a mapping $M:\,\li^n\to\li$ satisfying the following properties:
\begin{itemize}
  \item $M(\mathbf{0},\ldots,\mathbf{0})=\mathbf{0}$;
  \item $M(\mathbf{1},\ldots,\mathbf{1})=\mathbf{1}$;
  \item $M$ is an increasing function in each component w.r.t. $\preceq$.
\end{itemize}
\end{definition}

\section{An Interval-Valued generalization of the Sugeno Integral} \label{sec:iv_sugeno}

In this Section we give our definition of the generalized interval-valued Sugeno integral, which is an interval-valued counterpart to (\ref{bardozzo}).
To simplify our notation we will use the lattice symbols  $\mathrm{Min}=\wedge$ and $\mathrm{Max}=\vee$.

\begin{definition}\label{def:conwds}
Let $\preceq$ be an admissible order on $\li$,	$m:2^{N}\to L([0,1])$ be an IV fuzzy measure   w.r.t. $\preceq$ and $F:L([0,1]) \times L([0,1]) \to L([0,1])$, $G:\big(L([0,1])\big)^{n} \to L([0,1])$ be functions. We say that a triplet $(m,F,G)$ satisfies Condition (WDS) if for all $X_1,\ldots,X_n\in L([0,1])$ and all possible permutations $\sigma_1,\sigma_2$ on $N$ such that $X_{\sigma_1(1)}\preceq \ldots \preceq X_{\sigma_1(n)}$ and $X_{\sigma_2(1)}\preceq \ldots \preceq X_{\sigma_2(n)}$ it holds:
	\begin{multline}
	G\Big(F\big(X_{\sigma_1(1)},m(E_{\sigma_1(1)})\big), \ldots, F\big(X_{\sigma_1(n)},m(E_{\sigma_1(n)})\big)\Big) =\\ G\Big(F\big(X_{\sigma_2(1)},m(E_{\sigma_2(1)})\big), \ldots, F\big(X_{\sigma_2(n)},m(E_{\sigma_2(n)})\big)\Big),
	\end{multline}
	where $E_{\sigma_j(i)}=\{\sigma_j(i),\ldots,\sigma_j(n)\}$ for $j\in\{1,2\}$.
\end{definition}

\begin{definition}\label{def:ivsug}
	Let $n$ be a positive integer, $\preceq$ be an admissible order on $L([0,1])$ and let a triplet $(m,F,G)$ satisfies Condition (WDS). An interval-valued Sugeno-like $FG$-functional with respect to $m$ is a function $\mathbf{S}_{m}^{F,G}\colon \big(L([0,1])\big)^{n} \to L([0,1])$ given by
	\begin{equation}\label{eq:defivsug}
	\mathbf{S}_{m}^{F,G}(X_1,\ldots,X_n) = G\Big(F\big(X_{\sigma(1)},m(E_{\sigma(1)})\big), \ldots, F\big(X_{\sigma(n)},m(E_{\sigma(n)})\big)\Big)
	\end{equation}
	for all $X_1,\ldots,X_n\in L([0,1])$, where $\sigma$ is a permutation on $N$ such that $X_{\sigma(1)}\preceq \ldots \preceq X_{\sigma(n)}$ and $E_{\sigma(i)}=\{\sigma(i),\ldots,\sigma(n)\}$.
\end{definition}

The following result is immediate.

\begin{lemma}\label{lem:welldm}
	Let $g:L([0,1])\to L([0,1])$ be a function. If a triplet $(m,F,G)$ satisfies Condition (WDS), then the triplet $(m,F,g\circ G)$ satisfies Condition (WDS).
\end{lemma}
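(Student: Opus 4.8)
The plan is to unwind Condition (WDS) directly for the triplet $(m,F,g\circ G)$ and reduce it to the same condition for $(m,F,G)$ by exploiting that $g$ is a (single-valued) function. First I would fix arbitrary inputs $X_1,\dots,X_n\in L([0,1])$ together with two permutations $\sigma_1,\sigma_2$ on $N$ that both sort the inputs nondecreasingly, i.e. $X_{\sigma_1(1)}\preceq\dots\preceq X_{\sigma_1(n)}$ and $X_{\sigma_2(1)}\preceq\dots\preceq X_{\sigma_2(n)}$, with $E_{\sigma_j(i)}=\{\sigma_j(i),\dots,\sigma_j(n)\}$ as in Definition \ref{def:conwds}. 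For $j\in\{1,2\}$ I abbreviate the two $G$-images by
\[
Y_j=G\Big(F\big(X_{\sigma_j(1)},m(E_{\sigma_j(1)})\big),\dots,F\big(X_{\sigma_j(n)},m(E_{\sigma_j(n)})\big)\Big).
\]

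The next step is the single substantive observation. Since $(m,F,G)$ satisfies Condition (WDS), the two sorted orderings yield the same $G$-value, that is $Y_1=Y_2$. The expression that enters Condition (WDS) for the triplet $(m,F,g\circ G)$ is obtained by replacing $G$ with $g\circ G$, and by definition of composition this is exactly $(g\circ G)(\dots)=g\big(G(\dots)\big)=g(Y_j)$ for $j=1,2$; composing with $g$ merely postcomposes $g$ onto the output of $G$. Because $g$ is a function, equal arguments are sent to equal values, so $Y_1=Y_2$ forces $g(Y_1)=g(Y_2)$. Hence the two $(g\circ G)$-expressions coincide for every pair of sorting permutations, which is precisely Condition (WDS) for $(m,F,g\circ G)$.

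There is essentially no obstacle to overcome: the whole argument rests only on the fact that a well-defined function maps equal inputs to equal outputs, so the equality guaranteed by (WDS) for $(m,F,G)$ is transported verbatim through $g$. The inputs $X_1,\dots,X_n$ and the permutations $\sigma_1,\sigma_2$ are kept arbitrary throughout, so no generality is lost and the conclusion holds in full. This is exactly why the authors can flag the statement as immediate.
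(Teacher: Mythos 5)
Your proof is correct and is precisely the ``immediate'' argument the paper has in mind: the paper offers no written proof, declaring the result immediate, and your observation that a function sends the equal outputs $Y_1=Y_2$ guaranteed by (WDS) for $(m,F,G)$ to equal values $g(Y_1)=g(Y_2)$ is exactly that argument spelled out. Nothing is missing.
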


Conditions under which a triplet $(m,F,G)$ satisfies Condition (WDS), i.e. under which the function $\mathbf{S}_{m}^{F,G}$ is well defined, are given in the following proposition. 

\begin{proposition}\label{prop:welldm}
	Let $F:L([0,1]) \times L([0,1]) \to L([0,1])$, $G:\big(L([0,1])\big)^{n} \to L([0,1])$ be functions and $m:2^{N}\to L([0,1])$ be an IV fuzzy measure with respect to an admissible order $\preceq$. Then the following hold:
	\begin{enumerate} 
		\item[(i)] A triplet $(m,F,G)$ satisfies Condition (WDS) for any symmetric $m$.
		\item[(ii)] Let $G=f\circ Proj_1$ for some function $f:L([0,1])\to L([0,1])$. Then a triplet $(m,F,G)$ satisfies Condition (WDS) for any $m$.
		\item[(iii)] Let $F$ be non-decreasing in the second variable and $G=f\circ\vee$ for some function $f:L([0,1])\to L([0,1])$. Then a triplet $(m,F,G)$ satisfies Condition (WDS) for any $m$.
	\end{enumerate}
\end{proposition}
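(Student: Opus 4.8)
The plan is to exploit the fact that Condition (WDS) is entirely about \emph{ties}: if two permutations $\sigma_1,\sigma_2$ both sort the inputs, the only freedom lies in how equal inputs are ordered among themselves. First I would record the observation underlying all three parts. Since $\preceq$ is a total order, the non-decreasing rearrangement of the multiset $\{X_1,\dots,X_n\}$ is unique as a sequence; hence any two sorting permutations satisfy $X_{\sigma_1(i)}=X_{\sigma_2(i)}$ for every $i\in N$. Consequently the first arguments of $F$ already agree position by position, and the only possible source of disagreement between the two sides of (WDS) is the measure term $m(E_{\sigma_1(i)})$ versus $m(E_{\sigma_2(i)})$.

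For part (i), I would note that $E_{\sigma(i)}=\{\sigma(i),\dots,\sigma(n)\}$ always has cardinality $n-i+1$, independently of $\sigma$. If $m$ is symmetric, then $m(E_{\sigma_1(i)})=m(E_{\sigma_2(i)})$ for each $i$, so $F\big(X_{\sigma_1(i)},m(E_{\sigma_1(i)})\big)=F\big(X_{\sigma_2(i)},m(E_{\sigma_2(i)})\big)$ and applying $G$ yields equality. For part (ii), the key remark is that $E_{\sigma(1)}=\{\sigma(1),\dots,\sigma(n)\}=N$ for every permutation $\sigma$, so $m(E_{\sigma(1)})=m(N)=\mathbf{1}$ regardless of $\sigma$. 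Because $G=f\circ Proj_1$ depends only on its first coordinate, we get $\mathbf{S}_m^{F,G}(X_1,\dots,X_n)=f\big(F(X_{\sigma(1)},\mathbf{1})\big)$, and since $X_{\sigma_1(1)}=X_{\sigma_2(1)}$ this is independent of the chosen sorting permutation.

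Part (iii) is where the real work lies. Here I would group the positions $1,\dots,n$ into maximal blocks of equal value: writing the distinct values as $v_1\prec\cdots\prec v_k$ with multiplicities $c_1,\dots,c_k$, the positions $a_t,\dots,b_t$ (with $a_t=c_1+\cdots+c_{t-1}+1$ and $b_t=c_1+\cdots+c_t$) all carry the value $v_t$. Two observations drive the argument. First, within a block the sets $E_{\sigma(i)}$ are nested, $E_{\sigma(b_t)}\subseteq\cdots\subseteq E_{\sigma(a_t)}$, so by monotonicity of the IV fuzzy measure $m(E_{\sigma(i)})\preceq m(E_{\sigma(a_t)})$ for all $i$ in the block; since $F$ is non-decreasing in its second variable and $X_{\sigma(i)}=v_t$ throughout the block, the $\preceq$-largest of the terms $F\big(v_t,m(E_{\sigma(i)})\big)$ over the block is attained at $i=a_t$. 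Second, the set at the first position of each block, $E_{\sigma(a_t)}=\{j:X_j\succeq v_t\}$, is the union of all blocks from $v_t$ onward and is therefore \emph{independent of $\sigma$}. Hence the $\preceq$-maximum over the whole tuple equals $\bigvee_{t=1}^{k} F\big(v_t,m(E_{\sigma(a_t)})\big)$, a join of $\sigma$-invariant terms, and applying $f$ gives a value independent of the sorting permutation.

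The main obstacle is the block analysis in (iii): one must argue simultaneously that $F$ being non-decreasing in the second variable lets each block-internal maximum collapse to the first position, and that this first-position set is exactly the $\sigma$-invariant tail $\{j:X_j\succeq v_t\}$. Everything else reduces to the uniqueness of the sorted sequence together with either the symmetry of $m$ (part (i)) or the triviality $E_{\sigma(1)}=N$ (part (ii)).
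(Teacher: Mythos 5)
Your proof is correct and takes essentially the same route as the paper's: ties are the only possible source of ambiguity, symmetry of $m$ (via the fixed cardinality $|E_{\sigma(i)}|=n-i+1$) settles (i), the invariance $E_{\sigma(1)}=N$ settles (ii), and for (iii) the monotonicity of $m$ on the nested sets within a tied block together with the monotonicity of $F$ in its second variable collapses each block maximum to the block's first position, whose set $\{j : X_j \succeq v_t\}$ is permutation-invariant. Your full block decomposition simply spells out what the paper does for a single tie block before invoking its composition lemma (Lemma~\ref{lem:welldm}) to pass from $\vee$ to $f\circ\vee$.
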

\begin{proof}
	A triplet $(m,F,G)$ satisfies Condition (WDS) if the value of the functional defined by Formula~(\ref{eq:defivsug}) is for each vector $(X_1,\ldots,X_n)\in  L([0,1])$ independent of the choice of permutation ordering this vector in accordance with given admissible order $\preceq$ on $L([0,1])$. 
	We have to discuss the case when some ties occur in vector $(X_1,\ldots,X_n)$. 
	\begin{enumerate}
		\item[(i)] Let $X_1,\ldots,X_n\in L([0,1])$ and $\sigma_1,\sigma_2$ be permutations on $N$ such that there is a tie:
		\begin{equation}
			X_{\sigma_1(k)} = \ldots = X_{\sigma_1(p)} = X_{\sigma_2(k)} = \ldots = X_{\sigma_2(p)}
		\end{equation}
		for some $1\leq k<p\leq n$, where $X_{\sigma_1(k-1)}\prec  X_{\sigma_1(k)}$ if $k>1$ and $X_{\sigma_1(p+1)}\succ  X_{\sigma_1(p)}$ if $p<n$. Then the symmetry of measure $m$ implies
		\begin{multline}
			F\Big(X_{\sigma_1(k+1)},m\big(\{\sigma_1(k+1),\ldots,\sigma_1(n)\}\big)\Big) =
			F\Big(X_{\sigma_2(k+1)},m\big(\{\sigma_2(k+1),\ldots,\sigma_2(n)\}\big)\Big)  \\
			 \vdots  \\
			F\Big(X_{\sigma_1(p)},m\big(\{\sigma_1(p),\ldots,\sigma_1(n)\}\big)\Big) = F\Big(X_{\sigma_2(p)},m\big(\{\sigma_2(p),\ldots,\sigma_2(n)\}\big)\Big)
		\end{multline}
		hence Condition (WDS) is satisfied.
		\item[(ii)] Directly follows from Equation~\eqref{eq:defivsug}.
		\item[(iii)]  Considering the same tie as in (i), by the monotonicity of $F$ and $m$, we have:
		\begin{multline}
		\vee\Bigg( F\Big(X_{\sigma_1(k)},m\big(\{\sigma_1(k),\ldots,\sigma_1(n)\}\big)\Big), \ldots, F\Big(X_{\sigma_1(p)},m\big(\{\sigma_1(p),\ldots,\sigma_1(n)\}\big)\Big)  \Bigg) = 
		\\
		F\Big(X_{\sigma_1(k)},m\big(\{\sigma_1(k),\ldots,\sigma_1(n)\}\big)\Big) = F\Big(X_{\sigma_2(k)},m\big(\{\sigma_2(k),\ldots,\sigma_2(n)\}\big)\Big) =
		\\
		\vee\Bigg( F\Big(X_{\sigma_2(k)},m\big(\{\sigma_2(k),\ldots,\sigma_2(n)\}\big)\Big), \ldots, F\Big(X_{\sigma_2(p)},m\big(\{\sigma_2(p),\ldots,\sigma_2(n)\}\big)\Big)  \Bigg).
		\end{multline}
		Hence, the triplet $(m,F,\vee)$ satisfies Condition (WDS), and consequently, by Lemma~\ref{lem:welldm}, also triplet the $(m,F,G)$ satisfies the condition.
	\end{enumerate}
\end{proof}

\begin{remark}
	Observe that the three instances of well defined triplets $(m, F, G)$ in Proposition \ref{prop:welldm} are not the only ones, but they involve all non-trivial cases. For instance, we do not mention the case of $F(X,Y)=f(X)$ for some function $f\colon L([0,1])\to L([0,1])$ which together with an arbitrary function $G\colon L([0,1])^n\to L([0,1])$ yields a well-defined function $S^{F,G}_m$ for all fuzzy measures $m$. In fact, in that case $S^{F,G}_m$ does not depend on a fuzzy measure $m$.
\end{remark}

Since in the definition of standard Sugeno integral the functions $G=\vee$ and $F=\wedge$ are used, we also consider the three special cases of IV Sugeno-like $FG$-functionals, in particular $\mathbf{S}_{m}^{\wedge,G}$, $\mathbf{S}_{m}^{F,\vee}$ and $\mathbf{S}_{m}^{\wedge,\vee}$.

\begin{corollary}\label{cor:welldm}
	Let $F\colon L([0,1]) \times L([0,1]) \to L([0,1])$, $G\colon \big(L([0,1])\big)^{n} \to L([0,1])$ be functions and $m\colon 2^{N}\to L([0,1])$ be an IV fuzzy measure with respect to an admissible order $\preceq$. Then the following hold:
	\begin{enumerate} 
		\item[(i)] Let $F$ be non-decreasing in the second variable. Then a triplet $(m,F,\vee)$ satisfies Condition (WDS) for any $m$.
		\item[(ii)] Let $G=f\circ Proj_1$ or $G=f\circ\vee$ for some function $f:L([0,1])\to L([0,1])$. Then a triplet $(m,\wedge,G)$ satisfies Condition (WDS) for any $m$. 
		\item[(iii)] A triplet $(m,\wedge,\vee)$ satisfies Condition (WDS) for any $m$.
	\end{enumerate}
\end{corollary}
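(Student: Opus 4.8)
The plan is to obtain all three statements as immediate specializations of Proposition~\ref{prop:welldm}, since each one only fixes $F$ or $G$ to a concrete operator. Two elementary observations do all the work: first, the maximum can be written as $\vee=\mathrm{id}\circ\vee$, so it is of the form $f\circ\vee$ with $f$ the identity map on $\li$; second, the minimum $\wedge$ is non-decreasing in each of its arguments with respect to $\preceq$, in particular in the second one. Neither observation requires any real computation, so no step here is a genuine obstacle; the only point to verify is that the concrete operators $\vee$ and $\wedge$ really do fit the structural hypotheses demanded by the proposition.

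For part~(i), I would apply Proposition~\ref{prop:welldm}(iii) with $f=\mathrm{id}$, so that $G=f\circ\vee=\vee$. Since $F$ is assumed non-decreasing in the second variable, both hypotheses of that part hold verbatim, and its conclusion is precisely that $(m,F,\vee)$ satisfies Condition~(WDS) for any $m$.

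For part~(ii), I would treat the two stated forms of $G$ separately. If $G=f\circ Proj_1$, then Proposition~\ref{prop:welldm}(ii)---which holds for an arbitrary $F$---applies with the particular choice $F=\wedge$, giving Condition~(WDS) for any $m$. If instead $G=f\circ\vee$, then by the second observation $\wedge$ is non-decreasing in the second variable, so Proposition~\ref{prop:welldm}(iii) applies with $F=\wedge$ and the given $f$, again yielding Condition~(WDS) for any $m$.

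Finally, part~(iii) is the common special case of the previous two and follows at once: taking $F=\wedge$ in part~(i), which is legitimate by the second observation, shows that $(m,\wedge,\vee)$ satisfies Condition~(WDS) for any $m$. Equivalently, one may take $G=\vee=\mathrm{id}\circ\vee$ in part~(ii).
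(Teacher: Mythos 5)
Your proof is correct and follows essentially the same route as the paper: the paper likewise derives (i) from Proposition~\ref{prop:welldm}(iii), (ii) from Proposition~\ref{prop:welldm}(ii)--(iii), and (iii) from item (i). You merely make explicit the two observations the paper leaves implicit, namely that $\vee=\mathrm{id}\circ\vee$ and that $\wedge$ is non-decreasing in its second variable with respect to any total order $\preceq$.
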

\begin{proof}
	\begin{enumerate}
		\item[(i)] Directly follows from Proposition~\ref{prop:welldm} (iii).
		\item[(ii)] Directly follows from Proposition~\ref{prop:welldm} (ii)-(iii).
		\item[(iii)] Follows from item (i).
	\end{enumerate}
\end{proof}

Taking $F=\vee$, $G=\wedge$ and degenerate intervals (i.e. intervals $X=[\underline{X},\overline{X}]$ with $\underline{X}=\overline{X}$) as inputs the IV Sugeno-like $FG$-functional recovers standard Sugeno integral on $[0,1]$.

\begin{proposition}
	Let $\mu:2^{N}\to[0,1]$ be a fuzzy measure and $m:2^{N}\to L([0,1])$ be the IV fuzzy measure given by $m(A)=[\mu(A),\mu(A)]$ for all $A\subseteq N$. Let $\mathbf{S}_{m}^{\wedge,\vee}\colon \big(L([0,1])\big)^{n} \to L([0,1])$ be an IV Sugeno-like $FG$-functional with respect to $m$ for $F=\vee$ and $G=\wedge$. Then the function $f:[0,1]^{n}\to[0,1]$ given by
		\begin{equation}
	f(x_1,\ldots,x_n) = y \hspace{1.1cm} \textrm{where} \hspace{1.1cm} \mathbf{S}_{m}^{\wedge,\vee}([x_1,x_1],\ldots,[x_n,x_n]) = [y,y],
		\end{equation}
	is the Sugeno integral on $[0,1]$ with respect to $\mu$.
\end{proposition}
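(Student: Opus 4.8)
The plan is to show that, on degenerate inputs, every ingredient of the functional~(\ref{eq:defivsug}) collapses to its real-valued counterpart, so that the whole expression reproduces the right-hand side of~(\ref{form_sug}) with all intervals degenerate. The decisive preliminary observation is that, restricted to degenerate intervals, the admissible order $\preceq$ coincides with the usual order on $[0,1]$: for $a,b\in[0,1]$ one has $[a,a]\leq_{spo}[b,b]$ iff $a\leq b$, so the degenerate intervals are linearly ordered by $\leq_{spo}$; since $\preceq$ extends $\leq_{spo}$ and is antisymmetric, it must coincide with $\leq_{spo}$ on this set, i.e.\ $[a,a]\preceq[b,b]\Leftrightarrow a\leq b$. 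I would also record at the outset that the functional is unambiguously defined here, because by Corollary~\ref{cor:welldm}(iii) the triplet $(m,\wedge,\vee)$ satisfies Condition~(WDS) for an arbitrary IV fuzzy measure $m$; this frees me from worrying about the choice of sorting permutation when ties occur among the $x_i$.

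With the order collapse in hand, I would carry out the computation as a direct substitution. A permutation $\sigma$ sorting the degenerate intervals $[x_1,x_1],\dots,[x_n,x_n]$ non-decreasingly with respect to $\preceq$ is exactly a permutation sorting the reals $x_1,\dots,x_n$ non-decreasingly, i.e.\ the very $\sigma$ of Definition~\ref{defsug}. As the measure is degenerate, $m(E_{\sigma(i)})=[\mu(E_{\sigma(i)}),\mu(E_{\sigma(i)})]$, so each inner argument of~(\ref{eq:defivsug}) is $\wedge\big([x_{\sigma(i)},x_{\sigma(i)}],[\mu(E_{\sigma(i)}),\mu(E_{\sigma(i)})]\big)$, which equals the degenerate interval $\big[\min\{x_{\sigma(i)},\mu(E_{\sigma(i)})\},\min\{x_{\sigma(i)},\mu(E_{\sigma(i)})\}\big]$ carrying the real minimum. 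Applying $\vee$ to these $n$ degenerate intervals likewise returns the degenerate interval carrying the real maximum, so $\mathbf{S}_m^{\wedge,\vee}([x_1,x_1],\dots,[x_n,x_n])=[y,y]$ with $y=\max_{i\in N}\min\{x_{\sigma(i)},\mu(E_{\sigma(i)})\}$; by~(\ref{form_sug}) this is precisely $\su_{\mu}(x_1,\dots,x_n)$, which is the claimed identity $f(x_1,\dots,x_n)=\su_{\mu}(x_1,\dots,x_n)$.

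The step I expect to be the only point requiring genuine care is justifying that $\wedge$ and $\vee$ restrict to the real $\min$ and $\max$ on degenerate intervals. This is where one must fix the reading of the symbols $\wedge,\vee$, and it is reassuring that the two natural readings agree here: for $a,b\in[0,1]$ the $\leq_{spo}$-meet of $[a,a]$ and $[b,b]$ equals $[\min(a,b),\min(a,b)]$, which, by the order collapse, is also their $\preceq$-minimum, and dually for $\vee$. Once this is pinned down the remaining manipulations are routine, and the fact that the output is always a degenerate interval (so that the defining clause ``$\mathbf{S}_m^{\wedge,\vee}(\dots)=[y,y]$'' of $f$ is meaningful) follows automatically from the same observation. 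Throughout I use $F=\wedge$ inside and $G=\vee$ outside, the reading consistent with the notation $\mathbf{S}_m^{\wedge,\vee}$ and with the standard Sugeno integral~(\ref{form_sug}) having the minimum inside and the maximum outside.
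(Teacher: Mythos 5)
Your proposal is correct and follows essentially the same route as the paper: the paper's entire proof consists of the observation that for any admissible order $\preceq$ and all $x,y\in[0,1]$ one has $[x,x]\preceq[y,y]$ if and only if $x\leq y$, which is exactly your ``order collapse'' step, with the remaining substitution argument left implicit. Your write-up simply makes explicit what the paper leaves routine (the justification of the collapse via antisymmetry, well-definedness via Condition (WDS), and the reduction of $\wedge,\vee$ to $\min,\max$ on degenerate intervals), and you correctly resolve the paper's typo ``$F=\vee$, $G=\wedge$'' in favor of the reading $F=\wedge$, $G=\vee$ consistent with the notation $\mathbf{S}_{m}^{\wedge,\vee}$.
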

\begin{proof}
	The proof follows from the observation that for any admissible order $\preceq$ and all $x,y\in[0,1]$ it holds $[x,x]\preceq[y,y]$ if and only if $x\leq y$.
\end{proof}

\section{Properties of the IV Sugeno-like $FG$-functional} \label{sec:iv_sugeno_properties}

In this section we study the mathematical properties of the IV Sugeno-like $FG$-functional: idempotency, internality, positive and min-homogeneity, comonotone maxitivity and comonotone minitivity, boundary conditions, monotonicity and property of giving back the fuzzy measure. 
\subsection{Idempotency}

\begin{proposition}\label{prop:idempm}
	An interval-valued Sugeno-like $FG$-functional $\mathbf{S}_{m}^{F,G}$ is idempotent for any $m$ whenever:
	\begin{enumerate}
		\item[(i)] $F(X,\mathbf{1})=X$ for all $X\in L([0,1])$ and $G=Proj_1$; or
		\item[(ii)] $F(X,\mathbf{1})=X$ for all $X\in L([0,1])$, $F$ is non-decreasing in the second variable and $G=\vee$; or
		\item[(iii)] $G$ is idempotent and $F(X,Y)=X$ for all $X,Y\in L([0,1])$.
	\end{enumerate}
\end{proposition}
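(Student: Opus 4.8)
The plan is to evaluate $\mathbf{S}_m^{F,G}$ on the diagonal input $(X,\dots,X)$ and check that it returns $X$ in each of the three cases. First I would observe that when $X_1=\dots=X_n=X$ all entries are tied, so every permutation $\sigma$ of $N$ is a valid ordering; since the triplet $(m,F,G)$ satisfies Condition (WDS) in each case (by Proposition~\ref{prop:welldm}(ii) for (i), by Proposition~\ref{prop:welldm}(iii) for (ii), and because $F$ is constant in its second argument for (iii)), the value of~\eqref{eq:defivsug} does not depend on the choice of $\sigma$, and I may fix $\sigma$ to be the identity. Then
\[
\mathbf{S}_m^{F,G}(X,\dots,X) = G\big(F(X, m(E_{\sigma(1)})), \dots, F(X, m(E_{\sigma(n)}))\big),
\]
where the sets are nested, $E_{\sigma(1)}\supseteq\dots\supseteq E_{\sigma(n)}$; in particular $E_{\sigma(1)}=N$, so $m(E_{\sigma(1)})=m(N)=\mathbf{1}$, and monotonicity of $m$ w.r.t.\ $\preceq$ gives $m(E_{\sigma(1)})\succeq\dots\succeq m(E_{\sigma(n)})$.

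With this reduction each case becomes a short computation. For (i), $G=Proj_1$ returns the first coordinate $F(X,m(N))=F(X,\mathbf{1})=X$. For (iii), the hypothesis $F(X,Y)=X$ makes every coordinate equal to $X$, and idempotency of $G$ yields $G(X,\dots,X)=X$. The only case needing an extra monotonicity argument is (ii): there all coordinates $F(X,m(E_{\sigma(i)}))$ lie $\preceq$-below the first one, since $m(E_{\sigma(i)})\preceq\mathbf{1}=m(E_{\sigma(1)})$ and $F$ is non-decreasing in its second variable, whence $F(X,m(E_{\sigma(i)}))\preceq F(X,\mathbf{1})=X$; as $\vee$ is the maximum w.r.t.\ $\preceq$ and the value $X$ is itself attained as the first coordinate, $\vee(\dots)=X$.

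The step I expect to carry the most weight is the opening observation that $E_{\sigma(1)}=N$ regardless of $\sigma$, so that the $\preceq$-largest coordinate fed into $G$ is always $F(X,\mathbf{1})$; everything else follows from the boundary condition $m(N)=\mathbf{1}$, monotonicity, and the hypothesis $F(X,\mathbf{1})=X$. The mild subtlety to handle carefully is the tie-handling: since idempotency feeds a fully tied vector, I must lean on Condition (WDS) to justify that the expression is well defined and permutation-independent before computing with a convenient $\sigma$. This is precisely why the well-definedness hypotheses (rather than idempotency-specific ones) reappear in each item of the statement.
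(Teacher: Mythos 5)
Your proof is correct and follows essentially the same route as the paper's: establish well-definedness via Proposition~\ref{prop:welldm} (with the remark after it covering case (iii), where $F$ ignores its second argument), then verify $\mathbf{S}_{m}^{F,G}(X,\dots,X)=X$ directly; the paper simply leaves this verification as ``easy to check,'' and your computation—using $E_{\sigma(1)}=N$, $m(N)=\mathbf{1}$, and monotonicity of $F$ in the second variable for the $G=\vee$ case—is exactly the intended argument.
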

\begin{proof}
	By Proposition~\ref{prop:welldm} the functional $\mathbf{S}_{m}^{F,G}\colon \big(L([0,1])\big)^{n} \to L([0,1])$ is well-defined in all cases (i)-(iii). The fact that $\mathbf{S}_{m}^{F,G}(X,\ldots,X)=X$ for all $X\in L([0,1])$ is easy to check.
\end{proof}

\begin{remark}
	It is worth to put out that relaxing the assumption:
	\begin{center}
		$F$ is non-decreasing in the second variable
	\end{center}
	in Proposition~\ref{prop:idempm} (ii) by the assumption:
	\begin{equation}
			F(X,Y)\preceq F(X,\mathbf{1}) \,\, \textrm{ for all } X,Y\in L([0,1])
	\end{equation}

	we obtain a sufficient condition under which $\mathbf{S}_{m}^{F,\vee}$ is idempotent for any symmetric $m$.
\end{remark}

\begin{corollary}
	\begin{enumerate}
		\item[(i)] An interval-valued Sugeno-like $FG$-functional $\mathbf{S}_{m}^{F,\vee}$ is idempotent for any $m$ whenever $F(X,\mathbf{1})=X$ for all $X\in L([0,1])$ and $F$ is non-decreasing in the second variable.
		\item[(ii)] An interval-valued Sugeno-like $FG$-functional $\mathbf{S}_{m}^{\wedge,G}$ is idempotent for any $m$ whenever $G=\vee$ or $G=Proj_1$.
		\item[(iii)] An interval-valued Sugeno-like $FG$-functional $\mathbf{S}_{m}^{\wedge,\vee}$ is idempotent for any $m$.
	\end{enumerate}
\end{corollary}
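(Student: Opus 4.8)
The plan is to obtain every item as a direct specialization of Proposition~\ref{prop:idempm}, so that the only substantive work is checking that the functions plugged in satisfy that proposition's hypotheses.

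Item (i) needs no extra argument: it is literally Proposition~\ref{prop:idempm}~(ii) with $G=\vee$, since the two assumptions $F(X,\mathbf{1})=X$ and ``$F$ non-decreasing in the second variable'' are exactly the hypotheses listed there.

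For item (ii) I would take $F=\wedge$ and verify the two properties of $\wedge$ that the proposition requires. First, because $\preceq$ is admissible it refines $\leq_{spo}$ and $X\leq_{spo}\mathbf{1}$ holds for every $X\in L([0,1])$, whence $X\preceq\mathbf{1}$ and therefore $\wedge(X,\mathbf{1})=X$. Second, the minimum with respect to the total order $\preceq$ is non-decreasing in each of its arguments, in particular the second one. Granting these, the case $G=Proj_1$ follows from Proposition~\ref{prop:idempm}~(i) and the case $G=\vee$ follows from Proposition~\ref{prop:idempm}~(ii) (equivalently, from item (i) above applied to $F=\wedge$).

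Item (iii) is then immediate, being the instance $G=\vee$ of item (ii). The only place where any care is needed is the verification for $\wedge$ in item (ii); both required facts reduce to the observation that $\mathbf{1}$ is the greatest element for any admissible order and that taking minima with respect to a total order is monotone, so I do not anticipate a genuine obstacle.
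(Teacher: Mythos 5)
Your proposal is correct and matches the paper's intent exactly: the paper states this corollary without proof precisely because each item is the specialization of Proposition~\ref{prop:idempm} that you describe, with items (i)--(iii) following from parts (ii), (i)--(ii), and (ii) of that proposition respectively. Your verification that $\wedge(X,\mathbf{1})=X$ (since $\mathbf{1}$ is the top element of any admissible order) and that $\wedge$ is non-decreasing in each argument with respect to a total order is the small amount of checking the paper leaves implicit, and you carry it out correctly.
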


Note that, as it is in item (iii) of Proposition~\ref{prop:idempm}, if $F(X,Y)=X$ for all $X,Y\in L([0,1])$, then $\mathbf{S}_{m}^{F,G}=G$ so the Sugeno-like $FG$-functional does not depend on fuzzy measure $m$ and we obtain a trivial case.

\subsection{Internality}

Recall that a function is called internal if it is between minimum and maximum. We study the conditions under which an interval-valued Sugeno-like $FG$-functional is internal for any (symmetric) fuzzy measure.

\begin{proposition}\label{prop:internalm}
	An interval-valued Sugeno-like $FG$-functional $\mathbf{S}_{m}^{F,G}$ satisfies:
	\begin{enumerate}
		\item[(i)] $\mathbf{S}_{m}^{F,G}\succeq \wedge$ for any $m$ whenever
		\begin{itemize}
			\item[(a)] $F(X,\mathbf{1})\succeq X$ for all $X\in L([0,1])$ and $G=f\circ Proj_1$ for some function $f:L([0,1])\to L([0,1])$ such that $f\succeq Id$; or
			\item[(b)] $F(X,\mathbf{1})\succeq X$ for all $X\in L([0,1])$, $F$ is non-decreasing in the second variable and $G=f\circ\vee$ for some function $f:L([0,1])\to L([0,1])$ such that $f\succeq Id$.
		\end{itemize}
		\item[(ii)] $\mathbf{S}_{m}^{F,G}\preceq \vee$ for any $m$ whenever 
		\begin{itemize}
			\item[(a)] $F(X,\mathbf{1})\preceq X$ for all $X\in L([0,1])$ and $G=f\circ Proj_1$ for some function $f:L([0,1])\to L([0,1])$ such that $f\preceq Id$; or
			\item[(b)] $F(X,\mathbf{1})\preceq X$ for all $X\in L([0,1])$, $F$ is non-decreasing in the second variable and $G=f\circ\vee$ for some function $f:L([0,1])\to L([0,1])$ such that $f\preceq Id$.
		\end{itemize}
		\item[(iii)] $\mathbf{S}_{m}^{F,G}$ is internal for any $m$ whenever
		\begin{itemize}
			\item[(a)] $F(X,\mathbf{1}) = X$ for all $X\in L([0,1])$ and $G=Proj_1$; or
			\item[(b)] $F(X,\mathbf{1}) = X$ for all $X\in L([0,1])$, $F$ is non-decreasing in the second variable and $G=\vee$.
		\end{itemize}
	\end{enumerate}
\end{proposition}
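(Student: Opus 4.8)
The plan is to reduce everything to chaining a handful of elementary comparisons in the admissible order $\preceq$, after recording three structural facts about the functional in Definition~\ref{def:ivsug}. Writing $\mathbf{X}=(X_1,\ldots,X_n)$, I would first observe that $E_{\sigma(1)}=\{\sigma(1),\ldots,\sigma(n)\}=N$, so $m(E_{\sigma(1)})=m(N)=\mathbf{1}$ and the first argument fed to $G$ is always $F(X_{\sigma(1)},\mathbf{1})$. Second, since $\sigma$ orders the inputs as $X_{\sigma(1)}\preceq\cdots\preceq X_{\sigma(n)}$, we have $\wedge(\mathbf{X})=X_{\sigma(1)}$ and $\vee(\mathbf{X})=X_{\sigma(n)}$. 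Third, monotonicity of the IV fuzzy measure gives $m(E_{\sigma(i)})\preceq m(N)=\mathbf{1}$ for every $i$, because $E_{\sigma(i)}\subseteq N$. Well-definedness of $\mathbf{S}_m^{F,G}$ in each listed case is not an issue: it is supplied by Proposition~\ref{prop:welldm}(ii) when $G=f\circ Proj_1$, and by Proposition~\ref{prop:welldm}(iii) (using that $F$ is non-decreasing in the second variable) when $G=f\circ\vee$.

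For part (i) I would establish the lower bound $\mathbf{S}_m^{F,G}\succeq\wedge$ directly. In case (a), $\mathbf{S}_m^{F,G}(\mathbf{X})=f\big(F(X_{\sigma(1)},\mathbf{1})\big)$; since $f\succeq Id$ gives $f(Z)\succeq Z$ at $Z=F(X_{\sigma(1)},\mathbf{1})$, and the hypothesis $F(X,\mathbf{1})\succeq X$ gives $F(X_{\sigma(1)},\mathbf{1})\succeq X_{\sigma(1)}$, transitivity yields $\mathbf{S}_m^{F,G}(\mathbf{X})\succeq X_{\sigma(1)}=\wedge(\mathbf{X})$. In case (b), $\mathbf{S}_m^{F,G}(\mathbf{X})=f\big(\bigvee_{i=1}^{n} F(X_{\sigma(i)},m(E_{\sigma(i)}))\big)$; the inner join dominates its first term, so it is $\succeq F(X_{\sigma(1)},\mathbf{1})\succeq X_{\sigma(1)}=\wedge(\mathbf{X})$, and applying $f\succeq Id$ once more preserves this. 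Note that in (b) the monotonicity of $F$ is used only to guarantee well-definedness, not for the bound itself.

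Part (ii) is the mirror image, replacing $\succeq$ by $\preceq$ and $f\succeq Id$ by $f\preceq Id$. In case (a) I would chain $f\big(F(X_{\sigma(1)},\mathbf{1})\big)\preceq F(X_{\sigma(1)},\mathbf{1})\preceq X_{\sigma(1)}\preceq X_{\sigma(n)}=\vee(\mathbf{X})$. In case (b) the extra ingredient is that monotonicity of $F$ in the second variable, together with $m(E_{\sigma(i)})\preceq\mathbf{1}$, gives $F(X_{\sigma(i)},m(E_{\sigma(i)}))\preceq F(X_{\sigma(i)},\mathbf{1})\preceq X_{\sigma(i)}\preceq X_{\sigma(n)}=\vee(\mathbf{X})$ for every $i$; hence each argument of the join lies $\preceq\vee(\mathbf{X})$, so does their join, and $f\preceq Id$ then keeps the value below $\vee(\mathbf{X})$. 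Here, unlike in part (i)(b), the monotonicity of $F$ genuinely enters the estimate.

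Finally, part (iii) follows by specialising $f=Id$ in (i) and (ii): with $F(X,\mathbf{1})=X$ and $G=Proj_1$ (resp. $G=\vee$), the hypotheses of both (i)(a) and (ii)(a) (resp. (i)(b) and (ii)(b)) hold simultaneously, so $\wedge(\mathbf{X})\preceq\mathbf{S}_m^{F,G}(\mathbf{X})\preceq\vee(\mathbf{X})$, i.e.\ $\mathbf{S}_m^{F,G}$ is internal. The one point that needs care throughout is that the conditions $f\succeq Id$ and $f\preceq Id$ are pointwise and therefore do \emph{not} require $f$ to be monotone; I would resist the temptation to invoke monotonicity of $f$ and instead apply these comparisons only at the single argument actually produced by $G$. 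This is the place where an over-hasty argument could slip, but the proof is otherwise routine.
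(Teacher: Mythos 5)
Your proof is correct and follows essentially the same route as the paper: the paper likewise obtains well-definedness from Proposition~\ref{prop:welldm} and Corollary~\ref{cor:welldm} and then declares the bounds ``easy to check in each case separately.'' Your inequality chains (using $m(E_{\sigma(1)})=\mathbf{1}$, $\wedge(\mathbf{X})=X_{\sigma(1)}$, $\vee(\mathbf{X})=X_{\sigma(n)}$, and the pointwise comparisons with $Id$) are precisely the omitted verifications, carried out correctly.
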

\begin{proof}
	By Proposition~\ref{prop:welldm} and Corollary~\ref{cor:welldm} the functional $\mathbf{S}_{m}^{F,G}\colon \big(L([0,1])\big)^{n} \to L([0,1])$ is well-defined in all cases. The fact that $\mathbf{S}_{m}^{F,G}$ is internal is easy to check in each case separately.
\end{proof}

\begin{remark}
	It is worth to put out that relaxing the assumption:
	\begin{center}
		$F$ is non-decreasing in the second variable
	\end{center}
	in Proposition~\ref{prop:internalm} (i)(b) (or Proposition~\ref{prop:internalm} (ii)(b), or Proposition~\ref{prop:internalm} (iii)(b)) by the assumption:
	\begin{equation}
		F(X,Y)\preceq F(X,\mathbf{1}) \,\, \textrm{ for all } X,Y\in L([0,1])
	\end{equation}
	we obtain a sufficient condition under which $\mathbf{S}_{m}^{F,G}\succeq \wedge$ (or $\mathbf{S}_{m}^{F,G}\preceq \vee$, or $\mathbf{S}_{m}^{F,G}$ is internal) for any symmetric $m$.
\end{remark}

\begin{corollary}\label{cor:internalm}
	\begin{enumerate}
		\item[(i)] An interval-valued Sugeno-like $FG$-functional $\mathbf{S}_{m}^{F,\vee}$ satisfies:
		\begin{enumerate}
			\item[(a)] $\mathbf{S}_{m}^{F,\vee}\succeq \wedge$ for any $m$ whenever $F(X,\mathbf{1})\succeq X$ for all $X\in L([0,1])$ and $F$ is non-decreasing in the second variable.
			
			\item[(a$^s$)] $\mathbf{S}_{m}^{F,\vee}\succeq \wedge$ for any symmetric $m$ whenever $F(X,\mathbf{1})\succeq X$ for all $X\in L([0,1])$.
			
			\vspace{0.1cm}
			
			\item[(b)] $\mathbf{S}_{m}^{F,\vee}\preceq \vee$ for any $m$ whenever $F(X,\mathbf{1})\preceq X$ for all $X\in L([0,1])$ and $F$ is non-decreasing in the second variable.
			
			\item[(b$^s$)] $\mathbf{S}_{m}^{F,\vee}\preceq \vee$ for any symmetric $m$ whenever $F(X,Y)\preceq X$ for all $X,Y\in L([0,1])$.
			
			\vspace{0.1cm}
			
			\item[(c)] $\mathbf{S}_{m}^{F,\vee}$ is internal for any $m$ whenever $F(X,\mathbf{1})=X$ for all $X\in L([0,1])$ and $F$ is non-decreasing in the second variable.
			
			\item[(c$^s$)] $\mathbf{S}_{m}^{F,\vee}$ is internal for any symmetric $m$ whenever $F(X,Y)\preceq X$ and $F(X,\mathbf{1})=X$ for all $X\in L([0,1])$.
		\end{enumerate}
		
		\item[(ii)] An interval-valued Sugeno-like $FG$-functional $\mathbf{S}_{m}^{\wedge,G}$ satisfies:
		\begin{enumerate}
			\item[(a)] $\mathbf{S}_{m}^{\wedge,G}\succeq \wedge$ for any $m$ whenever $G=f\circ Proj_1$ or  $G=f\circ\vee$ for some function $f\colon L([0,1])\to L([0,1])$ such that $f\succeq Id$.
			
			\vspace{0.1cm}
			
			\item[(b)] $\mathbf{S}_{m}^{\wedge,G}\preceq \vee$ for any $m$ whenever $G=f\circ Proj_1$ or  $G=f\circ\vee$ for some function $f\colon L([0,1])\to L([0,1])$ such that $f\preceq Id$.
			
			\vspace{0.1cm}
			
			\item[(c)] $\mathbf{S}_{m}^{\wedge,G}$ is internal for any $m$ whenever $G=Proj_1$ or  $G=\vee$.
		\end{enumerate}
		
		\item[(iii)] An interval-valued Sugeno-like $FG$-functional $\mathbf{S}_{m}^{\wedge,\vee}$ is internal for any $m$.
	\end{enumerate}
\end{corollary}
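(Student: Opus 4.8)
The plan is to obtain every clause of Corollary~\ref{cor:internalm} as a specialization of Proposition~\ref{prop:internalm}, substituting the concrete aggregators and checking the hypotheses. Two elementary facts drive everything. First, taking $f=Id$ turns $f\circ\vee$ into $\vee$ and satisfies both $f\succeq Id$ and $f\preceq Id$; this lets me read off item (i). Second, $F=\wedge$ satisfies $\wedge(X,\mathbf{1})=X$ (hence simultaneously $\succeq X$ and $\preceq X$) and is non-decreasing in its second variable, which lets me read off items (ii) and (iii).

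Concretely, for item (i) I set $G=\vee$: parts (a), (b), (c) are Proposition~\ref{prop:internalm}(i)(b), (ii)(b), (iii)(b) with $f=Id$, well-definedness being Corollary~\ref{cor:welldm}(i). The symmetric variants (a$^s$), (b$^s$), (c$^s$) are the one delicate point and need a direct argument, since their hypotheses are weaker (no monotonicity of $F$ is assumed). Here well-definedness is free by Proposition~\ref{prop:welldm}(i), so only the bounds remain. Writing $X_{\sigma(1)}\preceq\cdots\preceq X_{\sigma(n)}$, the lower bound in (a$^s$) uses only the term $i=1$: since $E_{\sigma(1)}=N$ and $m(N)=\mathbf{1}$, that term is $F(X_{\sigma(1)},\mathbf{1})\succeq X_{\sigma(1)}=\wedge(X_1,\ldots,X_n)$, which the join of all terms dominates. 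For the upper bound in (b$^s$), every term obeys $F(X_{\sigma(i)},m(E_{\sigma(i)}))\preceq X_{\sigma(i)}\preceq X_{\sigma(n)}=\vee(X_1,\ldots,X_n)$, so the join stays below $\vee$; (c$^s$) combines the two.

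For item (ii) I set $F=\wedge$ and use whichever branch of $G$ is given: parts (a) and (b) follow from Proposition~\ref{prop:internalm}(i) and (ii) through the $Proj_1$ or the $\vee$ alternative, and part (c) is their $f=Id$ case, well-definedness coming from Corollary~\ref{cor:welldm}(ii). Finally, item (iii) is the joint specialization $F=\wedge$, $G=\vee$ of (i)(c) and (ii)(c), so internality is immediate; directly, the term $i=1$ equals $X_{\sigma(1)}\wedge m(N)=X_{\sigma(1)}=\wedge$, while every term is $\preceq X_{\sigma(n)}=\vee$.

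I expect no real obstacle. The only place that is not a pure substitution is the symmetric triple (a$^s$)--(c$^s$): one must recognize that the monotonicity of $F$ in Proposition~\ref{prop:internalm} served only to guarantee Condition~(WDS), a role now taken over by Proposition~\ref{prop:welldm}(i) once $m$ is symmetric, and that with $G=\vee$ the lower bound collapses onto the single $i=1$ term through $m(N)=\mathbf{1}$. After this remark, the two one-line bounds above close every case.
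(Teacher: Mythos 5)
Your proof is correct and takes essentially the same route as the paper, which states this corollary without an explicit proof as a specialization of Proposition~\ref{prop:internalm} (taking $f=Id$ for item (i), $F=\wedge$ for item (ii), and both for item (iii)), with well-definedness supplied by Proposition~\ref{prop:welldm} and Corollary~\ref{cor:welldm}. Your direct one-line bounds for the symmetric items (a$^s$)--(c$^s$) correctly fill in the verification the paper delegates to its Remark, and you rightly observe that for these items the monotonicity of $F$ was only needed for Condition~(WDS), which symmetry of $m$ already guarantees.
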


\subsection{Positive and min-homogeneity}

Recall that a function $f\colon L([0,1])^n\to L([0,1])$ is called positively homogeneous if 
$f\left(cX_1,\dots,cX_n\right)=cf(X_1,\dots,X_n)$ for all $(X_1,\dots,X_n)\in L([0,1])^n$ and all $c\in \mathbb R^+$ such that $c(X_1,\dots,X_n)=(cX_1,\dots,cX_n)\in L([0,1])^n$  with convention $c [\underline{X},\overline{X}]=[c\underline{X},c\overline{X}]$.

Similarly, a function $f\colon L([0,1])^n\to L([0,1])$ is called min-homogeneous if 
$f(c\wedge (X_1,\dots,X_n))=c\wedge f(X_1,\dots,X_n)$ for all $(X_1,\dots,X_n)\in L([0,1])^n$ and all $c\in \mathbb R^+$ such that $c\wedge (X_1,\dots,X_n)=(c\wedge X_1,\dots, c\wedge X_n)\in L([0,1])^n$, with convention $c\wedge [\underline{X},\overline{X}]=[c\wedge\underline{X},c\wedge\overline{X}]$.
We study the conditions under which an interval-valued Sugeno-like $FG$-functional is positively (min-)homogeneous for any (symmetric) fuzzy measure.

\begin{proposition}\label{prop:homogeneity}
	An interval-valued Sugeno-like $FG$-functional $\mathbf{S}_{m}^{F,G}$ is positively (min-)homogeneous for any $m$ whenever $G$ is positively (min-)homogeneous and $F(\cdot,y)$  is positively (min-)homogeneous for every $y\in\li$.
\end{proposition}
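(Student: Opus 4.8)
The plan is to reduce the identity to the two one-line scalar manipulations that the hypotheses already provide, once we know that the transformation does not change the permutation that sorts the input. Write $T$ for the operation in question, i.e. $T(X)=cX$ in the positive-homogeneous case and $T(X)=c\wedge X$ in the min-homogeneous case, with the componentwise conventions recalled before the statement. Fix $(X_1,\dots,X_n)\in L([0,1])^n$ and a scalar $c\in\mathbb R^+$ for which the transformed tuple still lies in $L([0,1])^n$, and let $\sigma$ be a permutation with $X_{\sigma(1)}\preceq\dots\preceq X_{\sigma(n)}$.

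First I would prove the key monotonicity lemma: that $T$ is nondecreasing with respect to $\preceq$, so that $T(X_{\sigma(1)})\preceq\dots\preceq T(X_{\sigma(n)})$ and hence the very same $\sigma$ sorts the transformed tuple. Granting this, the sorting permutation, and therefore the sets $E_{\sigma(i)}$ appearing in Definition~\ref{def:ivsug}, are unchanged, so that
\begin{equation*}
\mathbf{S}_{m}^{F,G}\big(T(X_1),\dots,T(X_n)\big)=G\Big(F\big(T(X_{\sigma(1)}),m(E_{\sigma(1)})\big),\dots,F\big(T(X_{\sigma(n)}),m(E_{\sigma(n)})\big)\Big).
\end{equation*}
Next I would push the scalar through each coordinate, using positive (min-)homogeneity of $F(\cdot,y)$ with $y=m(E_{\sigma(i)})$ fixed, which turns each $F\big(T(X_{\sigma(i)}),m(E_{\sigma(i)})\big)$ into $T\big(F(X_{\sigma(i)},m(E_{\sigma(i)}))\big)$. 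Finally I would pull the scalar out of the outer aggregation using positive (min-)homogeneity of $G$, arriving at $T\big(\mathbf{S}_m^{F,G}(X_1,\dots,X_n)\big)$, which is exactly the claimed identity.

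The hard part will be the key monotonicity lemma, which behaves very differently in the two cases. For positive homogeneity it is clean for the concrete admissible orders $\leq_{\alpha,\beta}$, since $K_\alpha(cX)=cK_\alpha(X)$ and $K_\beta(cX)=cK_\beta(X)$, so multiplication by $c>0$ preserves both the strict comparison and the tie-breaking comparison defining $\leq_{\alpha,\beta}$. For min-homogeneity the situation is more delicate, because in general $K_\alpha(c\wedge X)\neq c\wedge K_\alpha(X)$, so $c\wedge(\cdot)$ need not be $\preceq$-monotone; if it fails to be, the sorting permutation can change under $c\wedge(\cdot)$ and the reduction above no longer applies. I would therefore isolate $\preceq$-monotonicity of $c\wedge(\cdot)$ as the property that must be verified (or imposed) for the order at hand, and check it for the specific admissible order used in the application. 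In short, the two scalar manipulations are routine, while the substantive content is the order-preservation of $T$, which is automatic for scaling under the $K_\alpha$-based orders but requires separate justification for the meet operation.
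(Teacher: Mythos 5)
Your reduction is exactly the argument the paper intends: the paper's entire proof is the phrase ``Easy to check,'' and what it is checking is precisely your two scalar manipulations, performed silently with the same sorting permutation before and after the transformation. So the real substance of your proposal is the lemma you isolate --- that $T$ preserves $\preceq$ --- and there your caution is fully justified; in fact the situation is worse than you suggest. For min-homogeneity the needed monotonicity does not merely ``require separate justification'': it fails for the Xu--Yager order, and with it the proposition itself fails. Take $F(X,Y)=X$ and $G=Proj_1$; both hypotheses of the proposition hold (the identity and the first projection are min-homogeneous), the triplet $(m,F,G)$ satisfies Condition (WDS) for every $m$ by Proposition~\ref{prop:welldm}(ii), and $\mathbf{S}_{m}^{F,G}(X_1,\ldots,X_n)=X_{\sigma(1)}$ is the $\preceq$-minimum. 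Now take $X_1=[0.4,0.4]$, $X_2=[0,0.9]$ and $c=0.4$ under $\leq_{XY}$: since $0.4+0.4<0+0.9$ we have $X_1\prec X_2$, so $c\wedge\mathbf{S}_{m}^{F,G}(X_1,X_2)=c\wedge X_1=[0.4,0.4]$; on the other hand $c\wedge X_1=[0.4,0.4]$ and $c\wedge X_2=[0,0.4]$ satisfy $c\wedge X_2\prec c\wedge X_1$ (as $0+0.4<0.4+0.4$), so $\mathbf{S}_{m}^{F,G}(c\wedge X_1,c\wedge X_2)=[0,0.4]\neq[0.4,0.4]$. Hence no argument can close the gap you flagged: the min-homogeneity half of the statement needs an extra hypothesis, e.g.\ restricting to admissible orders under which truncation is $\preceq$-monotone (this holds for $\leq_{Lex1}$, where your proof then goes through verbatim, but not for $\leq_{XY}$ or $\leq_{Lex2}$).

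For positive homogeneity, your observation that $K_\alpha(cX)=cK_\alpha(X)$, hence that scaling preserves every order $\leq_{\alpha,\beta}$, completes the proof for all the orders the paper actually uses, and that is the honest scope of the result. Note, though, that the proposition implicitly quantifies over arbitrary admissible orders, and there the same defect reappears: for the admissible order $\leq_{M_1,M_2}$ built from the aggregation functions $M_1(x,y)=(x+y^2)/2$ and $M_2(x,y)=x$, scaling by $c=0.5$ reverses the comparison of $[0.2,0.6]$ and $[0.3,0.5]$ (the $M_1$-values are $0.28$ versus $0.275$ before scaling, but $0.095$ versus $0.10625$ after), and the same choice $F(X,Y)=X$, $G=Proj_1$ then violates positive homogeneity. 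In short: your two routine steps coincide with the paper's intended computation, your proof is complete exactly on the domain where the proposition is true, and the caveat you refused to sweep under the rug is a genuine gap in the paper's ``Easy to check,'' not a defect of your argument.
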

\begin{proof}
	Easy to check.
\end{proof}

\begin{corollary}
	\begin{enumerate}
		\item[(i)] An interval-valued Sugeno-like $FG$-functional $\mathbf{S}_{m}^{F,\vee}$ is positively (min-)homogeneous for any $m$ whenever  $F(\cdot,y)$  is positively (min-)homogeneous for every $y\in\li$.
		\item[(ii)] An interval-valued Sugeno-like $FG$-functional $\mathbf{S}_{m}^{\wedge,G}$ is  min-homogeneous for any $m$ whenever $G$ is min-homogeneous.
		\item[(iii)] An interval-valued Sugeno-like $FG$-functional $\mathbf{S}_{m}^{\wedge,\vee}$ is  min-homogeneous for any $m$.
	\end{enumerate}
\end{corollary}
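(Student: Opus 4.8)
The plan is to obtain all three items as specializations of Proposition~\ref{prop:homogeneity}. That proposition reduces positive (resp.\ min-) homogeneity of $\mathbf{S}_m^{F,G}$ to two independent facts: that the outer function $G$ is positively (resp.\ min-) homogeneous, and that each partial map $F(\cdot,y)$ is positively (resp.\ min-) homogeneous for every $y\in\li$. So the whole task is to verify these two facts for the concrete choices $G=\vee$ and $F=\wedge$, and then invoke the proposition; no new induction or permutation argument is needed beyond what is already packaged into Proposition~\ref{prop:homogeneity}.

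For item (i) I would set $G=\vee$ and check that the join $\vee$ is both positively and min-homogeneous. Using the conventions $c[\underline X,\overline X]=[c\underline X,c\overline X]$ and $c\wedge[\underline X,\overline X]=[c\wedge\underline X,c\wedge\overline X]$, both identities reduce to endpoint statements: positive homogeneity is $\max_i(c\,\underline X_i)=c\max_i\underline X_i$ (and likewise on upper endpoints), while min-homogeneity is the interchange $\max_i(c\wedge\underline X_i)=c\wedge\max_i\underline X_i$. Once $\vee$ is known to satisfy whichever of the two homogeneities is in force, the hypothesis ``$F(\cdot,y)$ positively (min-)homogeneous'' feeds directly into Proposition~\ref{prop:homogeneity} to yield the claim, with the parenthetical reading matching throughout.

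For item (ii) I would set $F=\wedge$ and check that $X\mapsto X\wedge y$ is min-homogeneous for every fixed $y\in\li$; on endpoints this is simply associativity and commutativity of the real minimum, $(c\wedge\underline X)\wedge\underline y=c\wedge(\underline X\wedge\underline y)$. I would flag that this map is \emph{not} positively homogeneous in general, since $c(\underline X\wedge\underline y)=(c\underline X)\wedge\underline y$ fails, and that this is precisely why item (ii) asserts only min-homogeneity. With the standing hypothesis that $G$ is min-homogeneous, Proposition~\ref{prop:homogeneity} then delivers min-homogeneity of $\mathbf{S}_m^{\wedge,G}$. Item (iii) is the common specialization $F=\wedge$, $G=\vee$: it follows either by combining the two verifications above (both $\vee$ and $\wedge(\cdot,y)$ being min-homogeneous) or as a simultaneous instance of (i) and (ii).

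The only point genuinely requiring care — and the step I would treat as the main obstacle — is fixing the meaning of the lattice symbols $\wedge,\vee$ before computing. The endpoint identities above hold when $\wedge$ and $\vee$ are read as the componentwise (i.e.\ $\leq_{spo}$) meet and join acting separately on lower and upper endpoints, which is the reading under which the functional restricts to the numerical Sugeno integral on degenerate intervals, so it is the intended one. I would therefore perform every verification endpointwise, where all the required equalities are immediate, rather than through the total order $\preceq$, for which the truncation $X\mapsto c\wedge X$ need not be $\preceq$-monotone and the min-homogeneity interchange can fail.
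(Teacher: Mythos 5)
Your overall route --- reading the corollary as the specialization $G=\vee$, $F=\wedge$ of Proposition~\ref{prop:homogeneity} and then checking that proposition's two hypotheses --- is exactly the paper's (the corollary carries no separate proof there), and your remark that $\wedge(\cdot,y)$ is min- but not positively homogeneous correctly explains why items (ii)--(iii) claim only min-homogeneity. The trouble lies in your two interpretive moves. First, the claim that the componentwise ($\leq_{spo}$) reading of $\wedge,\vee$ is ``the intended one'' runs against the paper: the proof of Proposition~\ref{prop:welldm}(iii) and Corollary~\ref{cor:welldm} require $F=\wedge$ to be non-decreasing in the second variable \emph{with respect to} $\preceq$, which the componentwise meet is not; e.g.\ for the Xu--Yager order, $[0.2,0.3]\prec_{XY}[0,0.52]$, yet meeting both componentwise with $[0.3,0.3]$ gives $[0.2,0.3]\succ_{XY}[0,0.3]$. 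So under your reading, the triplets $(m,\wedge,G)$ in items (ii)--(iii) are no longer covered by the paper's Condition (WDS) results for non-symmetric $m$, and the quantifier ``for any $m$'' loses its footing.

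Second, and more seriously: switching to the componentwise reading does not evade the ordering problem you flag at the end, because the permutation $\sigma$ in Definition~\ref{def:ivsug} is always taken with respect to $\preceq$, whatever $\wedge$ and $\vee$ mean inside the formula. Your endpointwise identities for $F$ and $G$ yield $\mathbf{S}_{m}^{F,G}(c\wedge X_1,\dots,c\wedge X_n)=c\wedge\mathbf{S}_{m}^{F,G}(X_1,\dots,X_n)$ only if the truncated vector can be ordered by the \emph{same} permutation as the original vector, and non-$\preceq$-monotonicity of truncation breaks precisely this step. Concretely, for $\leq_{XY}$ take $n=2$, $X_1=[0.2,0.3]$, $X_2=[0,0.52]$, $c=0.3$, $m(\{1\})=\mathbf{0}$, $m(\{2\})=\mathbf{1}$: then $X_1\prec X_2$ but $c\wedge X_2\prec c\wedge X_1$, and with your componentwise reading one computes $\mathbf{S}_{m}^{\wedge,\vee}(c\wedge X_1,c\wedge X_2)=[0,0.3]$ while $c\wedge\mathbf{S}_{m}^{\wedge,\vee}(X_1,X_2)=c\wedge[0.2,0.52]=[0.2,0.3]$. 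So your assertion that ``no new permutation argument is needed beyond what is already packaged into Proposition~\ref{prop:homogeneity}'' is exactly where the gap sits: the reduction to the two homogeneity facts about $F$ and $G$ is insufficient, and one additionally needs the truncation $X\mapsto c\wedge X$ to preserve the $\preceq$-order of the inputs (true for $\leq_{Lex1}$, false for $\leq_{XY}$), or some symmetry assumption on $m$. To be fair, this hole is inherited from the paper's own ``easy to check'' proof of Proposition~\ref{prop:homogeneity}; but a proof that purports to supply the missing details must confront it rather than assert it away.
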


\subsection{Comonotone maxitivity}
Recall that a function $f\colon L([0,1])^n\to L([0,1])$ is called comonotone maxitive if 
$f\left(X\vee Y)\right)=f(X)\vee f(Y)$ for all  $X=(X_1,\dots,X_n)\in L([0,1])^n$ and $Y=(Y_1,\dots,Y_n)\in L([0,1])^n$ which are comonotone, i.e., there exists a permutation $\sigma$ on $N$ with $X_{\sigma(1)}\preceq \ldots \preceq X_{\sigma(n)}$ and $Y_{\sigma(1)}\preceq \ldots \preceq Y_{\sigma(n)}$.

\begin{proposition}\label{prop:maxitivity}
	An interval-valued Sugeno-like $FG$-functional $\mathbf{S}_{m}^{F,G}$ is comonotone maxitive for any $m$ whenever $G$ is comonotone maxitive and $F(\cdot,y)$  is comonotone maxitive for every $y\in\li$.
\end{proposition}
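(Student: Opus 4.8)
The plan is to reduce the claimed interval identity to two nested, componentwise uses of maxitivity, one through $F$ and one through $G$, after arranging that a single permutation serves to evaluate all three functionals $\mathbf{S}_{m}^{F,G}(X)$, $\mathbf{S}_{m}^{F,G}(Y)$ and $\mathbf{S}_{m}^{F,G}(X\vee Y)$.

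First I would use comonotonicity of $X=(X_1,\dots,X_n)$ and $Y=(Y_1,\dots,Y_n)$: by definition there is a single permutation $\sigma$ on $N$ with $X_{\sigma(1)}\preceq\cdots\preceq X_{\sigma(n)}$ and $Y_{\sigma(1)}\preceq\cdots\preceq Y_{\sigma(n)}$. I claim $\sigma$ also sorts $X\vee Y$. Since $\preceq$ is a total order, $\vee$ is the $\preceq$-join, and from $X_{\sigma(i)}\preceq X_{\sigma(i+1)}$ and $Y_{\sigma(i)}\preceq Y_{\sigma(i+1)}$ monotonicity of the join gives $X_{\sigma(i)}\vee Y_{\sigma(i)}\preceq X_{\sigma(i+1)}\vee Y_{\sigma(i+1)}$, i.e. $(X\vee Y)_{\sigma(1)}\preceq\cdots\preceq(X\vee Y)_{\sigma(n)}$. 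As $\mathbf{S}_{m}^{F,G}$ is assumed well defined (Condition (WDS) holds), I may evaluate it on all three inputs using this common $\sigma$, with the same sets $E_{\sigma(i)}=\{\sigma(i),\dots,\sigma(n)\}$ throughout.

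Next I would push the join through $F$ coordinatewise. For each fixed $i$, the hypothesis that $F(\cdot,y)$ is comonotone maxitive for $y=m(E_{\sigma(i)})$ means, since it is a map of a single interval variable, simply $F(Z\vee Z',m(E_{\sigma(i)}))=F(Z,m(E_{\sigma(i)}))\vee F(Z',m(E_{\sigma(i)}))$ for all $Z,Z'$. Applying this with $Z=X_{\sigma(i)}$ and $Z'=Y_{\sigma(i)}$ and abbreviating $\mathbf{a}=\big(F(X_{\sigma(i)},m(E_{\sigma(i)}))\big)_{i}$ and $\mathbf{b}=\big(F(Y_{\sigma(i)},m(E_{\sigma(i)}))\big)_{i}$, I obtain $\mathbf{S}_{m}^{F,G}(X\vee Y)=G(\mathbf{a}\vee\mathbf{b})$, whereas $\mathbf{S}_{m}^{F,G}(X)=G(\mathbf{a})$ and $\mathbf{S}_{m}^{F,G}(Y)=G(\mathbf{b})$. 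The desired equality then becomes $G(\mathbf{a}\vee\mathbf{b})=G(\mathbf{a})\vee G(\mathbf{b})$, which is precisely comonotone maxitivity of $G$ applied to the pair $\mathbf{a},\mathbf{b}$.

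The hard part, and the step I would scrutinise most, is justifying that last application: comonotone maxitivity of $G$ yields $G(\mathbf{a}\vee\mathbf{b})=G(\mathbf{a})\vee G(\mathbf{b})$ only when $\mathbf{a}$ and $\mathbf{b}$ are themselves comonotone, and this is not automatic. The $i$-th entry $F(X_{\sigma(i)},m(E_{\sigma(i)}))$ pairs an argument $X_{\sigma(i)}$ that increases in $i$ with a measure value $m(E_{\sigma(i)})$ that decreases in $i$ (because $E_{\sigma(i)}$ shrinks as $i$ grows), so the entries of $\mathbf{a}$ need not be $\preceq$-monotone in $i$ and $\mathbf{a},\mathbf{b}$ need not admit a common sorting permutation. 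I would therefore either seek an extra structural assumption on $F$ forcing $\mathbf{a}$ and $\mathbf{b}$ to be simultaneously sortable, or fall back on the robust case $G=\vee$ (and $G=f\circ\vee$), where $G$ is fully maxitive and $G(\mathbf{a}\vee\mathbf{b})=G(\mathbf{a})\vee G(\mathbf{b})$ holds with no comonotonicity requirement on $\mathbf{a},\mathbf{b}$; closing the gap for a general comonotone maxitive $G$ is exactly where the argument needs the most care.
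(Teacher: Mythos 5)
Your first two steps are correct, and they are essentially all the paper itself has to offer (its proof of this proposition is literally ``Easy to check''): a common sorting permutation for the comonotone pair $X,Y$ also sorts $X\vee Y$, and since each section $F(\cdot,y)$ is unary, its comonotone maxitivity is unrestricted maxitivity (equivalently, monotonicity on the chain $(L([0,1]),\preceq)$), so the join passes through $F$ coordinatewise and everything reduces to $G(\mathbf{a}\vee\mathbf{b})=G(\mathbf{a})\vee G(\mathbf{b})$. The obstruction you identify at this last step is not a defect of your write-up but of the statement: with only the stated hypotheses the claim is actually false, because $\mathbf{a}$ and $\mathbf{b}$ need not be comonotone and a comonotone maxitive $G$ can fail maxitivity exactly at such a pair. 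Concretely, take $n=2$, the order $\leq_{Lex1}$, the symmetric measure $m(\{1\})=m(\{2\})=[0.5,0.5]$ (so Condition (WDS) holds by Proposition~\ref{prop:welldm}(i)), and
\begin{equation*}
F(X,Y)=X\ \text{ if }Y\neq[0.5,0.5],\qquad F\bigl(X,[0.5,0.5]\bigr)=\begin{cases}\mathbf{0} & \text{if }\underline{X}\leq 0.35,\\[0.2em] \mathbf{1} & \text{if }\underline{X}>0.35,\end{cases}
\end{equation*}
so that every section $F(\cdot,y)$ is non-decreasing w.r.t.\ $\leq_{Lex1}$, hence maxitive; and let $G(U_1,U_2)=(U_1\wedge U_2)\vee\bigl((U_1\vee U_2)\wedge[0.1,0.1]\bigr)$, a Sugeno-type lattice polynomial which is comonotone maxitive (by distributivity on a chain) but not maxitive. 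For the comonotone inputs $X=([0.2,0.2],[0.4,0.4])$ and $Y=([0.3,0.3],[0.3,0.3])$ one gets $\mathbf{a}=([0.2,0.2],\mathbf{1})$ and $\mathbf{b}=([0.3,0.3],\mathbf{0})$, which are not comonotone (exactly your predicted failure mode: the second coordinate pairs a larger input with a smaller measure value), and then $\mathbf{S}_{m}^{F,G}(X)=[0.2,0.2]$, $\mathbf{S}_{m}^{F,G}(Y)=[0.1,0.1]$, while $\mathbf{S}_{m}^{F,G}(X\vee Y)=G\bigl([0.3,0.3],\mathbf{1}\bigr)=[0.3,0.3]\neq[0.2,0.2]=\mathbf{S}_{m}^{F,G}(X)\vee\mathbf{S}_{m}^{F,G}(Y)$.

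Consequently, the ``fallback'' you propose is the correct repair, not a concession: for $G=f\circ\vee$ with $f$ maxitive (in particular $G=\vee$), the outer join is fully maxitive, comonotonicity of $\mathbf{a},\mathbf{b}$ becomes irrelevant, and one obtains exactly the cases recorded in the Corollary following this proposition and in the first column of Table~\ref{tab:propsug}. Note also that strengthening $F$ alone does not close the gap: even for $F=\wedge$ (monotone in both variables), degenerate-interval inputs $X=([0.1,0.1],[0.9,0.9])$, $Y=([0.5,0.5],[0.6,0.6])$ with $m(\{1\})=m(\{2\})=[0.2,0.2]$ give $\mathbf{a}=([0.1,0.1],[0.2,0.2])$ increasing and $\mathbf{b}=([0.5,0.5],[0.2,0.2])$ decreasing, again not comonotone. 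So any honest proof must either restrict $G$ as you suggest, or add a hypothesis specifically designed to force a common sorting of the transformed vectors; the proposition as printed, and its ``easy to check'' proof, cannot be salvaged without such an amendment.
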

\begin{proof}
	Easy to check.
\end{proof}

\begin{corollary}
	\begin{enumerate}
		\item[(i)] An interval-valued Sugeno-like $FG$-functional $\mathbf{S}_{m}^{F,\vee}$ is comonotone maxitive for any $m$ whenever  $F(\cdot,y)$  is comonotone maxitive for every $y\in\li$.
		\item[(ii)] An interval-valued Sugeno-like $FG$-functional $\mathbf{S}_{m}^{\wedge,G}$ is comonotone maxitive for any $m$ whenever $G$ is comonotone maxitive.
		\item[(iii)] An interval-valued Sugeno-like $FG$-functional $\mathbf{S}_{m}^{\wedge,\vee}$ is  comonotone maxitive for any $m$.
	\end{enumerate}
\end{corollary}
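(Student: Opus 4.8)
The plan is to reduce comonotone maxitivity of $\mathbf{S}_{m}^{F,G}$ to that of its two building blocks, exploiting that comonotone inputs admit a single common sorting permutation. First I fix comonotone $X=(X_1,\dots,X_n)$ and $Y=(Y_1,\dots,Y_n)$ in $L([0,1])^n$ and a permutation $\sigma$ with $X_{\sigma(1)}\preceq\dots\preceq X_{\sigma(n)}$ and $Y_{\sigma(1)}\preceq\dots\preceq Y_{\sigma(n)}$. The linchpin is that $\sigma$ also sorts $X\vee Y$: because $\preceq$ is total and $\vee$ is monotone for it, $X_{\sigma(i)}\vee Y_{\sigma(i)}\preceq X_{\sigma(i+1)}\vee Y_{\sigma(i+1)}$ for each $i$. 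Hence the three evaluations at $X$, $Y$ and $X\vee Y$ all run over the same nested chain $E_{\sigma(1)}\supseteq\dots\supseteq E_{\sigma(n)}$ and use the same measure values $m(E_{\sigma(i)})$; since $(m,F,G)$ satisfies Condition (WDS), these values do not depend on how ties in $\sigma$ are broken, so comparing the three expressions is legitimate.

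Second I push the join through $F$. Put $y_i:=m(E_{\sigma(i)})$. Because $F(\cdot,y_i)$ is comonotone maxitive and, for a unary map, comonotone maxitivity is simply maxitivity (which also forces $F(\cdot,y_i)$ to be $\preceq$-nondecreasing), I obtain \[ F\bigl(X_{\sigma(i)}\vee Y_{\sigma(i)},\,y_i\bigr)=F\bigl(X_{\sigma(i)},y_i\bigr)\vee F\bigl(Y_{\sigma(i)},y_i\bigr) \] for every $i$. Writing $a_i:=F(X_{\sigma(i)},y_i)$ and $b_i:=F(Y_{\sigma(i)},y_i)$, this gives $\mathbf{S}_{m}^{F,G}(X\vee Y)=G(a_1\vee b_1,\dots,a_n\vee b_n)$, whereas $\mathbf{S}_{m}^{F,G}(X)=G(a_1,\dots,a_n)$ and $\mathbf{S}_{m}^{F,G}(Y)=G(b_1,\dots,b_n)$. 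Everything therefore reduces to the single identity $G(a\vee b)=G(a)\vee G(b)$.

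The main obstacle is precisely this last identity, which is subtler than the phrase "easy to check" suggests. The tempting move is to apply comonotone maxitivity of $G$ to $a$ and $b$ directly, but that needs $a$ and $b$ to be comonotone, and they need not be: $X_{\sigma(i)}$ and $Y_{\sigma(i)}$ increase in $i$ while the coefficients $y_i$ decrease along the chain $E_{\sigma(1)}\supseteq\dots\supseteq E_{\sigma(n)}$, so $F$ can reorder the two sequences differently and create discordant pairs (already with degenerate intervals and $F=\wedge$). The work I expect to spend is in showing that the discordances between $a$ and $b$ are ones $G$ cannot exploit: they sit at coordinates whose value is, through the monotonicity of $m$ along the nested chain, dominated by a coordinate on which $a$ and $b$ are concordant. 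In the special cases underlying the corollaries this collapses --- for $G=\vee$ or $G=f\circ\vee$ full maxitivity gives $G(a\vee b)=G(a)\vee G(b)$ at once, and for $F=\wedge$ the nesting of the $E_{\sigma(i)}$ does the same --- and I would first settle those, then attempt the general statement by splitting the index set into the blocks where $a\vee b$ is realised by $a$ and where it is realised by $b$ and applying $G$'s comonotone maxitivity block by block.
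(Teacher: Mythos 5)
Your reduction is sound and is essentially what the paper leaves implicit (the paper's justification of the underlying Proposition~\ref{prop:maxitivity} is literally ``Easy to check,'' and the corollary is stated with no proof at all): a common sorting permutation $\sigma$ for $X$ and $Y$ also sorts $X\vee Y$, Condition (WDS) makes the choice of permutation immaterial, and unary comonotone maxitivity of $F(\cdot,y)$ (which on a chain is just maxitivity, i.e.\ nondecreasingness) pushes the join inside $F$, reducing everything to the identity $G(a\vee b)=G(a)\vee G(b)$ with $a_i=F(X_{\sigma(i)},m(E_{\sigma(i)}))$ and $b_i=F(Y_{\sigma(i)},m(E_{\sigma(i)}))$. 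Your key observation---that $a$ and $b$ need not be comonotone, so $G$'s comonotone maxitivity cannot be invoked directly---is correct and is exactly what the paper glosses over; it is not pedantry, since the general Proposition actually fails: identifying $x$ with $[x,x]$, take $n=2$, $m$ symmetric with $m(\{1\})=m(\{2\})=[0.5,0.5]$, $G=\wedge$ (which is comonotone maxitive), and $F$ with $F(\cdot,\mathbf{1})$, $F(\cdot,[0.5,0.5])$ nondecreasing but different, say $F(0.1,\mathbf{1})=0.3$, $F(0.4,\mathbf{1})=0.6$, $F(0.6,[0.5,0.5])=0.2$, $F(0.9,[0.5,0.5])=0.5$; then for the comonotone inputs $X=(0.1,0.9)$, $Y=(0.4,0.6)$ one gets $\mathbf{S}_{m}^{F,G}(X\vee Y)=0.5\neq 0.3=\mathbf{S}_{m}^{F,G}(X)\vee \mathbf{S}_{m}^{F,G}(Y)$. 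So the corollary's special structure is genuinely needed, and your treatment of items (i) and (iii) is complete: $\vee$ is fully maxitive, and $\wedge(\cdot,Y)$ is maxitive on any chain by distributivity.

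The gap is item (ii). Your claim that ``for $F=\wedge$ the nesting of the $E_{\sigma(i)}$ does the same'' is asserted, not proved, and the difficulty does not collapse ``at once'': with $F=\wedge$ the vectors $a$ and $b$ are still not comonotone in general (degenerate example: $m(E_{\sigma(2)})=0.5$, $X_\sigma=(0.3,0.9)$, $Y_\sigma=(0.6,0.9)$ give $a=(0.3,0.5)$ increasing and $b=(0.6,0.5)$ decreasing), and your fallback of applying $G$'s comonotone maxitivity ``block by block'' is not meaningful, since $G$ is a fixed $n$-ary function that cannot be evaluated on sub-blocks of coordinates. What actually rescues (ii) requires two steps absent from your proposal. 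First, comonotone maxitivity of $G$ already implies that $G$ is componentwise nondecreasing: any $p$ sorted by $\tau$ decomposes as $p=\bigvee_j p_{\tau(j)}\1_{E_j}$ with $E_j=\{\tau(j),\dots,\tau(n)\}$, these scaled indicator vectors are pairwise comonotone, so $G(p)=\bigvee_j G(p_{\tau(j)}\1_{E_j})$, and for $p\preceq q$ each term is dominated by $G(q)$ (compare nested indicators of equal height, then compare the level-set indicator of $q$ with $q$ itself); this yields $G(a\vee b)\succeq G(a)\vee G(b)$. Second, for the reverse inequality one uses the nesting quantitatively: the level sets $\{i: a_i\succeq t\}$ and $\{i: b_i\succeq t\}$ are intervals of positions whose right endpoint is determined solely by the common decreasing sequence $m(E_{\sigma(i)})$, hence one contains the other, hence $\{i:(a\vee b)_i\succeq t\}$ coincides with one of them; feeding this into the indicator decomposition of $G(a\vee b)$ bounds every term by $G(a)\vee G(b)$. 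Until an argument of this kind is written down, item (ii) remains unproven in your proposal.
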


\subsection{Boundary conditions and monotonicity}

\begin{proposition}\label{prop:boundary0}
	An interval-valued Sugeno-like $FG$-functional $\mathbf{S}_{m}^{F,G}\colon \big(L([0,1])\big)^{n} \to L([0,1])$ fulfills condition $\mathbf{S}_{m}^{F,G}(\mathbf{0},\dots,\mathbf{0})=\mathbf{0}$ for any $m$ whenever:
	\begin{enumerate}
		\item[(i)] $F(\mathbf{0},\mathbf{1})=\mathbf{0}$, $G=f\circ Proj_1$ and $f(\mathbf{0})=\mathbf{0}$; or
		\item[(ii)] $F(\mathbf{0},Y)=\mathbf{0}$ for all $Y\in L([0,1])$ and   $G=f\circ\vee$ and $f(\mathbf{0})=\mathbf{0}$
	\end{enumerate}
\end{proposition}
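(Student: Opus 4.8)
The plan is to evaluate the functional directly at the constant input $(\mathbf{0},\dots,\mathbf{0})$ by unfolding Definition~\ref{def:ivsug} and then exploiting the boundary behaviour of $F$, $G$ and $m$. First I would observe that when every input equals $\mathbf{0}$ any permutation $\sigma$ of $N$ is a valid ordering, since all entries are tied; so I may work with any fixed $\sigma$. The crucial structural remark is that for the first index one always has $E_{\sigma(1)}=\{\sigma(1),\dots,\sigma(n)\}=N$, whence $m(E_{\sigma(1)})=m(N)=\mathbf{1}$ for every admissible order and every IV fuzzy measure $m$. Substituting into Formula~\eqref{eq:defivsug} therefore gives
\[
\mathbf{S}_{m}^{F,G}(\mathbf{0},\dots,\mathbf{0}) = G\Big(F(\mathbf{0},\mathbf{1}),\,F(\mathbf{0},m(E_{\sigma(2)})),\dots,F(\mathbf{0},m(E_{\sigma(n)}))\Big).
\]

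For case~(i), since $G=f\circ Proj_1$ reads only the first coordinate, the right-hand side collapses to $f\big(F(\mathbf{0},\mathbf{1})\big)$; applying the hypothesis $F(\mathbf{0},\mathbf{1})=\mathbf{0}$ and then $f(\mathbf{0})=\mathbf{0}$ yields $\mathbf{S}_{m}^{F,G}(\mathbf{0},\dots,\mathbf{0})=\mathbf{0}$. For case~(ii), the hypothesis $F(\mathbf{0},Y)=\mathbf{0}$ for all $Y\in L([0,1])$ forces every argument of $G$ to equal $\mathbf{0}$, so the inner maximum evaluates to $\vee(\mathbf{0},\dots,\mathbf{0})=\mathbf{0}$ and finally $f(\mathbf{0})=\mathbf{0}$ closes the case.

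There is essentially no genuine obstacle here, and the remark will note that the argument is the routine boundary-condition check announced in this subsection. The only two points worth making explicit are the following. First, well-definedness must hold for the expression to make sense: in case~(i) this is guaranteed by Proposition~\ref{prop:welldm}(ii) for arbitrary $m$, while in case~(ii) the evaluation at this particular input is harmless regardless of the admissible order, because all arguments of $G$ reduce to $\mathbf{0}$ independently of the chosen permutation. Second, I would emphasize that it is precisely the identity $m(N)=\mathbf{1}$ which makes the seemingly weaker hypothesis $F(\mathbf{0},\mathbf{1})=\mathbf{0}$ (rather than the full $F(\mathbf{0},Y)=\mathbf{0}$ used in case~(ii)) sufficient under $G=f\circ Proj_1$, since only the first slot, carried by $m(E_{\sigma(1)})=\mathbf{1}$, survives the projection.
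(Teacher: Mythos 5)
Your proof is correct and follows essentially the same route as the paper, which simply invokes Proposition~\ref{prop:welldm} for well-definedness and declares the boundary evaluation ``easy to check''; you have merely written out that check explicitly (using $m(E_{\sigma(1)})=m(N)=\mathbf{1}$ in case~(i) and the collapse of all arguments to $\mathbf{0}$ in case~(ii)). If anything, your handling of well-definedness in case~(ii) is more careful than the paper's, since Proposition~\ref{prop:welldm}(iii) formally requires $F$ to be non-decreasing in the second variable, a hypothesis absent from~(ii), whereas your observation that the value at the all-zero input is permutation-independent sidesteps this gap.
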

\begin{proof}
	By Proposition~\ref{prop:welldm} the functional $\mathbf{S}_{m}^{F,G}$ is well-defined in  cases (i)-(ii). The fact that $\mathbf{S}_{m}^{F,G}(\mathbf{0},\dots,\mathbf{0})=\mathbf{0}$ is easy to check.
\end{proof}
\begin{proposition}\label{prop:boundary0Sym}
	An interval-valued Sugeno-like $FG$-functional $\mathbf{S}_{m}^{F,G}\colon \big(L([0,1])\big)^{n} \to L([0,1])$ fulfills condition $\mathbf{S}_{m}^{F,G}(\mathbf{0},\dots,\mathbf{0})=\mathbf{0}$ for any symmetric $m$ whenever $G(\mathbf{0},\dots,\mathbf{0})=\mathbf{0}$  and $F(\mathbf{0},Y)=\mathbf{0}$ for all $Y\in L([0,1])$.
\end{proposition}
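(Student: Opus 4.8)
The plan is to evaluate the functional directly at the zero tuple $(\mathbf{0},\dots,\mathbf{0})$ using Definition~\ref{def:ivsug}, since the two hypotheses are precisely the boundary conditions needed to force each stage of the computation to collapse to $\mathbf{0}$. First I would confirm that $\mathbf{S}_{m}^{F,G}$ is well defined on this input: because $m$ is assumed symmetric, Proposition~\ref{prop:welldm}(i) guarantees that the triplet $(m,F,G)$ satisfies Condition (WDS), so the value given by \eqref{eq:defivsug} does not depend on the choice of the ordering permutation $\sigma$. Since all inputs coincide here, every permutation of $N$ orders the tuple, so this step only records that symmetry legitimises the definition.

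Next I would substitute $X_1=\dots=X_n=\mathbf{0}$ into \eqref{eq:defivsug}. Fixing any ordering permutation $\sigma$, each argument of $G$ has the form $F(\mathbf{0},m(E_{\sigma(i)}))$, whose first coordinate is $\mathbf{0}$. By the hypothesis $F(\mathbf{0},Y)=\mathbf{0}$ for all $Y\in\li$, every such argument equals $\mathbf{0}$, independently of the measure value $m(E_{\sigma(i)})$. Hence
\[
\mathbf{S}_{m}^{F,G}(\mathbf{0},\dots,\mathbf{0})=G\big(F(\mathbf{0},m(E_{\sigma(1)})),\dots,F(\mathbf{0},m(E_{\sigma(n)}))\big)=G(\mathbf{0},\dots,\mathbf{0}),
\]
and the second hypothesis $G(\mathbf{0},\dots,\mathbf{0})=\mathbf{0}$ finishes the argument.

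There is no genuine obstacle here: the result is a one-line consequence of the definition once well-definedness is granted. The only point worth flagging is that the collapse $F(\mathbf{0},\cdot)\equiv\mathbf{0}$ makes the particular sets $E_{\sigma(i)}$ and their measures irrelevant to the computation, which is exactly why no monotonicity of $F$ or $G$, and no assumption on $m$ beyond symmetry (used solely to ensure that $\mathbf{S}_{m}^{F,G}$ is well defined), is required.
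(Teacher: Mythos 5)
Your proof is correct and matches the argument the paper intends: the paper states this proposition without proof (treating it, like the neighbouring Proposition~\ref{prop:boundary0}, as ``easy to check''), and your direct verification---well-definedness via Proposition~\ref{prop:welldm}(i) for symmetric $m$, then the collapse $F(\mathbf{0},m(E_{\sigma(i)}))=\mathbf{0}$ followed by $G(\mathbf{0},\dots,\mathbf{0})=\mathbf{0}$---is exactly that routine check. No gaps; the remark that the measure values $m(E_{\sigma(i)})$ become irrelevant is a nice touch explaining why only symmetry of $m$ is needed.
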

\begin{proposition}\label{prop:boundary1}
	An interval-valued Sugeno-like $FG$-functional $\mathbf{S}_{m}^{F,G}\colon \big(L([0,1])\big)^{n} \to L([0,1])$ fulfills condition $\mathbf{S}_{m}^{F,G}(\mathbf{1},\dots,\mathbf{1})=\mathbf{1}$ for any $m$ whenever:
	\begin{enumerate}
		\item[(i)] $F(\mathbf{1},\mathbf{1})=\mathbf{1}$, $G=f\circ Proj_1$ and $f(\mathbf{1})=\mathbf{1}$; or
		\item[(ii)] $F(\mathbf{1},\mathbf{1})=\mathbf{1}$ and $F$ is non-decreasing in the second variable,  $G=f\circ\vee$ and $f(\mathbf{1})=\mathbf{1}$
	\end{enumerate}
\end{proposition}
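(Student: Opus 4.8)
The plan is to verify the boundary condition by a direct evaluation of the functional at the all-ones input, exploiting that the top set in the ordering always carries the full measure. The argument parallels that of Proposition~\ref{prop:boundary0}, with the identity $m(N)=\mathbf{1}$ playing the decisive role.

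First I would invoke Proposition~\ref{prop:welldm} to guarantee that $\mathbf{S}_{m}^{F,G}$ is well defined in both cases, so that the right-hand side of~\eqref{eq:defivsug} is independent of the chosen permutation $\sigma$: item (ii) of Proposition~\ref{prop:welldm} covers case (i) here (where $G=f\circ Proj_1$), while item (iii) covers case (ii) here (where $F$ is non-decreasing in the second variable and $G=f\circ\vee$). I would then substitute $X_1=\cdots=X_n=\mathbf{1}$. Any permutation $\sigma$ orders this constant tuple, and the key observation is that for the smallest index we have $E_{\sigma(1)}=\{\sigma(1),\ldots,\sigma(n)\}=N$, so that $m(E_{\sigma(1)})=m(N)=\mathbf{1}$ by the definition of an IV fuzzy measure. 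Hence the first argument of $G$ equals $F(\mathbf{1},\mathbf{1})$, which by assumption is $\mathbf{1}$.

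For case (i), since $G=f\circ Proj_1$ reads off precisely this first argument, we obtain $\mathbf{S}_{m}^{F,G}(\mathbf{1},\ldots,\mathbf{1})=f\big(F(\mathbf{1},\mathbf{1})\big)=f(\mathbf{1})=\mathbf{1}$, using $F(\mathbf{1},\mathbf{1})=\mathbf{1}$ and $f(\mathbf{1})=\mathbf{1}$. For case (ii), the inner $\vee$ forms the supremum of the entries $F(\mathbf{1},m(E_{\sigma(i)}))$; one of these entries is $F(\mathbf{1},\mathbf{1})=\mathbf{1}$, and since $\mathbf{1}$ is the greatest element of $L([0,1])$ with respect to any admissible order (as $X\leq_{spo}\mathbf{1}$ for every $X\in L([0,1])$), the supremum equals $\mathbf{1}$. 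Thus $\mathbf{S}_{m}^{F,G}(\mathbf{1},\ldots,\mathbf{1})=f(\mathbf{1})=\mathbf{1}$.

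There is no genuine obstacle here: the computation is routine once one notices $E_{\sigma(1)}=N$. The only subtlety worth flagging is the role of the monotonicity of $F$ in case (ii). It plays no part in the value computation itself — the presence of a single top entry already forces the join to be $\mathbf{1}$ — but it is required to secure well-definedness of the functional through Proposition~\ref{prop:welldm}(iii), which is why the hypothesis appears in the statement.
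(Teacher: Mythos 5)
Your proof is correct and follows exactly the route the paper takes for this family of boundary results (cf.\ Proposition~\ref{prop:boundary0}): well-definedness via Proposition~\ref{prop:welldm} (ii)--(iii), followed by the direct evaluation at $(\mathbf{1},\dots,\mathbf{1})$ using $E_{\sigma(1)}=N$ and $m(N)=\mathbf{1}$, which is the verification the paper leaves implicit as ``easy to check.'' Your closing observation is also accurate: the monotonicity of $F$ in the second variable is needed only to secure Condition (WDS), not for the value computation, since the entry $F(\mathbf{1},m(N))=\mathbf{1}$ already forces the join to be $\mathbf{1}$ under any admissible order.
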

\begin{proposition}\label{prop:boundary1Sym}
	An interval-valued Sugeno-like $FG$-functional $\mathbf{S}_{m}^{F,G}\colon \big(L([0,1])\big)^{n} \to L([0,1])$ fulfills condition $\mathbf{S}_{m}^{F,G}(\mathbf{1},\dots,\mathbf{1})=\mathbf{1}$ for any symmetric $m$ whenever $G(\mathbf{1},\dots,\mathbf{1})=\mathbf{1}$  and $F(\mathbf{1},Y)=\mathbf{1}$ for all $Y\in L([0,1])$.
\end{proposition}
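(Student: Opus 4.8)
The plan is to verify the claim by direct evaluation at the all-ones input, after first confirming well-definedness. Since $m$ is assumed symmetric, Proposition~\ref{prop:welldm}(i) guarantees that the triplet $(m,F,G)$ satisfies Condition (WDS), so the value $\mathbf{S}_{m}^{F,G}(\mathbf{1},\dots,\mathbf{1})$ does not depend on the permutation chosen to order the inputs, and the defining expression~\eqref{eq:defivsug} is unambiguous.

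Next I would instantiate~\eqref{eq:defivsug} at $(X_1,\dots,X_n)=(\mathbf{1},\dots,\mathbf{1})$. Because all inputs coincide, every permutation $\sigma$ on $N$ trivially satisfies $X_{\sigma(1)}\preceq\dots\preceq X_{\sigma(n)}$, so we may take, say, $\sigma=\mathrm{id}$. This yields
\begin{equation*}
\mathbf{S}_{m}^{F,G}(\mathbf{1},\dots,\mathbf{1}) = G\Big(F\big(\mathbf{1},m(E_{\sigma(1)})\big),\dots,F\big(\mathbf{1},m(E_{\sigma(n)})\big)\Big).
\end{equation*}
Applying the hypothesis $F(\mathbf{1},Y)=\mathbf{1}$ for all $Y\in L([0,1])$ to each of the $n$ arguments—where the second slot is the measure value $m(E_{\sigma(i)})\in L([0,1])$—collapses every argument of $G$ to $\mathbf{1}$, leaving $G(\mathbf{1},\dots,\mathbf{1})$. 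The remaining hypothesis $G(\mathbf{1},\dots,\mathbf{1})=\mathbf{1}$ then closes the argument.

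There is essentially no obstacle here: the statement is of the same \emph{easy to check} flavour as the companion boundary results, and I would mirror the wording of the proof of Proposition~\ref{prop:boundary0} for consistency. The one point deserving a remark is the role of symmetry: because $F$ and $G$ are otherwise arbitrary, the measure arguments $m(E_{\sigma(i)})$ attached to tied entries could differ across admissible orderings, so symmetry of $m$ is what makes the functional well-defined in the first place. The condition $F(\mathbf{1},Y)=\mathbf{1}$ then renders those differing second arguments irrelevant, which is precisely why—in contrast to the non-symmetric variant in Proposition~\ref{prop:boundary1}(ii)—no monotonicity assumption on $F$ in its second variable is needed.
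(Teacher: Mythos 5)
Your proof is correct and takes essentially the approach the paper intends: the paper states this proposition without proof as an ``easy to check'' companion to Proposition~\ref{prop:boundary0}, and your argument—well-definedness from Proposition~\ref{prop:welldm}(i) via symmetry of $m$, followed by direct evaluation at the all-ones input, collapsing every argument of $G$ to $\mathbf{1}$ using $F(\mathbf{1},Y)=\mathbf{1}$ and finishing with $G(\mathbf{1},\dots,\mathbf{1})=\mathbf{1}$—is precisely that check. Your closing observation, that symmetry of $m$ is what secures well-definedness for arbitrary $F$ and $G$ while the hypothesis $F(\mathbf{1},Y)=\mathbf{1}$ makes monotonicity of $F$ in the second variable unnecessary (in contrast to Proposition~\ref{prop:boundary1}(ii)), is accurate and a useful clarification the paper leaves implicit.
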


\begin{proposition}\label{prop:monotonicity}
	An interval-valued Sugeno-like $FG$-functional $\mathbf{S}_{m}^{F,G}\colon \big(L([0,1])\big)^{n} \to L([0,1])$ is non-decreasing for any $m$ whenever:
	\begin{enumerate}
		\item[(i)] $F(\cdot,\mathbf{1})$ is  non-decreasing,  $G=f\circ Proj_1$ and $f$ is non-decreasing; or
		\item[(ii)] $F$ is  non-decreasing,   $G=f\circ\vee$ and $f$ is non-decreasing.
	\end{enumerate}
\end{proposition}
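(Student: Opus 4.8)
The plan is to prove monotonicity directly from the definition~\eqref{eq:defivsug}, handling the two cases separately; in both, well-definedness of $\mathbf{S}_m^{F,G}$ is guaranteed by Proposition~\ref{prop:welldm} (items (ii) and (iii), respectively), so the functional value does not depend on the chosen ordering permutation. It suffices to show that whenever $X_k\preceq Y_k$ for every $k\in N$ one has $\mathbf{S}_m^{F,G}(X_1,\dots,X_n)\preceq \mathbf{S}_m^{F,G}(Y_1,\dots,Y_n)$. I fix an ordering $\sigma$ of $(X_1,\dots,X_n)$ and an ordering $\tau$ of $(Y_1,\dots,Y_n)$ and compare the two values. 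The recurring difficulty is that increasing the inputs changes the ordering permutation, hence changes which measure values $m(E_{\sigma(i)})$ are paired with which arguments, so one cannot argue slotwise; instead I will match each summand on the $X$-side to a dominating summand on the $Y$-side.

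For case (i), where $G=f\circ Proj_1$, the functional collapses: since $E_{\sigma(1)}=\{\sigma(1),\dots,\sigma(n)\}=N$ and $m(N)=\mathbf 1$, we get $\mathbf{S}_m^{F,G}(X_1,\dots,X_n)=f\big(F(X_{\sigma(1)},\mathbf 1)\big)$, where $X_{\sigma(1)}=\bigwedge_k X_k$ is simply the $\preceq$-least input. First I would note that the $\preceq$-minimum is monotone on the chain $(L([0,1]),\preceq)$: if $X_k\preceq Y_k$ for all $k$, then choosing the index $j$ that realizes $\bigwedge_k Y_k$ gives $\bigwedge_k X_k\preceq X_j\preceq Y_j=\bigwedge_k Y_k$. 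The claim then follows by composing this with the hypotheses that $F(\cdot,\mathbf 1)$ and $f$ are non-decreasing.

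For case (ii), where $G=f\circ\vee$, since $f$ is non-decreasing it is enough to prove the join inequality
\begin{equation*}
\bigvee_{i=1}^n F\big(X_{\sigma(i)},m(E_{\sigma(i)})\big)\ \preceq\ \bigvee_{i=1}^n F\big(Y_{\tau(i)},m(E_{\tau(i)})\big).
\end{equation*}
Because $\preceq$ is total, this reduces to dominating each left-hand summand. Fixing $i$, I would set $v=\bigwedge_{j\in E_{\sigma(i)}}Y_j$ and let $i'$ be the first position in the $\tau$-ordering with $Y_{\tau(i')}=v$. The two facts to check are $E_{\sigma(i)}\subseteq E_{\tau(i')}$ and $X_{\sigma(i)}\preceq Y_{\tau(i')}$. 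The first holds because $E_{\tau(i')}=\{\,j\in N: Y_j\succeq v\,\}$ (positions before $i'$ carry $Y$-values $\prec v$, those from $i'$ onward carry values $\succeq v$), while every $j\in E_{\sigma(i)}$ satisfies $Y_j\succeq v$ by the choice of $v$; the second holds because $Y_j\succeq X_j\succeq X_{\sigma(i)}$ for each $j\in E_{\sigma(i)}$, so their minimum $v=Y_{\tau(i')}$ is $\succeq X_{\sigma(i)}$. Monotonicity of the IV fuzzy measure then gives $m(E_{\sigma(i)})\preceq m(E_{\tau(i')})$, and since $F$ is non-decreasing, $F(X_{\sigma(i)},m(E_{\sigma(i)}))\preceq F(Y_{\tau(i')},m(E_{\tau(i')}))\preceq \bigvee_{i} F(Y_{\tau(i)},m(E_{\tau(i)}))$. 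Taking the join over $i$ yields the displayed inequality, and applying $f$ finishes the case.

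The routine parts are the monotonicity of $\bigwedge$ and the final compositions with $F$ and $f$; the genuine obstacle, and the step I would write out most carefully, is the summand-matching in case (ii)—in particular verifying $E_{\sigma(i)}\subseteq E_{\tau(i')}$ in the presence of ties among the $Y_j$. Invoking Proposition~\ref{prop:welldm}(iii), which says the value is independent of the chosen ordering $\tau$, lets me pin down $i'$ as the first $\tau$-position attaining the value $v$ without loss of generality, and it is precisely this choice that makes the upper set $E_{\tau(i')}$ coincide exactly with $\{\,j:Y_j\succeq v\,\}$.
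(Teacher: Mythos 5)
Your proof is correct. There is, however, nothing in the paper to compare it against: this proposition is stated without proof, and the adjacent monotonicity result (for symmetric $m$) is dispatched with ``The claim is easy to check,'' so your write-up supplies exactly the content that gloss hides. Case (i) is genuinely routine once one observes that $E_{\sigma(1)}=N$ and $m(N)=\mathbf{1}$, so the functional collapses to $f\bigl(F(\bigwedge_k X_k,\mathbf{1})\bigr)$ and monotonicity of the $\preceq$-minimum on a chain finishes it; this also explains why the hypothesis only concerns $F(\cdot,\mathbf{1})$. The real content is your case (ii): since raising the inputs can change the ordering permutation, a slotwise comparison fails, and your matching of each term $F\bigl(X_{\sigma(i)},m(E_{\sigma(i)})\bigr)$ with the term at $i'$, the first $\tau$-position attaining $v=\bigwedge_{j\in E_{\sigma(i)}}Y_j$, together with the identity $E_{\tau(i')}=\{j\in N: Y_j\succeq v\}$, is the right device: it yields $E_{\sigma(i)}\subseteq E_{\tau(i')}$, hence $m(E_{\sigma(i)})\preceq m(E_{\tau(i')})$ by monotonicity of the IV fuzzy measure, and the join inequality follows from monotonicity of $F$, totality of $\preceq$, and monotonicity of $f$. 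One small remark: you do not need Proposition~\ref{prop:welldm}(iii) to ``pin down'' $i'$ without loss of generality --- your identity $E_{\tau(i')}=\{j\in N:Y_j\succeq v\}$ holds for \emph{every} admissible ordering $\tau$ once $i'$ is chosen as the first position attaining $v$ within that $\tau$ (positions before $i'$ carry values $\prec v$ by minimality of $i'$ and totality of the order); Proposition~\ref{prop:welldm} is needed only so that $\mathbf{S}_{m}^{F,G}$ is well defined in the first place, which is where you also invoke it.
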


\begin{proposition}\label{prop:monotonicity}
	An interval-valued Sugeno-like $FG$-functional $\mathbf{S}_{m}^{F,G}\colon \big(L([0,1])\big)^{n} \to L([0,1])$ is non-decreasing for any symmetric $m$ whenever
	$G$ is non-decreasing and $F$ is non-decreasing in the first variable.
	
\end{proposition}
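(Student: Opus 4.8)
The plan is to use the symmetry of $m$ to rewrite the functional as a purely positional expression in the order statistics of the input, and then to compare the two values coordinate by coordinate.

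First I would record well-definedness: since $m$ is symmetric, Proposition~\ref{prop:welldm}(i) guarantees that $(m,F,G)$ satisfies Condition (WDS), so $\mathbf{S}_{m}^{F,G}$ is well defined and formula~\eqref{eq:defivsug} may be evaluated using any permutation that sorts the input. Next I would exploit symmetry to freeze the measure terms. For symmetric $m$ the value $m(E_{\sigma(i)})$ depends only on $|E_{\sigma(i)}|=n-i+1$, hence only on the position $i$; writing $c_i:=m(E_{\sigma(i)})$, every input $(X_1,\dots,X_n)$ sorted as $X_{(1)}\preceq\cdots\preceq X_{(n)}$ satisfies
$$\mathbf{S}_{m}^{F,G}(X_1,\dots,X_n)=G\big(F(X_{(1)},c_1),\dots,F(X_{(n)},c_n)\big),$$
with constants $c_1,\dots,c_n$ that do not depend on the input.

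The heart of the argument is the monotonicity of order statistics with respect to the total order $\preceq$: if $X_i\preceq Y_i$ for all $i\in N$, then $X_{(i)}\preceq Y_{(i)}$ for all $i$. I would prove this by fixing $i$ and taking $t=Y_{(i)}$. Because $\preceq$ is total, $X_{(i)}\preceq t$ holds if and only if at least $i$ of the values $X_1,\dots,X_n$ satisfy $X_j\preceq t$. Since $Y_{(1)},\dots,Y_{(i)}\preceq Y_{(i)}=t$, at least $i$ of the $Y_j$ are $\preceq t$, and from $X_j\preceq Y_j$ each such index also gives $X_j\preceq t$; hence at least $i$ of the $X_j$ are $\preceq t$, so $X_{(i)}\preceq Y_{(i)}$.

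Finally I would assemble the pieces: from $X_{(i)}\preceq Y_{(i)}$ and monotonicity of $F$ in its first argument we get $F(X_{(i)},c_i)\preceq F(Y_{(i)},c_i)$ for every $i$, and monotonicity of $G$ then yields $\mathbf{S}_{m}^{F,G}(X_1,\dots,X_n)\preceq\mathbf{S}_{m}^{F,G}(Y_1,\dots,Y_n)$, which is the claimed monotonicity. The main obstacle is the order-statistic step together with the indispensable role of symmetry: without it the measure attached to a sorted position would depend on which original index occupies that position, and since $(X_1,\dots,X_n)$ and $(Y_1,\dots,Y_n)$ need not be sorted by the same permutation, the term-by-term comparison would break down; symmetry is exactly what makes the measure value positional and legitimizes the pairing.
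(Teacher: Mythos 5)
Your proof is correct and follows the route the paper intends: the paper's own proof merely invokes Proposition~\ref{prop:welldm} for well-definedness and declares the monotonicity claim ``easy to check.'' What you supply --- the positional rewriting $c_i=m(E_{\sigma(i)})$ via symmetry of $m$, and the order-statistic comparison $X_{(i)}\preceq Y_{(i)}$ under the total order $\preceq$ proved by your counting argument --- is exactly the verification the paper leaves implicit, and it is sound.
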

\begin{proof}
	By Proposition~\ref{prop:welldm} the functional $\mathbf{S}_{m}^{F,G}$ is well-defined. The claim is easy to check.
\end{proof}

\begin{corollary}\label{prop:agg}
	An interval-valued Sugeno-like $FG$-functional $\mathbf{S}_{m}^{F,G}$ is an aggregation function for any $m$ whenever
	\begin{enumerate}
		\item[(i)]   $F(\cdot,\mathbf{1})$ is  non-decreasing, $F(\mathbf{0},\mathbf{1})=\mathbf{0}$,  $F(\mathbf{1},\mathbf{1})=\mathbf{1}$, $G=f\circ Proj_1$ and $f$ is non-decreasing with $f(\mathbf{0})=\mathbf{0}$, $f(\mathbf{1})=\mathbf{1}$; or
		\item[(ii)] $F(\mathbf{0},Y)=\mathbf{0}$ for all $Y\in L([0,1])$, $F(\mathbf{1},\mathbf{1})=\mathbf{1}$ and $F$ is non-decreasing,  and   $G=f\circ\vee$ and $f$ is non-decreasing with $f(\mathbf{0})=\mathbf{0}$,  $f(\mathbf{1})=\mathbf{1}$.
	\end{enumerate}
\end{corollary}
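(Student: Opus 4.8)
The plan is to observe that, by the definition of an IV aggregation function, the corollary amounts to verifying exactly three properties of $\mathbf{S}_{m}^{F,G}$: the lower boundary condition $\mathbf{S}_{m}^{F,G}(\mathbf{0},\dots,\mathbf{0})=\mathbf{0}$, the upper boundary condition $\mathbf{S}_{m}^{F,G}(\mathbf{1},\dots,\mathbf{1})=\mathbf{1}$, and monotonicity in each component with respect to $\preceq$. Each of these has already been characterised, for any $m$, in Propositions~\ref{prop:boundary0}, \ref{prop:boundary1} and \ref{prop:monotonicity} (the latter in its ``any $m$'' form with items (i) and (ii)). Hence the whole argument reduces to checking that the hypotheses gathered in cases (i) and (ii) of the corollary simultaneously feed the matching item of all three propositions. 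I would first record that well-definedness is not an issue: for $G=f\circ Proj_1$ it follows from Proposition~\ref{prop:welldm}(ii), and for $G=f\circ\vee$ from Proposition~\ref{prop:welldm}(iii), since in case (ii) $F$ non-decreasing entails $F$ non-decreasing in the second variable.

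For case (i), where $G=f\circ Proj_1$, I would match the hypotheses one by one. The assumptions $F(\mathbf{0},\mathbf{1})=\mathbf{0}$ and $f(\mathbf{0})=\mathbf{0}$ are precisely those of Proposition~\ref{prop:boundary0}(i), giving the lower boundary condition; the assumptions $F(\mathbf{1},\mathbf{1})=\mathbf{1}$ and $f(\mathbf{1})=\mathbf{1}$ are those of Proposition~\ref{prop:boundary1}(i), giving the upper boundary condition; and $F(\cdot,\mathbf{1})$ non-decreasing together with $f$ non-decreasing is exactly the hypothesis of Proposition~\ref{prop:monotonicity}(i), giving monotonicity. Assembling these three facts shows $\mathbf{S}_{m}^{F,G}$ is an IV aggregation function.

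For case (ii), where $G=f\circ\vee$, I would proceed analogously. The assumption $F(\mathbf{0},Y)=\mathbf{0}$ for all $Y\in L([0,1])$ with $f(\mathbf{0})=\mathbf{0}$ is the hypothesis of Proposition~\ref{prop:boundary0}(ii); the assumption $F(\mathbf{1},\mathbf{1})=\mathbf{1}$ together with $F$ non-decreasing (hence non-decreasing in the second variable) and $f(\mathbf{1})=\mathbf{1}$ is the hypothesis of Proposition~\ref{prop:boundary1}(ii); and $F$ non-decreasing with $f$ non-decreasing is the hypothesis of Proposition~\ref{prop:monotonicity}(ii). Combining these once more yields the three defining properties.

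Since the proof is essentially a bookkeeping assembly of results already proved, there is no genuine obstacle. The only point requiring care is to keep the two monotonicity notions distinct, namely ``$F$ non-decreasing'' versus ``$F$ non-decreasing in the second variable'', and to confirm that the hypotheses stated in each case are strong enough to invoke the appropriate item of all three propositions without a gap, in particular that the fuller monotonicity of $F$ demanded in case (ii) simultaneously supplies both well-definedness and the upper boundary condition.
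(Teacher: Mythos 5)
Your proof is correct and matches the paper's (implicit) argument: the corollary is stated in the paper without proof precisely because it is the assembly of Propositions~\ref{prop:welldm}, \ref{prop:boundary0}, \ref{prop:boundary1} and \ref{prop:monotonicity}, with the hypotheses of cases (i) and (ii) chosen to feed the corresponding items of each. Your added care about ``$F$ non-decreasing'' versus ``$F$ non-decreasing in the second variable'' (for well-definedness and the upper boundary condition in case (ii)) is exactly the right bookkeeping and introduces no gap.
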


\begin{corollary}\label{prop:aggsym}
	An interval-valued Sugeno-like $FG$-functional $\mathbf{S}_{m}^{F,G}$ is an aggregation function for any symmetric $m$ whenever
	$G$ is  non-decreasing with $G(\mathbf{0},\dots,\mathbf{0})=\mathbf{0}$, $G(\mathbf{1},\dots,\mathbf{1})=\mathbf{1}$  and $F$ is non-decreasing in the first variable with
	$F(\mathbf{0},Y)=\mathbf{0}$ 
	and $F(\mathbf{1},Y)=\mathbf{1}$ for all $Y\in L([0,1])$.
\end{corollary}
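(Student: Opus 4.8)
The plan is to recognize that this statement is a direct assembly of the three defining properties of an interval-valued aggregation function, each of which has already been established separately for symmetric measures earlier in this section. Recall that a mapping $M\colon \li^n\to\li$ is an IV aggregation function with respect to $\preceq$ precisely when it satisfies the two boundary conditions $M(\mathbf{0},\dots,\mathbf{0})=\mathbf{0}$ and $M(\mathbf{1},\dots,\mathbf{1})=\mathbf{1}$ and is increasing in each component. So I would verify these three requirements one at a time, reading each off from the matching proposition.

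First, before checking any of them, I would note that since $m$ is symmetric, Proposition~\ref{prop:welldm}(i) guarantees that the triplet $(m,F,G)$ satisfies Condition (WDS); hence $\mathbf{S}_{m}^{F,G}$ is a well-defined function on $\big(L([0,1])\big)^{n}$ and the boundary and monotonicity claims below are meaningful regardless of the choice of ordering permutation $\sigma$.

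Next, I would dispatch the two boundary conditions and monotonicity by citation. The hypotheses $G(\mathbf{0},\dots,\mathbf{0})=\mathbf{0}$ together with $F(\mathbf{0},Y)=\mathbf{0}$ for all $Y\in L([0,1])$ are exactly those of Proposition~\ref{prop:boundary0Sym}, which yields $\mathbf{S}_{m}^{F,G}(\mathbf{0},\dots,\mathbf{0})=\mathbf{0}$. Symmetrically, $G(\mathbf{1},\dots,\mathbf{1})=\mathbf{1}$ with $F(\mathbf{1},Y)=\mathbf{1}$ for all $Y\in L([0,1])$ are the hypotheses of Proposition~\ref{prop:boundary1Sym}, giving $\mathbf{S}_{m}^{F,G}(\mathbf{1},\dots,\mathbf{1})=\mathbf{1}$. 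Finally, the assumptions that $G$ is non-decreasing and $F$ is non-decreasing in the first variable are precisely the hypotheses of the monotonicity proposition for symmetric measures (Proposition~\ref{prop:monotonicity}), which makes $\mathbf{S}_{m}^{F,G}$ non-decreasing in each argument with respect to $\preceq$. The three conclusions together are exactly the three defining properties, so the corollary follows.

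The only real content is to confirm that the hypotheses of this corollary are precisely the union of the hypotheses of the three invoked results, with nothing left over, so there is no genuine obstacle beyond this bookkeeping. If one instead wanted a self-contained argument rather than citing the propositions, the monotonicity step would be the delicate one: when a single input $X_j$ is increased, one must check that the induced change in the ordering permutation $\sigma$ interacts correctly with the values $m(E_{\sigma(i)})$, so that each argument $F\big(X_{\sigma(i)},m(E_{\sigma(i)})\big)$ of $G$ can only increase. This is exactly where the symmetry of $m$ does the work, since it neutralizes the reindexing of the nested sets $E_{\sigma(i)}$ caused by ties and reorderings.
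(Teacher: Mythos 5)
Your proof is correct and matches the paper's intended argument exactly: the corollary is meant to follow by combining Proposition~\ref{prop:boundary0Sym}, Proposition~\ref{prop:boundary1Sym}, and the monotonicity proposition for symmetric $m$, with well-definedness guaranteed by Proposition~\ref{prop:welldm}(i). Your closing observation about how the symmetry of $m$ neutralizes the reindexing of the sets $E_{\sigma(i)}$ under monotone perturbations is a sound account of why the cited monotonicity result holds, and is consistent with the paper's treatment.
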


\begin{example}
	Let $G(X_1,\dots,X_n)=\frac 1n\sum_{i=1}^n X_i$; $F(X,Y)=X^2\cdot Y+X\cdot (1-Y)$ and $m(A)=\left(\frac{|A|}{n}\right)^2$ for $A\subseteq N$. Then 
	$S^{F,G}_m(X_1,\dots,X_n)=\frac 1n\sum_{i=1}^n X_i+\frac 1n\sum_{i=1}^n \left(\frac{n-i}{n}\right)^2\left(X_i^2-X_i\right)$ which is   an aggregation function according to Corollary \ref{prop:aggsym}.
\end{example}

\subsection{Giving back the fuzzy measure}
Recall that one of the important properties of the standard Sugeno integral is that it gives back the fuzzy measure, i.e. $\mathbf{Su}_m(\1_E)=m(E)$ for all $E\subseteq N$, where $\1_E$ is the indicator of the set $E$.  A modification of this property for an interval-valued Sugeno-like $FG$-functional $\mathbf{S}_{m}^{F,G}$ can be formulated as follows: 

$\mathbf{S}_{m}^{F,G}(\1_E)=F\left(\mathbf{1},m(E)\right)$ for all $E\subseteq N$, where $\1_E\in L([0,1])^n$ is the interval-valued indicator of the set $E$ defined by $\1_E(i)=\mathbf{1}$   if  $i\in E$ and $\1_E(i)=\mathbf{0}$ otherwise.

Obviously, this property cannot be satisfied  in the case of $G=f\circ Proj_1$ nor in the case when we consider only symmetric measures and general $G$'s. 

\begin{proposition}\label{prop:givingcap}
	An interval-valued Sugeno-like $FG$-functional $\mathbf{S}_{m}^{F,\vee}$ gives back the fuzzy measure for any $m$ whenever 
	$F(\mathbf{0},\mathbf{1})=\mathbf{0}$ and $F$ is non-decreasing in the second variable.
\end{proposition}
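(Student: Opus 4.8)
The plan is to evaluate $\mathbf{S}_{m}^{F,\vee}$ directly on the interval-valued indicator $\1_E$ and show that the outer maximum collapses onto the single term $F(\mathbf{1},m(E))$. First I would record that the functional is well defined here: by Corollary~\ref{cor:welldm}(i), since $F$ is non-decreasing in the second variable, the triplet $(m,F,\vee)$ satisfies Condition (WDS) for every $m$, so $\mathbf{S}_{m}^{F,\vee}(\1_E)$ does not depend on the ordering permutation and I am free to choose a convenient one. Fix $E\subseteq N$ with $|E|=k\geq 1$. Because $\mathbf{0}\preceq\mathbf{1}$ in every admissible order, I pick $\sigma$ that lists the $n-k$ indices outside $E$ first (each coordinate equal to $\mathbf{0}$) and the $k$ indices of $E$ last (each coordinate equal to $\mathbf{1}$); this is a legitimate $\preceq$-increasing arrangement of $\1_E$.

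Next I split the $n$ aggregated arguments into two blocks. For $i\leq n-k$ one has $X_{\sigma(i)}=\mathbf{0}$, and since $m(E_{\sigma(i)})\preceq\mathbf{1}$ while $F$ is non-decreasing in its second argument, $F(\mathbf{0},m(E_{\sigma(i)}))\preceq F(\mathbf{0},\mathbf{1})=\mathbf{0}$; as every value lies in $\li$ and is therefore $\succeq\mathbf{0}$, each such term equals exactly $\mathbf{0}$. For $i\geq n-k+1$ one has $X_{\sigma(i)}=\mathbf{1}$ and $E_{\sigma(i)}=\{\sigma(i),\dots,\sigma(n)\}\subseteq E$, these sets forming a descending chain whose largest element is $E_{\sigma(n-k+1)}=E$. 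Monotonicity of the IV fuzzy measure gives $m(E_{\sigma(i)})\preceq m(E)$, so non-decreasingness of $F$ in the second variable yields $F(\mathbf{1},m(E_{\sigma(i)}))\preceq F(\mathbf{1},m(E))$, with equality at $i=n-k+1$.

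Finally I apply $\vee$ to all $n$ terms: the first block contributes only $\mathbf{0}$, and the second block is dominated by its top term $F(\mathbf{1},m(E))$, which is itself $\succeq\mathbf{0}$; hence $\mathbf{S}_{m}^{F,\vee}(\1_E)=F(\mathbf{1},m(E))$, exactly the required recovery identity. The single load-bearing observation, and the step I would justify most carefully, is the identification $E_{\sigma(n-k+1)}=E$: the subset accompanying the \emph{first} coordinate that switches from $\mathbf{0}$ to $\mathbf{1}$ is precisely $E$, and this is what makes $F(\mathbf{1},m(E))$ emerge; the remainder is merely a squeeze between $\mathbf{0}$ and $\mathbf{1}$ together with the two monotonicity facts. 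I would close by checking the boundary case $E=\emptyset$ separately, where $\1_E=(\mathbf{0},\dots,\mathbf{0})$ forces the left-hand side to $\mathbf{0}$ and the identity reduces to verifying $F(\mathbf{1},m(\emptyset))=\mathbf{0}$.
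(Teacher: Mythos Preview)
Your argument is correct and mirrors the paper's proof almost exactly: both split $\1_E$ into the $\mathbf{0}$-block and the $\mathbf{1}$-block, use non-decreasingness of $F$ in the second variable so that the first block collapses to $F(\mathbf{0},\mathbf{1})=\mathbf{0}$ and the second to $F(\mathbf{1},m(E_{\sigma(n-k+1)}))=F(\mathbf{1},m(E))$, and then take the join. Your added remarks on well-definedness via Corollary~\ref{cor:welldm}(i) and the explicit identification $E_{\sigma(n-k+1)}=E$ are exactly the pieces the paper uses implicitly.

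Your closing observation on $E=\emptyset$ is well spotted: the hypotheses do \emph{not} force $F(\mathbf{1},\mathbf{0})=\mathbf{0}$, so the identity $\mathbf{S}_{m}^{F,\vee}(\1_\emptyset)=F(\mathbf{1},m(\emptyset))$ need not hold in general. The paper's proof tacitly works with $k\geq 1$ and does not address this boundary case either, so you have identified a small gap in the statement rather than in your argument.
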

\begin{proof}
	Let $E$ be subset of $N$ with $\mathrm{card}(E)=k$. Then $(\1_E)_{\sigma(i)}=\mathbf{0}$ for $i=1,\dots,n-k$,
	$(\1_E)_{\sigma(i)}=\mathbf{1}$ for $i=n-k+1,\dots,n$ and $E_{\sigma(n-k+1)}=E$, where $\sigma$ is a permutation ordering the vector $\1_E$ non-decreasingly. Taking into account non-decreasingness of $F$ in the second variable, we get
	\begin{multline}
		\mathbf{S}_{m}^{F,\vee}(\1_E)=
		\bigvee\limits_{i=1}^{n-k} F(\mathbf{0}, m(E_{\sigma(i)}))\vee\bigvee\limits_{i=n-k+1}^n F(\mathbf{1}, m(E_{\sigma(i)}))=
		F(\mathbf{0},\mathbf{1})\vee F(\mathbf{1}, m(E)),
	\end{multline}
	and the claim follows.
\end{proof}

\begin{table}
\scalebox{0.7}{
   \hskip-0.7cm 
	\setlength\extrarowheight{8.5pt}
    \begin{tabular}{c||c|c |c}
        Property & $G=f\circ \vee$, $F \nearrow$ in sec.var.  & $G=f\circ Proj_1$ & $m$ symmetric \\
        \hline
        \hline
       
        $\mathbf{S}_{m}^{F,G}(\mathbf{0},\ldots,\mathbf{0})=\mathbf{0}$ & $F(\mathbf{0},\cdot)=\mathbf{0}$, $f(\mathbf{0})=\mathbf{0}$  & $F(\mathbf{0},\mathbf{1})=\mathbf{0}$,  $f(\mathbf{0})=\mathbf{0}$ & $F(\mathbf{0},\cdot)=\mathbf{0}$, $G(\mathbf{0},\dots,\mathbf{0})=\mathbf{0}$\\
        \hline
        $\mathbf{S}_{m}^{F,G}(\mathbf{1},\ldots,\mathbf{1})=\mathbf{1}$  & $F(\mathbf{1},\mathbf{1})=\mathbf{1}$, $f(\mathbf{1})=\mathbf{1}$  &$F(\mathbf{1},\mathbf{1})=\mathbf{1}$,  $f(\mathbf{1})=\mathbf{1}$ & $F(\mathbf{1},\cdot)=\mathbf{1}$, $G(\mathbf{1},\dots,\mathbf{1})=\mathbf{1}$\\
        \hline
        Non-decreasingness & $f\nearrow$   &$f\nearrow$   & $G\nearrow$,  $F\nearrow$ in first var. \\
        \hline
        Aggregation function &the above lines in this table   & the above lines in this table  &the above lines in this table \\
        \hline
        Idempotency & $f=\mathrm{id}$  & $f=\mathrm{id}$ & $G$ idempotent, $F(X,Y)=X$\\
        \hline
        Internality & $f=\mathrm{id}$, $F(X,{\bf 1})=X$   & $f=\mathrm{id}$, $F(X,{\bf 1})=X$  &--\\
        \hline
        Pos. homogeneity & $f$ pos. hom., $F$ pos. hom.  &$f$ pos. hom., $F$ pos. hom.   &$G$ pos. hom., $F$ pos. hom.  \\
        \hline
        Min-homogeneity &  $f$ min-hom., $F$ min-hom.    &  $G$ min-hom., $F$ min-hom.  & $G$ min-hom., $F$ min-hom.  \\
        \hline
        Comonotone maxitivity & $F$ com. maxitive    & $F$ com. maxitive    & $G$ com.max., $F$ com. max. \\
        \hline
        Giving back the fuzzy measure & $f=\mathrm{id}$,  $F(\mathbf{0},\mathbf{1})=\mathbf{0}$  &  --&--\\
        \hline
    \end{tabular}
    }
	    	\caption{Summary of sufficient conditions for particular properties of the Sugeno-like $FG$-functionals in three cases of Prop.\ref{prop:welldm}. Symbol $\nearrow$ denotes non-decreasingness.}
	\label{tab:propsug}
\end{table}
\section{Construction of IV Sugeno-like FG-functional} \label{sec:iv_sugeno_construct}
 In this section we construct IV Sugeno-like $FG$-functionals with respect to the orders $\leq_{\alpha+}$ and $\leq_{\alpha-}$. Since the width of the input intervals can be regarded as the measure of data uncertainty represented by them, it is desirable to take it into account in our construction. We use the similar approach as proposed in \cite{asiain2017negations}.

\begin{definition}[\cite{asiain2017negations}]
	Let $c\in [0,1]$, $\alpha\in [0,1]$ and $Z=[x,y]\in L([0,1])$. We denote  by $w(Z)=y-x$ and by $d_{\alpha}(c)$ the maximal possible width of an interval $X\in  L([0,1])$ such that $K_{\alpha}(X)=c$. Moreover, we define
	\begin{equation}
		\lambda_{\alpha}(Z)=\frac{w(Z)}{d_{\alpha}(K_{\alpha}(Z))}
	\end{equation}
	where we set $\frac{0}{0}=1$.
\end{definition}

\begin{proposition}[\cite{asiain2017negations}]
	For all $\alpha\in [0,1]$ and $X\in L([0,1])$ it holds that
	\begin{equation}
	d_{\alpha}(K_{\alpha}(X))=\wedge\left(\frac{K_{\alpha}(X)}{\alpha}, \frac{1- K_{\alpha}(X)}{1-\alpha}\right).
		\end{equation}
	where we set $\frac{r}{0}=1$ for all $r\in[0,1]$.
\end{proposition}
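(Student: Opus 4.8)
The plan is to read off $d_\alpha(K_\alpha(X))$ as the optimal value of an elementary linear program: fixing the target $c := K_\alpha(X) = (1-\alpha)\underline{X}+\alpha\overline{X}$, I maximize the width $w = \overline{X}-\underline{X}$ over all pairs $(\underline{X},\overline{X})$ subject to the affine constraint $(1-\alpha)\underline{X}+\alpha\overline{X}=c$ together with the box constraints $0\le\underline{X}\le\overline{X}\le 1$. The maximal such $w$ is by definition $d_\alpha(c)$, so the whole statement reduces to computing this maximum explicitly.

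First I would treat the generic case $\alpha\in(0,1)$. Using the constraint to eliminate $\overline{X}=(c-(1-\alpha)\underline{X})/\alpha$ and substituting gives $w=(c-\underline{X})/\alpha$, which is strictly decreasing in $\underline{X}$; hence the width is maximized by pushing $\underline{X}$ to the smallest feasible value. The box constraints impose $\underline{X}\ge 0$ and, through $\overline{X}\le 1$, also $\underline{X}\ge(c-\alpha)/(1-\alpha)$, so the minimal feasible $\underline{X}$ equals $\max\{0,(c-\alpha)/(1-\alpha)\}$ (one checks that $\overline{X}\ge\underline{X}$ holds automatically at this point since $c\le 1$).

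Next I would split according to the sign of $c-\alpha$. If $c\le\alpha$, then $\underline{X}=0$ is feasible with $\overline{X}=c/\alpha\le 1$, giving maximal width $c/\alpha$; if $c>\alpha$, then $\underline{X}=(c-\alpha)/(1-\alpha)$ with $\overline{X}=1$, giving maximal width $1-(c-\alpha)/(1-\alpha)=(1-c)/(1-\alpha)$. To fuse the two cases into a single minimum I would verify the crossover identity: $c/\alpha\le(1-c)/(1-\alpha)$ is equivalent to $c(1-\alpha)\le\alpha(1-c)$, i.e. to $c\le\alpha$. Consequently $d_\alpha(c)=\wedge\big(c/\alpha,(1-c)/(1-\alpha)\big)$ in both regimes.

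Finally I would dispose of the endpoints $\alpha\in\{0,1\}$ directly via the convention $r/0=1$: for $\alpha=0$ one has $c=\underline{X}$ and maximal width $1-c=\wedge(1,1-c)$, while for $\alpha=1$ one has $c=\overline{X}$ and maximal width $c=\wedge(c,1)$, both matching the formula. The optimization itself is essentially trivial, since after eliminating the constraint the objective is linear in a single variable; the only point requiring care — and thus the main (mild) obstacle — is the bookkeeping of the two cases at the crossover $c=\alpha$ together with the division-by-zero conventions at the endpoints.
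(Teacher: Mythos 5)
Your proof is correct. Note that the paper itself gives no proof of this proposition: it is quoted from the reference on interval-valued negations, so there is no in-paper argument to compare against; your write-up in effect supplies the omitted elementary verification. The key steps all check out: eliminating $\overline{X}$ via the constraint gives $w=(c-\underline{X})/\alpha$, decreasing in $\underline{X}$, so the maximum is attained at $\underline{X}=\max\{0,(c-\alpha)/(1-\alpha)\}$; the feasibility check $\underline{X}\le c$ at that point is exactly the inequality $(c-\alpha)/(1-\alpha)\le c \Leftrightarrow c\le 1$ you state; the crossover equivalence $c/\alpha\le(1-c)/(1-\alpha)\Leftrightarrow c\le\alpha$ is what fuses the two cases into the single minimum; and the conventions $r/0=1$ make the endpoint cases $\alpha\in\{0,1\}$ (where $c=\underline{X}$, resp.\ $c=\overline{X}$, and the maximal width is $1-c$, resp.\ $c$) agree with the displayed formula.
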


Let $\alpha\in[0,1]$ and let $M_1, M_2:[0,1]^n\to[0,1]$ be $n$-ary functions. We define an interval function $M_{IV}:(L([0,1]))^n  \rightarrow L([0,1])$ as follows:
\begin{equation}\label{eq:conmiv}
M_{IV}(X_1,\ldots,X_n)=Y, \quad\textrm{ where } \qquad
\begin{cases}
K_{\alpha}(Y)=M_1\left(K_{\alpha}(X_1),\ldots,K_{\alpha}(X_n)\right),
\\[0.1cm]
\lambda_{\alpha}(Y)=M_2\left(\lambda_{\alpha}(X_1),\ldots,\lambda_{\alpha}(X_n)\right),
\end{cases}
\end{equation}
for all $X_1,\ldots,X_n\in L([0,1])$. It was proved in \cite{asiain2017negations} that $M_{IV}$ is an IV aggregation function with respect to $\leq_{\alpha+}$ and $\leq_{\alpha-}$ whenever $M_1, M_2$ be aggregation functions where $M_1$ is strictly increasing. Now we show that the obtained $M_{IV}$ preserves some properties of the functions $M_1$ and $M_2$, in particular the properties of $t$-norms.

\begin{proposition}\label{prop:propmiv}
	Let $\alpha\in[0,1]$, let $M_1, M_2:[0,1]^n\to[0,1]$ be functions and let $M_{IV}:(L([0,1]))^n  \rightarrow L([0,1])$ be an interval function given by Equation~\eqref{eq:conmiv}. Then the following hold:
	\begin{enumerate}
		\item[(i)] $M_{IV}(\mathbf{0},\ldots,\mathbf{0})=\mathbf{0}$ whenever $M_1(0,\ldots,0)=0$.
		\item[(ii)] $M_{IV}(\mathbf{1},\ldots,\mathbf{1})=\mathbf{1}$ whenever $M_1(1,\ldots,1)=1$.
		\item[(iii)] For $n=2$ $M_{IV}(\mathbf{0},Y)=\mathbf{0}$ for all $Y\in L([0,1])$ whenever $M_1(0,y)=0$ for all $y\in [0,1]$.
		\item[(iv)] For $n=2$ $M_{IV}(\mathbf{0},Y)=Y$ for all $Y\in L([0,1])$ whenever $M_1(0,y)=y$ and $M_2(1,y)=y$ for all $y\in [0,1]$.
		\item[(v)] For $n=2$ $M_{IV}(\mathbf{1},Y)=Y$ for all $Y\in L([0,1])$ whenever $M_1(1,y)=y$ for all $y\in [0,1]$.
		\item[(vi)] $M_{IV}$ is symmetric whenever $M_1$ and $M_2$ are symmetric.
		\item[(vii)] $M_{IV}$ is associative whenever $M_1$ and $M_2$ are associative.
		\item[(viii)] $M_{IV}$ is non-decreasing whenever $M_1,M_2$ are non-decreasing and $M_1$ is strictly increasing on
			\begin{equation}
		\{(x_1,\ldots,x_n)\in[0,1]^n\,|\,M_1(x_1,\ldots,x_n)\notin\{0,1\}\}.
			\end{equation}
		\item[(ix)] Let $k\in N$, then $M_{IV}$ is non-decreasing in the $k$-th variable whenever $M_1,M_2$ are non-decreasing in the $k$-th variable and $M_1$ is strictly increasing in the $k$-th variable on
			\begin{equation}
		\{(x_1,\ldots,x_n)\in[0,1]^n\,|\,M_1(x_1,\ldots,x_n)\notin\{0,1\}\}.
			\end{equation}
	\end{enumerate}
\end{proposition}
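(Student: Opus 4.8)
The unifying idea I would use is that, for $0<\alpha<1$, the map $X\mapsto(K_{\alpha}(X),\lambda_{\alpha}(X))$ is a bijection from $L([0,1])$ onto $[0,1]^2$: given the pair one recovers $w(Y)=\lambda_{\alpha}(Y)\,d_{\alpha}(K_{\alpha}(Y))$ and then $\underline{Y}=K_{\alpha}(Y)-\alpha w(Y)$, $\overline{Y}=K_{\alpha}(Y)+(1-\alpha)w(Y)$. Consequently $M_{IV}(X_1,\ldots,X_n)$ is completely and unambiguously determined by the two numbers $M_1(K_{\alpha}(X_1),\ldots,K_{\alpha}(X_n))$ and $M_2(\lambda_{\alpha}(X_1),\ldots,\lambda_{\alpha}(X_n))$. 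The plan is to dispose of items (i)--(vii) uniformly by this ``decoding'', and then to isolate the monotonicity statements (viii)--(ix), which genuinely need a different argument.

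For (i)--(vii) I would just compute the two coordinates of the relevant inputs and read off the claim. The one point worth recording is the boundary behaviour of $d_{\alpha}$: since $d_{\alpha}(0)=d_{\alpha}(1)=0$ for $0<\alpha<1$, whenever the first coordinate $M_1(\ldots)$ equals $0$ or $1$ the output width is forced to $0$ and the output is the corresponding degenerate interval, irrespective of $M_2$; this immediately yields (i), (ii) and (iii). For the neutral-element statements (iv) and (v) I would invoke the convention $\frac{0}{0}=1$, which gives $\lambda_{\alpha}(\mathbf{0})=\lambda_{\alpha}(\mathbf{1})=1$, so that $M_2$ is evaluated at its neutral argument. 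Symmetry (vi) and associativity (vii) follow because, by construction, $K_{\alpha}$ and $\lambda_{\alpha}$ of an $M_{IV}$-output are exactly $M_1$ and $M_2$ of the corresponding coordinates of the inputs, so the symmetry or associativity of $M_1$ and $M_2$ transfers coordinatewise.

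The core of the proof is (ix); (viii) then follows by applying (ix) to each variable, since joint strict monotonicity of $M_1$ on $\{M_1\notin\{0,1\}\}$ entails strict monotonicity in each single variable there. The structural fact I would establish first is that both $\leq_{\alpha+}$ and $\leq_{\alpha-}$ are lexicographic in the pair $(K_{\alpha},\lambda_{\alpha})$: from the identity $K_{\beta}(X)=K_{\alpha}(X)+(\beta-\alpha)w(X)$ one sees that, under a tie $K_{\alpha}(X)=K_{\alpha}(Y)$, the secondary comparison reduces to comparing widths, hence $\lambda_{\alpha}$, ascending for $\leq_{\alpha+}$ and descending for $\leq_{\alpha-}$. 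Now fix all arguments but the $k$-th and take $X_k\preceq X_k'$, writing $Y,Y'$ for the outputs. If $K_{\alpha}(X_k)=K_{\alpha}(X_k')$, then $K_{\alpha}(Y)=K_{\alpha}(Y')$ and the conclusion is immediate from the monotonicity of $M_2$ in the $k$-th variable, the direction of the width comparison matching the order.

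The delicate case, and the main obstacle, is $K_{\alpha}(X_k)<K_{\alpha}(X_k')$. Monotonicity of $M_1$ gives $K_{\alpha}(Y)\le K_{\alpha}(Y')$; if this is strict we are done already in the first lexicographic coordinate, no matter what the widths do. The problem is the residual possibility $K_{\alpha}(Y)=K_{\alpha}(Y')$, where one would be forced to compare widths about which the hypothesis $X_k\preceq X_k'$ says nothing. This is exactly where strict monotonicity of $M_1$ off $\{0,1\}$ enters: an equality $M_1(\ldots,K_{\alpha}(X_k),\ldots)=M_1(\ldots,K_{\alpha}(X_k'),\ldots)$ with $K_{\alpha}(X_k)<K_{\alpha}(X_k')$ can only happen when the common value lies in $\{0,1\}$, for otherwise both evaluation points would lie in the region where $M_1$ is strictly increasing in the $k$-th variable and the value would increase strictly. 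A common value $0$ or $1$ means $K_{\alpha}(Y)=K_{\alpha}(Y')\in\{0,1\}$, whence $d_{\alpha}(K_{\alpha}(Y))=0$ pins $w(Y)=w(Y')=0$ and $Y=Y'$, so $Y\preceq Y'$ holds as an equality. This collapse of the width at the endpoints, valid precisely for $0<\alpha<1$, is the mechanism that lets one weaken ``strictly increasing everywhere'' to ``strictly increasing off $\{0,1\}$''; the argument is identical for $\leq_{\alpha-}$, only the orientation of the width comparison in the tie case is reversed and is absorbed by the monotonicity of $M_2$.
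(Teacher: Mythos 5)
Your argument is correct in substance and, where it overlaps with the paper, uses the same mechanism: items (iii)--(vii) are proved in the paper exactly as you do, by reading off the coordinates $K_\alpha$ and $\lambda_\alpha$ of the output interval (the convention $\lambda_\alpha(\mathbf{0})=\lambda_\alpha(\mathbf{1})=1$ for the neutral-element items, coordinatewise transfer of symmetry and associativity, and the collapse of the interval when the first coordinate is $0$ or $1$). The genuine difference lies in (i), (ii), (viii) and (ix): the paper does not actually prove these --- (i), (ii) and (viii) are delegated to \cite[Theorem 3.16]{bustince2013}, and (ix) is dismissed as straightforward --- whereas you supply a complete, self-contained argument. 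Your proof of (ix) is the mathematical core and it is sound: the lexicographic description of $\leq_{\alpha+}$ and $\leq_{\alpha-}$ in the pair $(K_\alpha,\lambda_\alpha)$ via the identity $K_\beta(X)=K_\alpha(X)+(\beta-\alpha)w(X)$; the remark that the strictness region $\{M_1\notin\{0,1\}\}$ is a preimage under $M_1$, so a tie with common value outside $\{0,1\}$ places \emph{both} evaluation points inside it and contradicts strictness; and the width collapse $d_\alpha(0)=d_\alpha(1)=0$, which settles the residual tie by forcing $Y=Y'$. This makes explicit exactly why strict monotonicity can be weakened to strict monotonicity off $\{0,1\}$, a point the paper leaves entirely to the citation; the reduction of (viii) to (ix) by varying one coordinate at a time is also fine.

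Two caveats. First, you work with $0<\alpha<1$ while the proposition allows $\alpha\in[0,1]$. This is really a defect of the statement, not of your proof: at the endpoints the collapse mechanism fails ($d_0(0)=1$ and $d_1(1)=1$), and indeed for $\alpha=0$ one computes $M_{IV}(\mathbf{0},\ldots,\mathbf{0})=[0,M_2(0,\ldots,0)]$, so item (i) itself fails whenever $M_2(0,\ldots,0)>0$; the paper's citation-based proof is silent on this as well, but you should state the restriction $\alpha\in\,]0,1[$ as a standing hypothesis rather than leave it implicit. Second, in item (v) your phrase ``so that $M_2$ is evaluated at its neutral argument'' silently assumes $M_2(1,y)=y$, which is hypothesized in (iv) but not in (v); without it, (v) is false as stated (take $M_1(x,y)=xy$, $M_2(x,y)=\frac{xy}{2}$ and any non-degenerate $Y$ with $K_\alpha(Y)\in\,]0,1[$: the output has half the width of $Y$). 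The paper's proof of (v) (``similar to that of (iv)'') makes the same unacknowledged leap, so this is a flaw you inherited from the statement rather than introduced, but it deserves to be flagged explicitly when the argument is written out.
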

\begin{proof}
	The items (i), (ii) and (viii) follows from \cite[Theorem 3.16]{bustince2013}. With respect to (iv) observe:
		\begin{equation}
	K_{\alpha}\big(M_{IV}(\mathbf{0},Y)\big) = M_1\big(K_{\alpha}(\mathbf{0}),K_{\alpha}(Y)\big) = M_1\big(0,K_{\alpha}(Y)\big) = K_{\alpha}(Y)
		\end{equation}
	and
		\begin{equation}
	\lambda_{\alpha}\big(M_{IV}(\mathbf{0},Y)\big) = M_2\big(\lambda_{\alpha}(\mathbf{0}),\lambda_{\alpha}(Y)\big) = M_1\big(1,\lambda_{\alpha}(Y)\big) = \lambda_{\alpha}(Y).
		\end{equation}
	The proofs of (iii) and (v) are similar to that of (iv) observing that if $K_{\alpha}(Y)=0$ (or $K_{\alpha}(Y)=1$), then $Y=\mathbf{0}$ (or $Y=\mathbf{1}$) regardless the value of $\lambda_{\alpha}(Y)$. The items (vi) and (ix) are straightforward. Finally, with respect to (vii), let $U=M_{IV}\big(M_{IV}(X,Y),Z\big)$ and $V=M_{IV}\big(X,M_{IV}(Y,Z)\big)$. Then, by the associativity of $M_1$, we have
	\begin{multline}
			K_{\alpha}(U) = M_1\Big(M_1\big(K_{\alpha}(X),K_{\alpha}(Y)\big),K_{\alpha}(Z)\Big) =\\ M_1\Big(K_{\alpha}(X),M_1\big(K_{\alpha}(Y),K_{\alpha}(Z)\big)\Big) = K_{\alpha}(V)
	\end{multline}
	and similarly, by the associativity of $M_2$, also $\lambda_{\alpha}(U)=\lambda_{\alpha}(V)$.
\end{proof}

\begin{remark}
	It is worth to put out that, taking $M_1=\wedge$ (i.e., minimum in $[0,1]$), the induced $M_{IV}$ is not minimum in $L([0,1])$ for any $M_2$. In particular, it is easy to check that since minimum in $[0,1]$ is not strictly increasing, the monotonicity of $M_{IV}$ is violated. Hence, $\mathbf{S}_{m}^{\wedge,\vee}\neq \mathbf{S}_{m}^{M_{IV},\vee}$ if $M_{IV}$ is induced by $M_1=\wedge$.
\end{remark}

\begin{corollary}
	Under the assumptions of Proposition~\ref{prop:propmiv}, if $n=2$ then the following hold:
	\begin{enumerate}
		\item[(i)] $M_{IV}$ is an IV $t$-norm whenever $M_1$ is a strictly increasing $t$-norm and $M_2$ is a commutative, associative and non-decreasing function.
		\item[(ii)] $M_{IV}$ is an IV $t$-conorm whenever $M_1$ is a strictly increasing $t$-conorm and $M_2$ is $t$-norm.
	\end{enumerate}
\end{corollary}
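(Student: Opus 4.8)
The plan is to check, directly from the definition, that $M_{IV}$ satisfies the four defining axioms of an interval-valued $t$-norm (respectively $t$-conorm): commutativity, associativity, monotonicity, and the appropriate neutral element. Since $M_{IV}$ is built through Equation~\eqref{eq:conmiv} from $M_1$ and $M_2$, and since Proposition~\ref{prop:propmiv} already transfers each of the relevant structural properties from $(M_1,M_2)$ to $M_{IV}$, the whole argument reduces to reading off the correct items of that proposition once we have matched each axiom to its hypothesis. Well-definedness of $M_{IV}$ as an interval-valued function is guaranteed by the construction, since in both cases $M_1$ is strictly increasing.

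For part (i), recall that a strictly increasing $t$-norm $M_1$ is commutative, associative, non-decreasing, strictly increasing, and has $1$ as neutral element, so that $M_1(1,y)=y$ for all $y$. First I would obtain commutativity of $M_{IV}$ from Proposition~\ref{prop:propmiv}(vi), whose hypothesis is that $M_1,M_2$ be symmetric: this holds since $M_1$ is a $t$-norm and $M_2$ is commutative by assumption. Associativity follows from item (vii), since both $M_1$ and $M_2$ are associative. Monotonicity follows from item (viii): $M_1$ is non-decreasing and strictly increasing (in particular on the set where its value avoids $\{0,1\}$), and $M_2$ is non-decreasing. Finally, the neutral element $\mathbf{1}$ is delivered by item (v): from $M_1(1,y)=y$ we get $M_{IV}(\mathbf{1},Y)=Y$, and commutativity then gives $M_{IV}(Y,\mathbf{1})=Y$. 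These are exactly the axioms of an IV $t$-norm.

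For part (ii), a strictly increasing $t$-conorm $M_1$ is commutative, associative, non-decreasing, strictly increasing, and has $0$ as neutral element, so $M_1(0,y)=y$; while the $t$-norm $M_2$ is commutative, associative, non-decreasing, with $1$ as neutral element, so $M_2(1,y)=y$. Commutativity, associativity and monotonicity of $M_{IV}$ follow verbatim as in part (i) from items (vi), (vii) and (viii). The only genuinely different step concerns the neutral element, which for a $t$-conorm is $\mathbf{0}$: here I would invoke item (iv), whose two hypotheses are precisely $M_1(0,y)=y$ and $M_2(1,y)=y$. I do not expect any real obstacle, as all the work has been done in Proposition~\ref{prop:propmiv}; the single point requiring care is aligning each axiom with the right item and, in particular, noticing why $M_2$ must be a full $t$-norm in part (ii) but need only be commutative, associative and non-decreasing in part (i). The reason is that the $K_\alpha$-coordinate of the neutral element is always carried through $M_1$ alone (item (v)), whereas item (iv) additionally constrains the $\lambda_\alpha$-coordinate through $M_2$, forcing the extra condition $M_2(1,y)=y$.
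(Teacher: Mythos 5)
Your matching of the axioms to the items of Proposition~\ref{prop:propmiv} is exactly the derivation the paper intends: the corollary is stated without proof and is meant to be read off precisely as you do it --- commutativity from (vi), associativity from (vii), monotonicity from (viii), and the neutral element from item (v) in part (i) and from item (iv) in part (ii).

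However, your closing explanation of why part (i) can demand less of $M_2$ than part (ii) is incorrect, and the error hides a genuine subtlety. You assert that in item (v) the neutral-element property is ``carried through $M_1$ alone,'' i.e.\ that the $\lambda_\alpha$-coordinate does not intervene. But $M_{IV}(\mathbf{1},Y)=Y$ requires both coordinates in Equation~\eqref{eq:conmiv} to reproduce those of $Y$, and the second one is $M_2\big(\lambda_\alpha(\mathbf{1}),\lambda_\alpha(Y)\big)$. Since $w(\mathbf{1})=0$ and $d_\alpha\big(K_\alpha(\mathbf{1})\big)=d_\alpha(1)=0$ for $\alpha<1$, the convention $\tfrac{0}{0}=1$ gives $\lambda_\alpha(\mathbf{1})=1$, so this coordinate equals $M_2\big(1,\lambda_\alpha(Y)\big)$ --- exactly the same constraint on $M_2$ that appears in item (iv). Whenever $K_\alpha(Y)\notin\{0,1\}$ the width of the output is not forced to zero, so equality with $Y$ genuinely needs $M_2(1,y)=y$. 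Concretely, $M_1(x,y)=xy$ and $M_2(x,y)=\tfrac{xy}{2}$ satisfy every hypothesis of part (i) ($M_2$ is commutative, associative and non-decreasing), yet $M_{IV}(\mathbf{1},Y)$ has half the relative width of $Y$, so $\mathbf{1}$ is not a neutral element and $M_{IV}$ is not an IV $t$-norm. Thus part (i) --- and item (v) of the proposition on which it rests --- tacitly requires $M_2(1,y)=y$ as well, a condition that part (ii) does get for free because there $M_2$ is a $t$-norm. Citing item (v) at face value, as your proof does, is a legitimate way to prove a corollary; but the asymmetry between (i) and (ii) that you try to explain is in fact a gap in the statements themselves, not a feature of the construction, and your rationalization of it rests on the miscalculation $\lambda_\alpha(\mathbf{1})\neq 1$.
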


\begin{example}\label{ex:conivsug}
	We give examples of IV Sugeno-like $FG$-functionals $\mathbf{S}_{m}^{F,G}$ obtained by the construction based on Proposition~\ref{prop:welldm}, Equation~\eqref{eq:conmiv} and Proposition~\ref{prop:propmiv}. The measure $m$ need not be symmetric. The functional $\mathbf{S}_{m}^{F,G}$ is well-defined if we take  $G=f\circ\vee$ where, for instance, $f(X)=X$ or $f(X)=X^2=\left[\underline{X}^2,\overline{X}^2\right]$ or $f(X)=\sqrt{X}=\left[\sqrt{\underline{X}},\sqrt{\overline{X}}\right]$; and $F$ induced by $M_1,M_2\colon[0,1]^2\to[0,1]$ where, for instance:
	\begin{enumerate}
		\item[(i)] $M_1(x,y)=xy$ and $M_2=M_1$.
		\vspace{0.1cm}    
		\item[(ii)] $M_1(x,y)=xy$ and $M_2(x,y)=(1-a_2)x+a_2y$ for some $a_2\in[0,1]$.
		\vspace{0.1cm}    
		\item[(iii)] $M_1(x,y)=\begin{cases}
		0, & \textrm{if }  x=y=0\\
		\frac{xy}{x+y-xy}, & \textrm{otherwise}
		\end{cases}
		$
		\hspace{0.5cm} and $M_2=M_1$.
		\vspace{0.1cm}    
		\item[(iv)] $M_1(x,y)=\begin{cases}
		0, & \textrm{if }  x=y=0\\
		\frac{xy}{x+y-xy}, & \textrm{otherwise}
		\end{cases}
		$
		\hspace{0.5cm} and $M_2(x,y)=(1-a_2)x+a_2y$ for some $a_2\in[0,1]$.
		\vspace{0.1cm}    
		\item[(v)] $M_1(x,y)=(1-a_1)x+a_1y$ and $M_2(x,y)=(1-a_2)x+a_2y$ for some $a_1\in]0,1[$ and $a_2\in[0,1]$. 
	\end{enumerate}
\end{example}
	Note that:
	\begin{itemize}
		\item In all the relevant cases, (ii), (iv) and (v), by the parameters $a_1,a_2$ we can regulate the relative weight we put on inputs $X_{\sigma(i)}$ and fuzzy measures $m(E_{\sigma(i)})$, in particular, taking $a_1=0.5$ we put the same weight to $X_{\sigma(i)}$ as to $m(E_{\sigma(i)})$; taking $a_1=0.25$ we put the greater weight to $X_{\sigma(i)}$ as to $m(E_{\sigma(i)})$ and vice versa for $a_1=0.75$.
		\item The measure $m$ need not to be symmetric in the above examples (i)-(v). For symmetric measures, there are no restrictions for $F$ and $G$, see Proposition~\ref{prop:welldm} (i), so we can apply any functions $F:L([0,1]) \times L([0,1]) \to L([0,1])$, $G:\big(L([0,1])\big)^{n} \to L([0,1])$. 
		\item The function $M_1$ in items (iii)-(iv) is the Hamacher product so it is possible to generalize the class of examples by taking $M_1(x,y)=\frac{xy}{\gamma+(1-\gamma)(x+y-xy)}$ for any $\gamma\geq 0$. In fact, we can apply as $M_1$ any strict $t$-norm.
		\item In Table~\ref{tab:propivsug} the summary of the properties satisfied by the Sugeno-like $FG$-functionals constructed in the above examples (i)-(v) is given.
	\end{itemize}

\begin{table}
	\scalebox{0.9}{
\centering
	\setlength\extrarowheight{8.5pt}
		\begin{tabular}{c||c|c}
		Property & Sugeno-like $FG$-functionals & Justification  \\
		\hline
		\hline
		Well-defined & (i), (ii), (iii), (iv), (v)  & Proposition~\ref{prop:welldm} (iii) \\
		\hline
		$\mathbf{S}_{m}^{F,G}(\mathbf{0},\ldots,\mathbf{0})=\mathbf{0}$ & (i), (ii), (iii), (iv)  & Proposition~\ref{prop:propmiv} (iii) \\
		\hline
		$\mathbf{S}_{m}^{F,G}(\mathbf{1},\ldots,\mathbf{1})=\mathbf{1}$ & (i), (ii), (iii), (iv), (v)  & Proposition~\ref{prop:propmiv} (ii) \\
		\hline
		Non-decreasingness & (i), (ii), (iii), (iv), (v)  & Proposition~\ref{prop:propmiv} (viii) \\
		\hline
		Aggregation function & (i), (ii), (iii), (iv)  & The above lines of this table \\
		\hline
		Idempotency & (i), (iii), only for $G=\vee$  & Proposition~\ref{prop:idempm} (i) \\
		\hline
		Internality & (i), (iii), only for $G=\vee$  & Proposition~\ref{prop:internalm} (iii) (b) \\
		\bottomrule
	\end{tabular}
	}
	\caption{In the second column, the list of the Sugeno-like $FG$-functionals constructed in Example~\ref{ex:conivsug} satisfying the property in the first column is given. In the third column, the justification is indicated.}
	\label{tab:propivsug}
\end{table}

\section{IV Sugeno-like $FG$-functional applied in a Brain Computer Interface} \label{sec:bci_signal}

In this section, we use the proposed IV Sugeno-like $FG$-functional in a BCI framework that uses interval-valued predictions. We detail this framework, how the interval-valued logits are constructed, and how different versions the proposed IV-Sugeno compares to other IV-aggregations, and to other BCI frameworks.

\subsection{Motor Imagery Brain Computer Interface Framework} 
\label{sec:tradBCI}
BCI systems usually consist of four different modules. The initial step is the EEG data acquisition and conditioning (signal amplification and different filters to remove noise, high impedance sensors, etc.). The second block of procedures is related to extracting features. It often includes filtering the EEG signal in one or several frequency bands, which can be subject-specific or fixed for all participants \cite{akin2002comparison,Nierhaus19}. This block might include some dimensionality reduction procedure such as Spatio-spectral decomposition (SSD) \cite{nikulin2011,Haufe2014}. Another common step is the application of an optimized spatial filtering method such as common spatial patterns (CSP, \cite{guger2000real,gramfort2013meg,blankertz2008optimizing,sannelli_csp_2011,5626227}). In case of focusing on power-based features, they are usually gaussianized by applying the natural logarithm. Then, the classification step of the features is carried out, usually based on linear classifiers (\cite{MueAndBir03,Vidaurrelda,izenman2013linear}). Often classifier ensembles are implemented, where the final decision is performed combining the outputs of all classifiers \cite{fumanal2020interval,fumanal2021motor}. The last module is related to the feedback to the user, which usually is in form of visual feedback, but can also be auditory or somatosensory \cite{VidPasRamLorBlaBirMue13,vidaurre2019a,VidaurreImproving21}.

\subsection{Feature extraction and classification}
\label{sec:feature}

Features from the EEG were extracted from four subject-independent and overlapped frequency bands which cover the range from low alpha to high beta in the following ranges: 6-10, 8-15, 14-28 and 24-35 Hz. The time interval where the features were extracted was optimized for each band and class-pair by analysing the event-related desynchronization/synchronization (ERD/ERS) time courses over the sensorimotor channels \cite{Sannelli2019,Vidaurre_2020}. For that, the envelope of the EEG signals was estimated with the magnitude of the analytic band pass filtered EEG and averaged over trials of each class separately. The final time interval was chosen adding those time samples whose the pair-wise discriminability using the envelopes as features were higher. The selected time intervals were used to crop the EEG signals and then SSD was computed to reduce the number of dimensions in a specific band \cite{nikulin2011,Haufe2014}. SSD decomposes multivariate data into sources of maximal signal-to-noise ratio (SNR) in a narrow-band. The selected sources with high SNR were then projected on to a few common spatial directions \cite{,Sannelli2019,KawSamMueVid14}. The power of the projected training signals was then computed for each trial and the natural logarithm applied. Finally, LDA classifiers were trained for subsequent classification, see Figure \ref{fig:bci_scheme}.

Regarding the testing set, the features were extracted as follows: the EEG was filtered in the four narrow bands of interest and then projected onto the corresponding SSD and CSP filters (depending on the class pair and band). The data were cropped in the time intervals selected with the training set. Finally, the power of each trial and projection was estimated and the natural logarithm applied. These features were then classified using the LDA previously trained.

\begin{figure}
	\includegraphics[width=\linewidth]{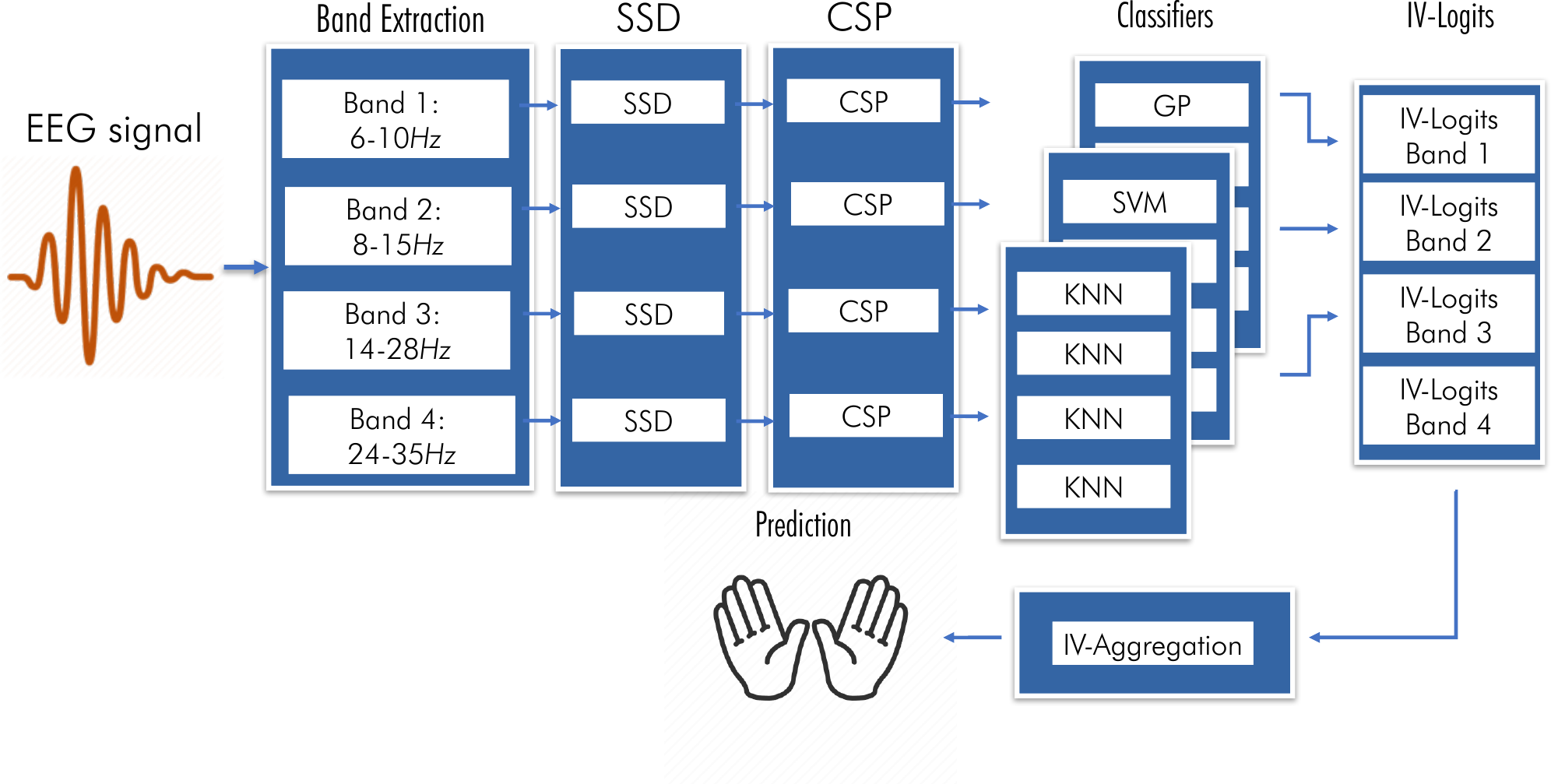}
	\caption{Graphical depiction of the BCI framework. After measuring the EEG, SSD is computed in four different bands to reduce dimensions and increase the signal-to-noise ratio. Then CSP is applyied}
	\label{fig:bci_scheme}
\end{figure}

\subsection{Interval-valued predictions using different classifiers}
In order to generate interval-valued outputs from our system, we have opted to use the predictions from different classifiers. We use the variability between them in order to estimate the uncertainty related to each of our predictions. The process is the following:

\begin{enumerate}
	\item We choose a set of $n$ different kinds of classifiers. For our experimentation, we have chosen three different types: Support Vector Machine (SVM), Gaussian Process (GP) and K-Nearest Neighbours (KNN).
	\item From the features obtained from each of the wave bands studied, we train a classifier of each kind. In our case, this means that we train a SVM, a GP and a KNN for each wave band.
	\item For each wave band, we have $n$ different predictions for each different class. We generate an interval for each wave band and class using the lowest and the maximum predictions respectively.
\end{enumerate}

After the final step, we have obtained for each class one interval-valued prediction per wave band. Then, we aggregate the interval-valued predictions for each sample using a suitable aggregation function and a $\preceq_{\alpha, \beta}$ order. Finally, we choose the maximum value according to the previously established order.

\subsection{Experimental results for a left/right hand BCI task}
We have performed our experimentation in the Clinical BCI Challenge WCCI 2020 dataset (CBCIC) \cite{chowdhury2021clinical}. This dataset consists of brain imaging signals from 10 hemiparetic stroke patients with hand functional disability in a rehabilitation task. The data contains 80 different trials of left/right hand movements. Decoding motor cortical signals of brain-injured presents several challenges as the  presence of irregular because of the altered neurodynamics. 

We have tried different versions of the newly proposed IV Sugeno-like $FG$-functionals, using the cardinality fuzzy measure in all of them, i.e., $m(A)=\frac{|A|}{n}$ for $A\subseteq N$:
\begin{itemize}
	\item IV-Sugeno$_1$: $G(X_1,\dots,X_n) = \frac{1}{n}\sum_{i=1}^{n}X_i$, $F(X, Y) = X^2  Y + X(1-Y) $;
	\item IV-Sugeno$_2$: $G(X_1,\dots,X_n) = \frac{1}{n}\sum_{i=1}^{n}X_i$, $F(X, Y) = X(1-Y) $;
	\item IV-Sugeno$_3$: $G(X_1,\dots,X_n) = \max(X_1,\dots,X_n)$, $F(X, Y) = \min(X, Y)$,  
\end{itemize}
for $X_1,\dots,X_n, X,Y\in\li$.

We have also studied other interval-valued aggregation functions: the interval-valued OWA operators and the interval-valued Moderate deviation functions. We have also compared with other BCI frameworks based on Deep Learning: the EEG net \cite{Lawhern2018}, and two other different Deep Learning architectures \cite{hbm23730}; a different version of CSP using different temporal scales to extract features \cite{8553378}; and Gradient Boosting \cite{9342132}. 

In order to evaluate the performance of each different proposal in this dataset, each participant's dataset was randomly sampled in ten different partitions (each with 50$\%$ train and 50$\%$ test trials), resulting in a total of $80$. The final performance of each configuration was obtained averaging each single dataset accuracy and F1-Score. 

The results were obtained using different aggregation functions in the decision making phase and compared the newly proposed MCAs. Both the adaptive and the non-adaptive mixing parameter were employed with a set of standard aggregations and also with the already existing penalty-based aggregation functions.

Table \ref{tab:iv_aggs} shows the results for each of the different interval-valued aggregations used in this BCI framework. IV-Sugeno$_1$ obtained the best result, followed by IV-Sugeno$_2$ and IV-OWA$_2$. Table \ref{tab:full_cmp} shows the comparison between the best interval-valued aggregation using our BCI framework and other BCI frameworks. The IV-Sugeno obtained the best results in this comparison, followed by the Multiscale CSP.

\begin{table}
	\centering
	\begin{tabular}{ccc}
		\toprule
		Aggregation function & Accuracy & F1-Score \\
		\midrule
		IV-Sugeno$_1$ & $\mathbf{0.8175  \pm 0.1342}$ &  $\mathbf{0.8149 \pm 0.1366}$ \\
		IV-Sugeno$_2$ & $0.8162  \pm 0.1338$ &  $0.8138 \pm 0.1359$ \\
		IV-Sugeno$_3$ & $0.7994  \pm 0.1324$ &  $0.8014 \pm  0.1313$ \\
		\midrule
		IV-OWA$_1$ & $0.8141  \pm 0.1327$ & $0.8127 \pm 0.1340$\\
		IV-OWA$_2$ & $0.8162  \pm 0.1338$ & $0.8138 \pm 0.1359$\\
		IV-OWA$_3$ & $0.8100  \pm 0.1327$ & $0.8073 \pm 0.1352$\\
		\midrule
		IV-MD & $0.8097 \pm 0.1318$ & $0.8085 \pm 0.1315$ \\
		\bottomrule
		
	\end{tabular}
\caption{Results for the CBCIC dataset using different interval-valued aggregations.}
\label{tab:iv_aggs}

\end{table}

\begin{table}
	\centering
	\begin{tabular}{ccc}
		\toprule
		Framework & Accuracy & F1-Score \\
		\midrule
		IV-Sugeno$_1$ & $\mathbf{0.8175  \pm 0.1342}$ &  $\mathbf{0.8149 \pm 0.1366}$ \\
		EEG Net \cite{Lawhern2018}& $0.6562 \pm 0.1232 $ & $0.5933 \pm 0.1712$\\
		Shallow Net \cite{hbm23730} & $0.7453 \pm 0.13289$ & $0.7342 \pm 0.1489$\\
		Deep Net \cite{hbm23730} & $0.5331 \pm 0.1356$& $0.4218 \pm 0.1282$\\
		Multiscale CSP \cite{8553378} & $0.7956 \pm 0.1144 $& $0.7911 \pm 0.1175$\\
		Gradient Boosting \cite{9342132} & $0.5956 \pm 0.1203 $ & $0.5354 \pm 0.1169$\\
		\bottomrule
		
	\end{tabular}
	\caption{Results for the CBCIC dataset using different BCI frameworks.}
	\label{tab:full_cmp}	
\end{table}

\section{IV Sugeno-like $FG$-functional applied in Social Network Analysis} \label{sec:social_network}

In this section we propose an interval-valued version of the affinity functions proposed in \cite{idocin2020borgia} to characterize the relationship between two actors in a network. 

We start by recalling the notions of centrality measure and affinity function in social network analysis. Then we show how we construct the iv-affinity functions and propose new centrality measures based on them. These metrics characterize each actor based on the difference in commitment that has in its relationships.

\subsection{Centrality measures in social network analysis}
In social network analysis, centrality measures are metrics to ponder how relevant each node is in a structure \note{\cite{landherr2010critical, newman2018networks}}. Some very well known centrality measures are:

\begin{itemize}
	\item Degree centrality: the number of edges incident upon an actor. In the case of directed networks, the degree is the sum of the number of edges incident to the actor (in-degree) and the number of edges salient to the actor (out-degree).
	\item Betweenness centrality: the betweenness of an actor is the number of times that node is in the shortest path of other two nodes. It measures the importance of the actor in the network's information flow.
	\item Closeness centrality: the closeness centrality of an actor is the average length of the shortest path between that node and the rest of the nodes in the network. It establishes a center-periphery difference.
	\item Eigenvector centrality: it assigns a relative score to each actor in the network based on the idea that connections to well connected actors should ponder more than connections to poorly connected ones.
\end{itemize}

\subsection{Affinity functions in social network analysis}
Affinity functions were proposed in \cite{idocin2020borgia} as a way to measure the relationship between a pair of actors in a social network using their local information. We denote by $V$ the set of all actors. ``Affinities" are defined as functions over the set $V^2$ of all pairs of actors in a given social network assigning a number $F_C(x,y) \in [0, 1]$  to every pair of actors $(x,y)\in V^2$. 

Usually, $C$ is the adjacency matrix of the network. Each of the entries $C(x, y)$ in $C$ quantifies the strength of the relationship for the pair of actors $x, y$ in a network $V$. The affinity between two actors shows how strongly they are connected according to different criteria, depending on which aspect of the relationship we are taking into account.
A $0$ affinity value means that no affinity has been found at all while an $1$ value means that there is a perfect match according to the analyzed factors.

In the following, we recall  the definition of two affinity functions (additional affinity functions can be found in \cite{idocin2020borgia} and \cite{fumanal2021concept}):

\begin{itemize}
	\item Best friend affinity: it measures 
	the importance of a relationship with an agent $y$ for the agent $x$, in relation to all the other relationships of $x$: 
	\begin{equation} \label{eq-bf}
	F^{BF}_C (x,y)= \frac{C(x,y)}{\sum_{a \in V} C(x,a)}. 
	\end{equation}  
	
	\item Best common friend affinity: it measures the importance of the relationship taking into account how important are the common connections between the connected nodes to $x$ and $y$, in relation to all other relationships of $x$ in the network:
	\begin{equation} \label{eq-bcf}
	F^{BCF}_C (x,y)= \frac{\max_{a \in V }\{\min \{ C(x,a), C(y,a)\} \} }{\sum_{a \in V} C(x,a)}. 
	\end{equation}
	
\end{itemize}

For example, in the case of Best common friend affinity, $0$ value means that there is no common connections between the two actors and a high value means that their shared friends are important to both of them. Since affinities are not necessarily symmetrical, the strength of this interaction depends on who the sender and receiver are, as it happens in human interactions e.g. unrequited love. So, it is possible that actor $x$ has an affinity of value $1$ with $y$, while $y$ has a much lower value with $x$.

\subsection{Interval-valued affinity functions}

We define interval-valued affinity functions (IV-affinity functions) as functions that characterize the relationship between two actors, $x, y$ with an interval in the $[0,1]$ range, where the width of that interval represents the difference in commitment between two parties. We construct an interval-valued affinity function using a previously computed numerical affinity function. Then, the interval is constructed as:
\begin{equation}
F_{C,IV}(x, y) = \left[\min \{F_C(x, y), F_C(y, x)\},  \max \{F_C(x, y), F_C(y, x)\}\right]
\end{equation}
Because $F_C(x, y) \neq F_C(y, x)$ in most relationships, this means that in most of them the actors give different levels of commitment than their counterparts. In real life, these kind of situations are usually solved by finding a compromise between both parties. The IV-affinity function models this idea, representing the actual relationship that it is formed with an interval that ranges from both levels of commitment. The interval models the fact that we know that the final compromise achieved by both actors should be between both commitment levels, but we do not know exactly the final compromise.

One of the main difference between IV-affinity functions and their numerical counterpart is that they are symmetric. This can be convenient, as it allows to represent the relationship between two parties with one interval instead of two numerical values. \note{This also opens the possibility of using some of the existing methods that require symmetric matrices in social network analysis \cite{newman2018networks, shi2015community}, while} retaining the desirable properties of affinity functions, i.e. zeros-sum game philosophy or local-only interactions taken into account \note{\cite{reddy2009role}}.

\subsection{Using interval-valued affinity functions to construct centrality measures}
In this section we present four different centrality measures using IV-affinity functions. \note{Usually, centrality measures ponder the importance of each actor in the network. However, our proposed} centrality measures characterize the tendency of each actor to form relationships that have very different levels of commitment, and if the actor tends to show more or less commitment than the other party in each of its relationships.

The proposed centrality measures are:
\begin{enumerate}
	\item \textbf{Asymmetry} is the tendency of the actor to form relationships with different levels of commitment.
	\item \textbf{Altruism} is the tendency of the actor to form relationships in which its level of commitment is bigger than the other party. 
	\item \textbf{Egoism} is the tendency of the actor to form relationships in which its level of commitment is lesser than the other party. 
	\item \textbf{Generosity} is the difference between altruism and egoism. A positive generosity means that overall, the actor tends to give more commitment in its relationships than the other part. A negative generosity means that the actor tends to give less commitment than the other part in a relationship.
\end{enumerate}

We have considered two possibilities to compute these metrics: using the width of each IV-affinity function and aggregate those values, or aggregating the IV-affinity values using an IV-aggregation and then use the width of the aggregated interval. We have opted for the latter, because in this way we are taking into account that some intervals have the same width, but are different. 

We are specially interested in this property in the case where two actors have the same average width for their respective IV-affinity values. In these occasions, we consider that the one that presents more variety in their IV-affinities is more asymmetric than the other. For example, if an actor $x$ have $3$ different IV-affinities, and all of the are $(0,0.3)$, and we have other actor $y$ that has as IV-affinities $(0,0.3)$, $(0.4, 0.7)$ and $(0.7, 1.0)$. We should assign the highest asymmetry value to $y$, because this actor presents this level of asymmetry in its relationships in all the $[0,1]$ range.

Taking these considerations into account, we compute these centrality measures using the following expressions, with an IV-aggregation function $I$. 

\begin{enumerate}
	\item \textbf{Asymmetry}: 
		\begin{equation}
				A(x) = w\left(I(F_{C,IV}(x, y_1),\dots,F_{C,IV}(x, y_m) )\right), 
		\end{equation}
	where $\{y_1,\dots,y_m\} = V\setminus\{x\}$.
	\item \textbf{Altruism}: 
		\begin{equation}
				L(x) = w\left(I(F_{C,IV}(x, y_1),\dots,F_{C,IV}(x, y_k) )\right),  
		\end{equation}
	where   $\{y_1,\dots,y_k\} \subseteq V\setminus\{x\}$ is the set of all actors $y_j$ fulfilling $F_C(y_j, x) \leq F_C(x, y_j)$ for each $j=1,\dots,k$. 
	\item \textbf{Egoism}: 
		\begin{equation}
						E(x) = w\left(I(F_{C,IV}(x, y_1),\dots,F_{C,IV}(x, y_l) )\right),  
		\end{equation}
	where   $\{y_1,\dots,y_l\} \subseteq V\setminus\{x\}$ is the set of all actors $y_j$ fulfilling $F_C(x, y_j) \leq F_C(y_j,x)$ for each $j=1,\dots,l$. 
		
	\item \textbf{Generosity}: 
		\begin{equation}
		S(x) = L(x) - E(x)
		\end{equation}	
\end{enumerate}

For our experiment, we have used as $I$ the proposed IV Sugeno-like $FG$-functional, 
with $G(X_1,\dots, X_n)=\min\{{\bf 1}, \sum_{i=1}^{n} X_i\}$, $F(X, y) = X \cdot (1- y)$ and the cardinality measure $m(A)=\frac{|A|}n$ as the fuzzy measure. Note that in the three respective cases of centrality measures defined above, the arity $n$ of applied IV Sugeno-like $FG$-functional $I$ is $m,k,l$, respectively.

\subsection{Experimental results for a social network}

		%
In this section we have studied the proposed centrality measures in a word association network, constructed using \emph{The Younger Edda} book. \emph{The Younger Edda} is Old Norse textbook of mythical texts, written approximately in 1220 by Snorri Sturluson. This book contains the tales of popular characters in the Nordic folklore, like Odin, Thor or Loki.

In order to extract the word association network from this text, we have followed the standard procedure to tokenize and lemmatize the text \cite{webster1992tokenization, fox1992lexical} and we have used a pre-trained multilayer perceptron in the Python Natural Language ToolKit \cite{bird2009natural} to purge every word that is not a noun, since we only want to model interaction between entities and concepts. \note{Once we have extracted the nouns from the text, what we have is a series of stemmed tokens. To obtain a network, we need to determine its nodes and edges. In the case of the nodes, we make a bijective association, so that one noun corresponds to one node, and vice versa. There are different ways to compute the edges in terms of noun co-occurrence. We have decided to create an edge every time a word appears in a k-distance or less from another in the text, choosing k as 10.}
The text has been obtained from the Gutenberg Project \cite{stroube2003literary}.

We have computed the proposed centrality measures in this network: asymmetry, egoism, altruism and generosity. Figure \ref{fig:edda_measures} shows the results in this network, coloring each node according the value of each metric. 
We found that the actors with highest degree tend to be less altruistic, like ``Odin'', ``Thor'', ``Balder'' or ``Loki''. However, low values in altruism do not necessarily imply high egoism values, as these same actors only showed moderate egoism values. The highest egoism values were located in actors like ``gold'' or ``gods'', that connected low degree actors with the ones with highest value. These actors also presented a moderate altruism value, that comes from their relationships with higher degree values. This means that the most egoist actors are willing to have a small number of connections where they are the ``losing'' part, because it allows them to have more relationships where they are the ``favored'' part.

Regarding generosity, most nodes have a negative value. The generous actors are low degree actors connected to higher degree actors, which in this case are concepts not very frequent in the original text. The less generous actors have small-medium degree values, generally connected to a small set of higher degree, high egoism actors, and a another set of very generous ones.

Table \ref{tab:measures} shows the top values for each centrality measure. The highest asymmetry values correspond to actors like ``Hammer'', mostly connected with ``Thor'', a high degree actor; ``Journey'', connected to individual characters; and other general concepts. Altruism presents a similar top 10 of values, but with the presence of two individuals: ``Sigurd'', who is connected to one of the most egoist actors, ``gold''; \note{``Hermod'', connected almost exclusively to characters with negative generosity like ``Balder'', ``Odin'' and ``\AE sir''; and ``Hymir'', who is connected to a similar set of actors to ``Hermod''}. The most egoist actors are general concepts like `Name'', ``Man'', ``Land'' or ``Gold''. The only exception is ``Atle'' because he has only one relationship where he is the least committed part.

\begin{figure}
	\begin{tabular}{cc}
		\includegraphics[width=.5\linewidth]{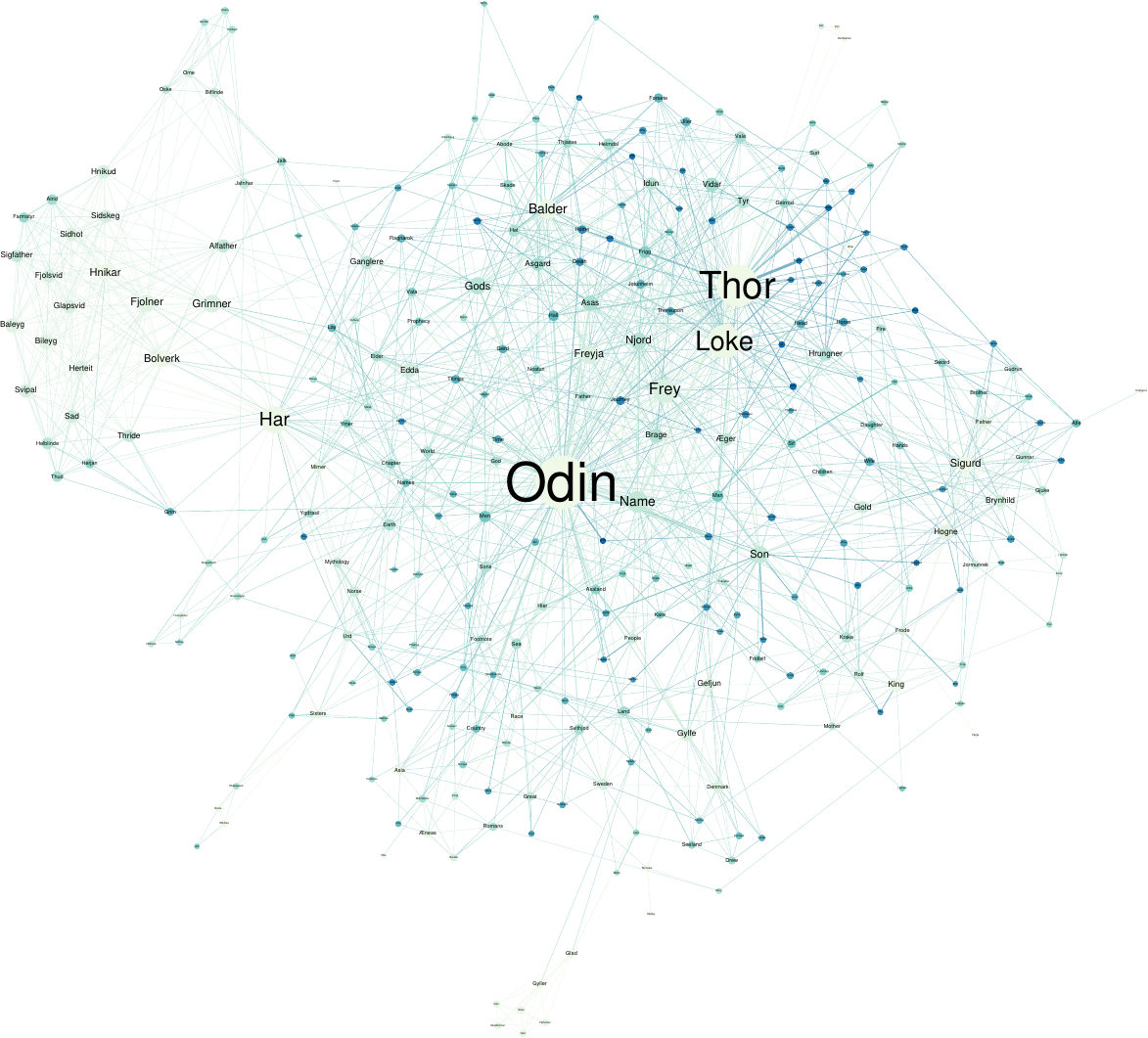} & \includegraphics[width=.5\linewidth]{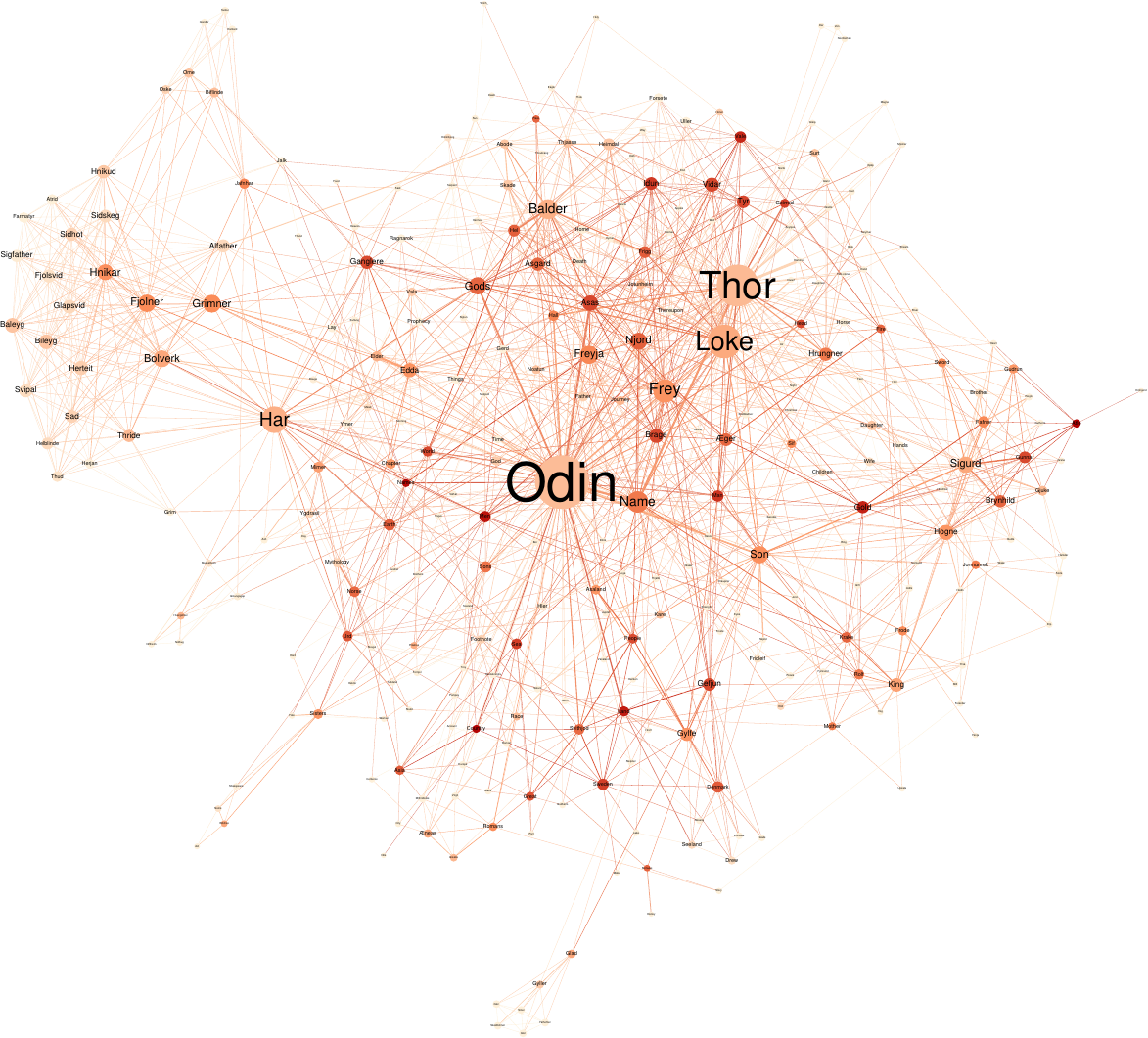} \\
		(a) Altruism & (b) Egoism\\
		\includegraphics[width=.5\linewidth]{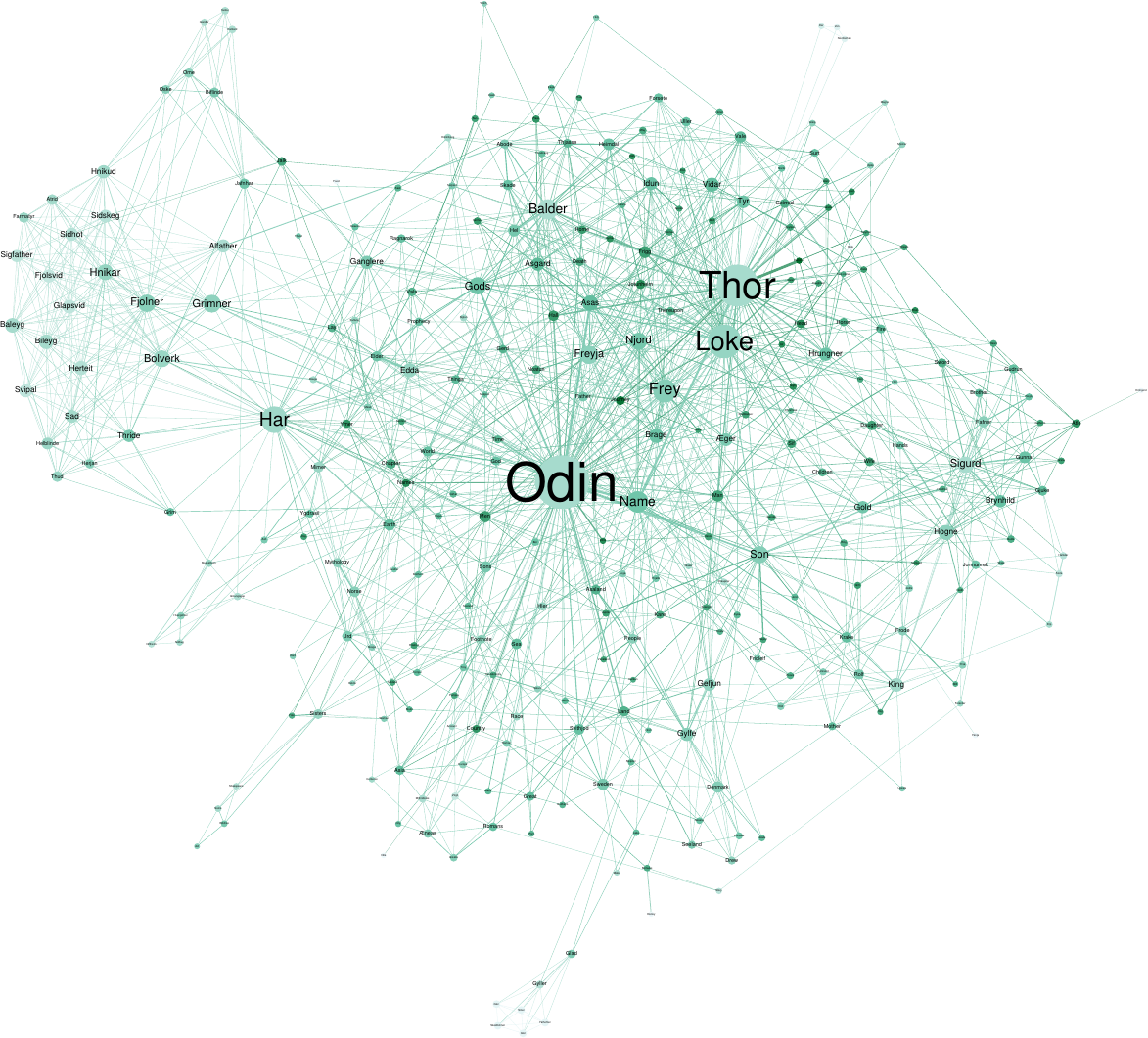} & \includegraphics[width=.5\linewidth]{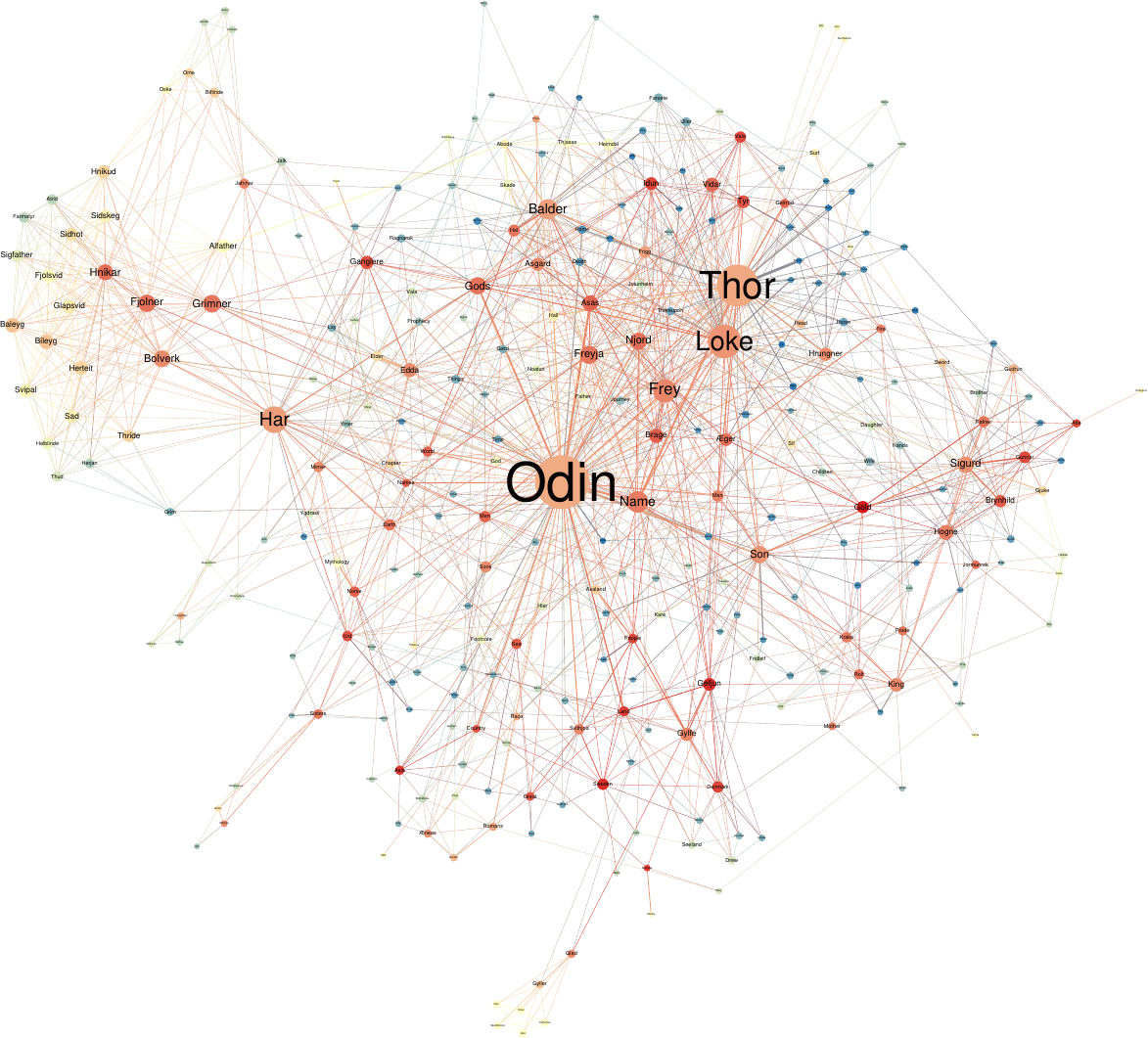} \\
		(c) Asymmetry & (d) Generosity \\
	\end{tabular}
\caption{Altruism (a), Egoism (b), Asymmetry (c) and Generosity (d) for the \emph{Younger Edda} network, marked with different colors. Node size is proportional to the node degree. We can see that most actors are more egoist than altruist. In fact, some of the most important actors in this network show no altruism at all. However, no altruism does not necessarily imply a high egoism. For example, ``Odin'' and ``Thor'' show no altruism (so all their in-affinities are higher or equal than their out-affinities), but their asymmetry value is also low.}
\label{fig:edda_measures}
\end{figure}

\begin{table}
	\centering
	\begin{tabular}{cc}
		\toprule
		\multicolumn{2}{c}{Assymetry} \\
		Actor & Value \\
		\midrule
		Hammer & 0.7808 \\
		Journey& 0.7512 \\
		Drink& 0.6451 \\
		Names& 0.6402 \\
		Air& 0.6391 \\
		Night& 0.6312 \\
		River& 0.62891 \\
		Jotunheim& 0.6257 \\
		Head	& 0.6217 \\
		Giants & 0.6188 \\
		\bottomrule
	\end{tabular}
	\begin{tabular}{cc}
		\toprule
		\multicolumn{2}{c}{Altruism} \\
		Actor & Value \\
		\midrule
		Drink & 0.6451 \\
		Air& 0.6391 \\
		Night & 0.6312  \\
		River& 0.6289\\
		Hammer& 0.6255\\
		
		Giants& 0.6154\\
		Oath& 0.6154\\
		\note{Hymir}& 0.5883\\
		\note{Sigurd}& 0.5865\\		
		Hermod& 0.5818\\
		\bottomrule
	\end{tabular}\\
	\begin{tabular}{cc}
		\toprule
		\multicolumn{2}{c}{Egoism} \\
		Actor & Value \\
		\midrule
		Atle & 0.7875 \\
		Country & 0.7777\\
		Names & 0.7678\\
		Men &  0.73320\\
		Land & 0.7103\\
		
		Gold & 0.6920 \\
		Vale & 0.6748 \\
		Man & 0.6443 \\
		Idun & 0.6261 \\
		Geirrod &	 0.6124 \\	
		\bottomrule
	\end{tabular}
	\begin{tabular}{cc}
		\toprule
		\multicolumn{2}{c}{Generosity*} \\
		Actor & Value \\		
		\midrule
		Home & 0.5226 \\
		Wife & 0.3484 \\
		Journey & 0.3277 \\
		Children & 0.2550 \\
		Brother & 0.2210 \\
		\midrule
		People & -0.4730 \\
		Idun & -0.4830 \\
		Land &  -0.4985\\
		Gefjun & -0.5414 \\
		Gold & -0.6000 \\
		\bottomrule
	\end{tabular}
\caption{Top 10 values for the different centrality measures proposed: Asymmetry, Altruism, Egoism, Generosity. *We showed the generosity on the top 5 actors that showed $>0$ altruism, because when altruism $= 0 \to $ generosity $=$ egoism, and the analogous thing for the least 5 ones with respect to egoism.}
\label{tab:measures}
\end{table}

\section{Conclusions and Future Lines} \label{sec:conclusions}
In this work we have proposed a new version of the generalized Sugeno integral to aggregate interval-valued data (IV-Sugeno). This function is designed to aggregate intervals taking into account the coalitions between the input data, just as the numerical Sugeno integral. We have also proposed two applications in which we use interval-valued data: a brain computer interface framework where the intervals measure the uncertainty related to the output of different classifiers; and social network analysis, where the intervals measure the difference in commitment in a relationship, \note{and how this can be used to construct a centrality measure to characterize each actor.}

\note{Our results show that the generalized IV-Sugeno aggregation performs better than other IV-aggregations for a brain computer interface classification network. We have also found that the functions used to construct the generalized IV-Sugeno are critical in its performance and its mathematical properties. Finally, we have also shown how the generalized IV-Sugeno can be used to successfully characterize each actor in a network depending on the asymmetry in its relationships.}

\note{Future research shall study the use of the proposed IV-Sugeno in other settings, like image processing; and the study of other fuzzy integrals in the interval-valued setting. We also intend to study the possibilities of using IV-affinity functions with classical centrality measures in social network analysis and the possible correlations of the newly centrality measures proposed with classical ones.}

\section{Acknowledgements}

This work was supported by the Spanish Ministry of Economy and Competitiveness through the Spanish National Research (project PID2019-108392GB-I00/ financed by MCIN/AEI/10.13039/501100011033 and the grant  VEGA 1/0267/21.

\bibliographystyle{elsarticle-num}
\bibliography{iv_sugeno}

\end{document}